\DeclareRobustCommand{\rchi}{{\mathpalette\irchi\relax}}
\newcommand{\irchi}[2]{\raisebox{\depth}{$#1\chi$}} 
\renewcommand{\d}{\mathrm{d}}
\renewcommand{\Pr}[1]{\mathrm{Pr}\left[#1\right]}
\newcommand{\abs}[1]{\left|#1\right|} 
\newcommand{\norm}[1]{\left\|#1\right\|}
\newcommand{\Expected}[1]{\mathbb{E}\left[#1\right]}
\newcommand{\Prsub}[2]{\text{Pr}_{#1}\left[#2\right]}
\newcommand{\Event}[1]{\mathbf{E}_{#1}}
\newcommand{\F}[1]{\mathbf{#1}}
\newcommand{\vol}{\ensuremath{\nu}}
\newcommand*{\erdosGraph}{Erdős--Rényi graph\xspace}
\newcommand*{\erdosGraphs}{Erdős--Rényi graphs\xspace}
\newcommand{\Gtest}{\tilde{G}}
\newcommand{\teststatistic}[1]{\textsc{CC}^{(+)}(#1)}
\title[Determining the dimensionality of networks]{Real-World Networks are Low-Dimensional: Theoretical and Practical Assessment}
\begin{document}

\maketitle

\begin{abstract}%
 
Detecting the dimensionality of graphs is a central topic in machine learning.  While the problem has been tackled empirically as well as theoretically, existing methods have several drawbacks.  On the one hand, empirical tools are computationally heavy and lack theoretical foundation.  On the other hand, theoretical approaches do not apply  to graphs with heterogeneous degree distributions, which is often the case for complex real-world networks.

To address these drawbacks, we consider geometric inhomogeneous random graphs (GIRGs) as a random graph model, which captures a variety of properties observed in practice.  These include a heterogeneous degree distribution and non-vanishing clustering coefficient, which is the probability that two random neighbours of a vertex are adjacent.  In GIRGs, $n$ vertices are distributed on a $d$-dimensional torus and weights are assigned to the vertices according to a power-law distribution.  Two vertices are then connected with a probability that depends on their distance and their weights.

Our first result shows that the clustering coefficient of GIRGs scales inverse exponentially with respect to the number of dimensions, when the latter is at most logarithmic in $n$. This gives a first theoretical explanation for the low dimensionality of real-world networks observed by \citet{almagro2022detecting}. A   key element of our proof is to show that when $d=o(\log n)$ the clustering coefficient concentrates around its expectation and that it is dominated by the clustering coefficient of the low-degree vertices.

We further use these insights to derive a linear-time algorithm for determining the dimensionality of a given GIRG. We prove that our algorithm returns the correct number of dimensions with high probability when the input is a GIRG.  As a result, our algorithm bridges the gap between theory and practice, as it not only comes with a rigorous proof of correctness but also yields results comparable to that of prior empirical approaches, as indicated by our experiments on real-world instances.
\end{abstract}

\begin{keywords}%
  dimensionality testing, geometric inhomogeneous random graphs, clustering coefficient
\end{keywords}

\section{Introduction}
A key technique for understanding and analysing large complex data sets is to embed them into a low-dimensional geometric space. Hence, the search for embedding and dimensionality reduction algorithms has become an important direction in data analysis and machine learning research~\citep{belkin2001laplacian,sarveniazi2014actual,camastra2016intrinsic,Nickel_Kiela_2017}. Embedding algorithms commonly require a metric that captures the similarities between data points, which is often abstracted using a graph whose vertices represent the data points and two vertices are connected if they are close with respect to this metric. The algorithm then determines geometric positions for these vertices such that connected vertices are close together. Such approaches often require an a priori knowledge of the dimensionality, which is unknown in most applications. Heuristic approaches try to determine the dimensionality of a dataset by embedding it in spaces of different dimensionality and choosing the value that yields the optimal embedding~\citep{levina2004maximum,yin2018dimensionality,gu2021principled}.

The recent work of~\citet{almagro2022detecting} gives a new algorithm for learning the dimensionality that does not require embeddings. Instead, given a graph as input, their method counts the number of short -- i.e. length 3, 4 and 5 -- cycles of a graph. It then generates a search space consisting of random graphs that are generated from a geometric model of varying parameters, including the dimensionality of the space. Finally, a data-driven classifier finds the random graph of the search space that resembles the input graph the most and returns its dimensionality. A remarkable observation, that comes from using their algorithm to learn the dimensionality of real-world networks, is that the vast majority of networks has very low dimensionality, which is independent the size of the network.

A downside of the aforementioned approaches is that they rely on machine learning techniques that are computationally heavy and lack theoretical explanation. In order to argue with mathematical rigour, one requires to work with well-defined mathematical objects. A common approach to incorporate such an object is that of average-case analysis, that is, assume that the input graph comes from a well-defined random graph model. The random graph model that has been mostly considered so far in the literature is that of spherical random graphs, where vertices are generated independently and uniformly at random as points on the surface of a $d$-dimensional sphere and two vertices are connected if their angle is bellow a certain threshold. It can be easily shown that, as the number of dimensions increases, spherical random graphs converge to \erdosGraphs, the classical random graph model where edges are drawn independently. A series of works considers the statistical testing problem of detecting weather a given graph is a spherical random graph or an \erdosGraph and determines the parameter regime under which this can be done \citep{Devroye_Gyoergy_Lugosi_Udina_2011,Bubeck_Ding_Eldan_Racz_2015,brennan2020phase,liu2021probabilistic,Liu2022TestingThresholds}. Follow up works consider noisy settings \citep{liu2021phase} or anisotropic geometric random graphs~\citep{eldan2020information,brennan2022threshold}, where each dimension has a different influence on the drawing of edges. The techniques of the aforementioned results can also be used for determining the dimensionality of the given graph \citep[Theorem 5]{Bubeck_Ding_Eldan_Racz_2015}.

A characteristic of the random graph models considered in the aforementioned works, i.e. spherical random graphs and \erdosGraphs, is that the degree distributions of the generated graphs is concentrated around its expected value; this contrasts the power-law degree distributions observed in real-world networks~\citep{faloutsos1999power}. While a latent geometric space appears to be a fundamental requirement for a random graph model that captures the high clustering coefficient~\citep{krioukov2016clustering,boguna2021network} and small diameter \citep{friedrich2013diameter} observed in real-world networks, one needs to also consider the heterogeneity observed in the vertex degrees. A popular model in network theory capturing all previous properties is based on generating points on the hyperbolic plane instead of Euclidean \citep{Boguna_Papadopoulos_Krioukov_2010}.
However, it is not clear what the non-geometric counterpart to test against is in this case.

In this article we bring theory and practice closer together and provide a rigorous explanation for the very low dimensionality of real-world networks that has been observed in practice. Our proofs give new insights with which we are able to design linear-time algorithms for learning the dimensionality of a network and show that they give the correct answer with high probability. To achieve our goal we consider the following random graph models.

\paragraph{Geometric inhomogeneous random graphs \& Chung--Lu graphs.}

Geometric inhomogeneous random graphs (GIRGs), introduced by \citet{Bringmann_Keusch_Lengler_2017} and are defined as follows.

Let $G(n, d, \beta, w_0) = (V, E)$ denote the $n$-vertex graph obtained in the following way. For each $v \in V$, we sample a weight $w_v$ from the Pareto distribution $\mathcal{P}$ with parameters $w_0,1 - \beta$ such that the CDF and density is \begin{align*}
    \Pr{w_v \le x} = 1 - \left(x/w_0\right)^{1-\beta} \text{ and } \rho_{w_v}(x) = \frac{\beta - 1}{w_0^{1-\beta}}x^{-\beta},
\end{align*} respectively. We denote the sequence of the drawn weights by $\{w\}_1^n$ and assume that $\beta > 2$ such that a single weight has finite expectation (and thus the average degree in the graph is constant), but possibly infinite variance. Moreover, each vertex $v$ is assigned a position $\F{x}_v$ in the $d$-dimensional torus $\mathbb{T}^d$ uniformly at random according to the standard Lebesgue measure. We denote the $i$-th component of $\F{x}_v$ by $\F{x}_{v}(i)$. Two vertices $u,v$ are adjacent if and only if their distance $d(\F{x}_u, \F{x}_v)$ is at most the \emph{connection threshold} $t_{uv}$, which is defined such that the marginal connection probability of $u,v$ is \begin{align}\label{eq:conprob}
    \Pr{u \sim v} \coloneqq \min\left\{ 1, \frac{\lambda w_uw_v}{n} \right\} = \frac{\kappa_{uv}}{n}\text{, where }\kappa_{uv} \coloneqq \min\left\{ n, \lambda w_u w_v \right\}
\end{align} and where $\lambda \in \mathbb{R}$ is a parameter that controls the average degree.
We measure of the distance between two points using the $L_p$-norm with $1 \le p \le \infty$.
That is, we define \begin{align*}
    \|\F{x}_u - \F{x}_v\|_p \coloneqq  \begin{cases}
        \left( \sum_{i=1}^d |\F{x}_{u}(i) - \F{x}_{v}(i)|_C^p \right)^{1/p} & \text{if } p < \infty\\
        \max_i \{ |\F{x}_{u}(i) - \F{x}_{v}(i)|_C \} & \text{otherwise.}
    \end{cases}
\end{align*} where $|x - y|_C$ denotes the distance on the circle, i.e., $|x - y|_C = \min\{ |x-y|, 1 - |x-y| \}$.

Note that $L_\infty$ is a natural metric on the torus as $B_\infty(r)$, the ball of radius $r$ under this norm is a (hyper-)cube and ``fits'' entirely into $\mathbb{T}^d$ for all $0 \le r \le 1$. For this reason, the connection threshold under $L_\infty$-norm is always \begin{align*}
    t_{uv} = \frac{1}{2}\left( \frac{\lambda w_uw_v}{n} \right)^{1/d} = \left( \frac{w_uw_v}{\mu n} \right)^{1/d}
\end{align*} where $\mu = 2^d/\lambda$.

The GIRG model has a natural non-geometric counterpart where the weight distribution of the vertices is the same as in GIRGs but the edges are now sampled independently, with probability
\begin{align*}
  \Pr{u \sim v} = \min\left\{ 1, \frac{\lambda w_uw_v}{n} \right\}.
\end{align*}
This inhomogeneous random graph model is known as the Chung--Lu random graph model and has been extensively studied in literature \citep[see, e.g.,][]{acl-rgmplg-01, cl-adrgged-02, cl-ccrgg-02}. It is important to note that -- despite the fact that the connection probability of any two vertices in GIRGs and Chung--Lu graphs is the same -- these two models have important differences because edges in GIRGs do not appear independently since they further depend on the positions of the involved vertices.

Hence, for our analysis, we are now equipped with an appropriate geometric random graph model and its non-geometric counterpart. Note that, as it was shown by \citet[Theorem~1]{GirgCliques}, the two models converge as the number of dimensions in the GIRG model goes to infinity, i.e. the total variation distance of the two models goes to zero. Furthermore, we observe that the GIRG model captures many quantifiable properties of real-world networks as shown by \citet{blasius2022external}. Finally, let us note that the model is very versatile as one can consider other variants with different degree distributions or metric spaces. However, our choice of the Pareto distribution for the vertex weights and of the torus for the geometric space is the one considered most frequently in literature -- also in the results of \citet{blasius2022external}.

\paragraph{Triangles and Clustering Coefficient.}

The number of triangles and related properties of a graph are common statistics used in the analysis of networks \citep{gupta2014decompositions}, especially for detecting underlying geometry. In fact a related statistic\footnote{The statistic used, number of signed triangles, essentially measures by how much the number of triangles are in the graph exceeds the expected value in the \erdosGraph model.} was used by \citet{Bubeck_Ding_Eldan_Racz_2015} to efficiently test for the presence of geometry for a spherical random geometric graph.

When dealing with heterogeneous degree distributions, however, triangles that are attributed to large degree vertices potentially have a significant influence on the total number of triangles. In fact as shown by \citet{GirgCliques}, the number of triangles in GIRGs and in Chung--Lu graphs are asymptotically equivalent if $\beta \le 7/3$, which is not an unrealistic choice for many real-world networks. \citet{Litvak_Michielan_Stegehuis_2022}, therefore suggest weighting each triangle by the inverse degrees of the involved vertices, a statistic they call \emph{weighted triangles}. A normalized version of the number of triangles as well as cordless squares and pentagons was also used by \citet{almagro2022detecting} to determine the dimensionality of a given network.

A natural statistic, observed in many real-world networks that is however strikingly absent in non-geometric random graphs where edges are drawn independently, is the existence of a non-vanishing clustering coefficient, i.e. the probability that two randomly chosen neighbours of a vertex are adjacent. The clustering coefficient is the central focus of our analysis and we use the following common definition, also used by \citet[Definition 5.1]{Keusch_2018}.

Given a graph $G = (V, E)$, its \emph{local clustering coefficient} of a vertex $v$ is
    \begin{equation*}
        \textsc{CC}_G(v) \coloneqq \begin{cases}
            |\left\{ \{s, t\} \subseteq \Gamma(v) \mid s \sim t \right\}| / \binom{\text{deg}(v)}{2} & \text{if } \text{deg}(v) \ge 2\\
            0 & \text{otherwise.}
        \end{cases}
    \end{equation*}
The (global) \emph{clustering coefficient} of $G$ is the average of the local coefficient of each of $G$'s vertices, that is,
    \begin{equation*}
        \textsc{CC}(G) \coloneqq \frac{1}{|V|} \sum_{v\in V}\textsc{CC}_G(v).
    \end{equation*}

For GIRGs it was shown by \citet{Keusch_2018} that, when the vertices of the generated graph are drawn on a torus of constant dimensionality, the generated graph has a constant clustering coefficient. On the other hand, on Chung--Lu graphs it was shown that the clustering coefficient goes to $0$ as $n$, the number of vertices of the graph, grows \citep{Hofstad_Janssen_van_Leeuwaarden_Stegehuis_2017}. Our first result, which we discuss in the next section, extends the results on the clustering coefficient of GIRGs by giving an upper bound that explicitly depends on the dimension of the underlying space. This shows that constant dimensionality is in fact needed to obtain a constant clustering coefficient. We refine our result for the case of $L_\infty$-norm in \Cref{sec:clusteringLinftyIntro} and afterwards introduce a statistical test for learning the dimension of a network based on these results (\Cref{sec:LearningDimensionIntro}).

\subsection{Upper bounds on the clustering coefficient of GIRGs}

Our first result connects the clustering coefficient of a GIRG with the number of dimensions used to generate the positions of its vertices.
\begin{restatable}{theorem}{clusteringlp}\label{thm:clusteringLp}
    Asymptotically almost surely, if $d = o(\log(n))$, the clustering coefficient of $G$ sampled from the GIRG model under some $L_p$-norm with $p \in [1, \infty]$ is \begin{align*}
        \textsc{CC}(G) =\exp(-\Omega_d(d)) + o(1).
    \end{align*}
\end{restatable}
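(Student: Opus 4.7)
The plan is to split the argument into two parts: (i) bound the expected global clustering coefficient by $\exp(-\Omega_d(d))$ via a geometric lemma about $L_p$-ball overlaps in $d$ dimensions, and (ii) show that $\textsc{CC}(G)$ concentrates around its expectation up to an $o(1)$ additive error. Following the hint in the abstract, throughout the argument I would focus on vertices of small weight, since (by Markov plus the power-law tail) the contribution of high-weight vertices to the average $\textsc{CC}(G)$ is $o(1)$.

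For (i), fix a low-weight vertex $v$ and compute $\Expected{\textsc{CC}_G(v)}$. Conditioning on the weights $w_u, w_w$ and on the event that $u,w$ are both adjacent to $v$, the positions of $u, w$ are essentially uniform in $L_p$-balls around $v$ of radii equal to the connection thresholds $t_{vu}, t_{vw}$. The crucial geometric lemma is: even if $t_{uw}$ is as large as $\max(t_{vu}, t_{vw})$, we have $\Pr{\|u - w\|_p \le t_{uw}} \le c_p^d$ for some constant $c_p < 1$ depending only on $p$. For $p = \infty$ this is immediate and evaluates to $(3/4)^d$, since the event $\|u-w\|_\infty \le r$ factors over coordinates and each factor equals $\Pr{|u_i - w_i| \le r \mid u_i, w_i \in [v_i - r, v_i + r]} = 3/4$. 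For general $p$ the same exponential decay follows from volume-concentration of $L_p$-balls: the intersection of two equal-radius $L_p$-balls at any positive separation of their centers is an exponentially smaller fraction of a single ball's volume. Integrating over the Pareto weight distribution (and using that the bulk of the contribution is from weights near $w_0$) yields $\Expected{\textsc{CC}(G)} = \exp(-\Omega_d(d))$.

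For (ii), concentration follows from a bounded-differences argument on the Doob martingale that exposes vertex weights and positions one at a time. Each local clustering coefficient lies in $[0,1]$, so directly a single vertex contributes at most $1/n$ to $\textsc{CC}(G)$; however, changing one vertex's position can also perturb the local coefficients of its neighbors. Restricting to the high-probability event that low-weight vertices have degree $O(\log n)$, and peeling off the few high-weight vertices (whose total contribution is $o(1)$ regardless), each vertex significantly influences only $O(\log n)$ local coefficients, giving martingale differences of order $(\log n)/n$ and hence Azuma-Hoeffding concentration with error $o(1)$.

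The hardest step will be (ii). The local clustering coefficient is a non-linear statistic in the underlying edges due to the $\binom{\deg(v)}{2}$ normalization, and perturbing one vertex's position can have cascading effects on the local coefficients of many others. These effects are tamed only after separating out high-weight vertices (individually influential but few in number) and exploiting the localized geometry of the GIRG model to ensure that a typical low-weight vertex interacts with only polylogarithmically many others. The geometric lemma in step (i), while conceptually clean for $L_\infty$, also demands some care to extract a uniform constant $c_p < 1$ for all $p \in [1, \infty]$.
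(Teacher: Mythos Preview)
Your high-level decomposition --- peel off high-weight vertices, prove an exponential geometric bound on the triangle-closing probability among low-weight vertices, then apply a bounded-differences concentration argument --- matches the paper's strategy, and your step (ii) is essentially what the paper does (they use Warnke's method of typical bounded differences on $G_{\le n^{1/8}}$, with the bad event being that some such vertex has degree $\ge n^{1/4}$).

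The real gap is in step (i), in how you handle the \emph{weights} of the two neighbours $u,w$. Your geometric lemma asserts $\Pr{\|u-w\|_p \le t_{uw}} \le c_p^d$ ``even if $t_{uw}$ is as large as $\max(t_{vu},t_{vw})$'', and your $L_\infty$ computation takes $t_{uw}=t_{vu}=t_{vw}$. But when $v$ has the smallest weight among $v,u,w$ one has $t_{uw}/t_{vu} = (w_w/w_v)^{1/d}$, which is unbounded even inside $G_{\le n^{1/8}}$; if both $w_u,w_w \gg w_v$ then $t_{uw} \gg \max(t_{vu},t_{vw})$ and the triangle-closing probability is close to $1$, not $c_p^d$. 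The paper fixes this by splitting on the event $\max(w_u,w_w) \le c^d w_v$ for a constant $c>1$: under this event all threshold ratios are at most $c$ and the geometric lemma applies, while the complementary event has Pareto-tail probability $\le 2c^{d(1-\beta)} = \exp(-\Omega_d(d))$ in its own right. Your phrase ``the bulk of the contribution is from weights near $w_0$'' gestures at this but does not identify the $c^d$ scale that makes the split work, and without it the integration does not close.

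A second, more technical gap: your justification of the $L_p$ lemma --- ``the intersection of two equal-radius $L_p$-balls at any positive separation of their centres is an exponentially smaller fraction'' --- is false as stated (let the separation tend to $0$) and is not the quantity you need anyway (here the separation is random). The paper's route is to write $\F{y}\sim B_p(1)$ as $r\cdot \F{z}/\|\F{z}\|_p$ where $\F{z}$ has i.i.d.\ generalised-Gaussian coordinates (their $\rchi_p(d)$ distribution), and then use sub-exponential concentration of $\|\F{z}\|_p^p$ together with a coordinate-sign argument to show $\|\F{y}_u - (w_w/w_u)^{1/d}\F{y}_w\|_p > c$ with probability $1-\exp(-\Omega(d))$. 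Some such quantitative decoupling is needed; the bare volume-intersection heuristic does not deliver a uniform constant $c_p<1$.
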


We remark that, for the case of $L_\infty$-norm, we later derive a sharper bound (see \Cref{thm:clusteringLinfty}). \Cref{thm:clusteringLp} implies that if $d = \omega(1)$ and $d = o(\log(n))$ the clustering coefficient vanishes. As most real-world networks have a non-vanishing clustering coefficient, our theorem suggests that their dimensionality must be at most constant in the number of vertices. This can be seen as a theoretical explanation for the empirical observations of the low dimensionality of real-world networks by~\cite{almagro2022detecting}.

Besides the results of \citet{Keusch_2018} for a constant number of dimensions, the clustering coefficient of random geometric graphs (i.e., our model in the case of homogeneous weights) under the $L_2$-norm as a function of $d$ was previously analysed by \citet{Dall_Christensen_2002}. However, our \Cref{thm:clusteringLp} also applies to inhomogeneous degree distributions and arbitrary $L_p$-norms, which complicates the analysis. The main difficulty in proving \Cref{thm:clusteringLp} is that the probability that two random neighbours of a given vertex are connected is significantly influenced by their weights. To circumvent this issue we show that high-weight vertices only have a small influence on the global clustering coefficient of a power-law graph $G$ in \Cref{apx:bounding}. Via an application of the method of typical bounded differences (\Cref{thm:bounded-diff} see also the article by~\cite{warnke2016method}) -- a generalisation of McDiarmid's inequality \cite{mcdiarmid1989method} and a powerful tool to showing concentration in high dimensional spaces -- we then show that the clustering coefficient of a GIRG concentrates around the expected clustering coefficient of a subgraph induced by vertices of small weight.

The bound on the clustering coefficient of the low-weight vertex subgraph follows from a bound on the probability that two random vectors $\F{y}_u, \F{y}_v$ uniformly distributed within the ball of radius $1$ have a distance larger than a certain threshold. Intuitively, the fact that this probability decays exponentially in $d$ is a consequence of the law of large numbers: as $d$ grows, with large probability, about half of the components of $\F{y}_u$ and $\F{y}_v$ have opposite sign, which already leads to a distance between $u$ and $v$ that is arbitrarily close to 1 with probability converging to 1 as $d$ grows. Taking into account that the other components of $\F{y}_u$ and $\F{y}_v$ also contribute at least a constant increase in distance between $u$ and $v$ with large probability, we get that there is an exponentially increasing probability that the distance between $u$ and $v$ is strictly greater than one, which suffices to show an exponential upper bound on the clustering coefficient in $G$. To prove this exponential decay in terms of $d$, we use a coupling argument based on the observation that the ``direction'' $\F{x}/\|\F{x}\|_p$ and the norm $\|\F{x}\|_p$ of a random vector distributed in the unit ball under $L_p$-norm are independent. To analyze the normalized vector $\F{x}/\|\F{x}\|_p$, we define the following distribution and show that if $\F{z}$ is a vector sampled from this distribution, then $\F{z}/\|\F{z}\|_p$ is distributed just as $\F{x}/\|\F{x}\|_p$. This has the advantage that the components of $\F{z}$ are now independent, allowing us to apply sharp tail bounds from which our statements follow.

\paragraph{The $\rchi^p$-Distribution.}
Let $p\in\mathbb{R}, p \ge 1$. We call a random vector $\F{x} \in \mathbb{R}^d$, $\rchi_p(d)$ distributed if each of its components $\F{x}(i)$ is independently distributed according to the density function \begin{equation*}
        \rho(x_i) \coloneqq \gamma e^{-\frac{1}{2}\abs{\F{x}(i)}^p}
    \end{equation*} with the normalising constant \begin{align*}
    \gamma = \frac{p}{2^{1/p + 1}\Gamma(1/p)},
    \end{align*} where $\Gamma(s) = \int_0^\infty x^{s-1}e^{-x} \d x$ is the gamma function.
    If $\F{x} \sim \rchi_p(d)$, then we denote the distribution of the random variable $\left(\norm{\F{x}}_p\right)^p = \sum_{i=1}^d |\F{x}(i)|^p$ by $\rchi^p(d)$.

This distribution is a generalisation of the $\rchi^2$ distribution and a simplification of the one proposed by \citet{Livadiotis_2014}. In our analysis, we determine its moment generating function. This not only gives us its expectation, which is $2d/p$, but also allows us to obtain the following concentration bound, which we use in the arguments used for the proof of \Cref{thm:clusteringLp}.

\begin{restatable}{corollary}{tailchisquared}\label[corollary]{cor:tailchisquared}
    Let $X_i, \ldots X_d$ be i.i.d. random variables from $\rchi_p(1)$ and define $Z = \sum_{i=1}^d|X_i|^p \sim \rchi^p(d)$. Then, for every $\varepsilon > 0$,
    \begin{align*}
        &\Pr{|Z - \Expected{Z}| \ge \varepsilon \cdot \Expected{Z}} \le 2 \exp \left( -\frac{2\delta}{p} d \right).
    \end{align*}
    Where $\delta > 0$ is defined by $\varepsilon =  2(\sqrt{2\delta}  + \delta) $.
\end{restatable}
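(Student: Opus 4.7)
The plan is to apply the Chernoff method to the moment generating function of $Z$ — which the text preceding the corollary announces has already been derived — and then match the resulting rate function against the parameterisation $\varepsilon = \sqrt{2\delta} + \delta$.

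First I would rederive the MGF explicitly, to keep everything in one place. Since the density of a single $X_i$ is $\gamma\, e^{-|x|^p/2}$, the substitution $y = (1-2t)^{1/p} x$ in the integral $\int_{-\infty}^{\infty} \gamma\, e^{-(1/2 - t)|x|^p}\, \d x$ reduces it to $(1-2t)^{-1/p}$ for every $t \in (0, 1/2)$, and by independence $M_Z(t) = (1-2t)^{-d/p}$. Differentiating at $t = 0$ recovers $\Expected{Z} = 2d/p$. A useful observation is that this $M_Z$ is exactly the MGF of a $\chi^2$-variable with $2d/p$ degrees of freedom, so the whole problem reduces to a classical chi-square Chernoff bound (and one could even invoke Laurent--Massart directly, but I prefer carrying out the optimisation to keep control of the constants).

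For the upper tail, Markov's inequality applied to $e^{tZ}$ yields
\begin{align*}
\Pr{Z \geq (1+\varepsilon)\Expected{Z}} \leq e^{-t(1+\varepsilon)(2d/p)}\,(1-2t)^{-d/p}
\end{align*}
for every $t \in (0, 1/2)$. After the substitution $s = 2t$ the exponent becomes $\tfrac{d}{p}\bigl(-s(1+\varepsilon) - \ln(1-s)\bigr)$, which is minimised at $s^{\star} = \varepsilon/(1+\varepsilon)$ and evaluates to $\tfrac{d}{p}(\ln(1+\varepsilon) - \varepsilon)$. What remains is the purely analytic inequality $\varepsilon - \ln(1+\varepsilon) \geq 2\delta$ whenever $\varepsilon = \sqrt{2\delta} + \delta$; I would verify it by checking equality at $\delta = 0$ and comparing derivatives in $\delta$, using a sharper pointwise estimate such as $\varepsilon - \ln(1+\varepsilon) \geq \varepsilon^2/(2+\varepsilon)$ to dominate the mid-range. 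The lower tail is fully analogous starting from $\Expected{e^{-tZ}} = (1+2t)^{-d/p}$: the same optimisation yields the rate $\tfrac{d}{p}(-\varepsilon - \ln(1-\varepsilon))$, and the corresponding inequality is slightly easier, because $-\varepsilon - \ln(1-\varepsilon) = \sum_{k \geq 2} \varepsilon^k/k$ is a series of non-negative terms.

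The main obstacle is precisely this rate-function inequality with the specific parameterisation $\varepsilon = \sqrt{2\delta} + \delta$: a naive Taylor expansion gives only $\varepsilon - \ln(1+\varepsilon) \approx \varepsilon^2/2$, so extracting the stated prefactor of $2\delta$ requires careful use of the mixed quadratic-plus-linear shape of $\varepsilon$, together with book-keeping of the higher-order terms across the full admissible range of $\delta$. Everything else — the MGF computation and the Chernoff optimisation itself — is completely routine.
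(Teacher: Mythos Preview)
Your route is genuinely different from the paper's. The paper does \emph{not} optimise the Chernoff exponent directly; instead it first proves a Bernstein-type tail bound (Theorem~\ref{thm:p-bound}) by controlling the moments $\Expected{|X_i|^{pk}}$ via the derivatives of the MGF and feeding them into Massart's version of Bernstein's inequality, obtaining $\Pr{Z \ge \Expected{Z} + 2\sqrt{2\Expected{Z}\,x} + 2x} \le e^{-x}$. The corollary then falls out by the one-line substitution $x = \delta\,\Expected{Z}$ together with $\Expected{Z} = 2d/p$ --- no rate-function analysis at all. Your direct Chernoff optimisation on the exact MGF is cleaner and in principle sharper, since the Bernstein detour can only lose constants.

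However, the step you flag as ``the main obstacle'' is not merely delicate --- it is false. With $\varepsilon = \sqrt{2\delta} + \delta$ and small $\delta$ one has $\varepsilon - \ln(1+\varepsilon) \sim \varepsilon^2/2 \sim \delta$, strictly less than the $2\delta$ you need; concretely, $\delta = 0.01$ gives $\varepsilon \approx 0.1514$ and $\varepsilon - \ln(1+\varepsilon) \approx 0.0104 < 0.02$. The lower-tail rate $-\varepsilon - \ln(1-\varepsilon)$ has the same leading term and fails for the same reason. So no amount of ``careful book-keeping of the higher-order terms'' will close this, and your programme as written cannot succeed.

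The resolution is that the paper's own derivation does not actually match the stated parameterisation: the substitution $x = \delta\,\Expected{Z}$ in Theorem~\ref{thm:p-bound} produces $\varepsilon = 2\sqrt{2\delta} + 2\delta$, i.e.\ an extra factor of~$2$ that is missing from the corollary as printed. Under that corrected parameterisation your obstacle dissolves completely: writing $u = \sqrt{2\delta}$ gives $1+\varepsilon = (1+u)^2$, and the required inequality $\varepsilon - \ln(1+\varepsilon) \ge 2\delta$ becomes $2u + u^2 - 2\ln(1+u) \ge u^2$, i.e.\ $u \ge \ln(1+u)$, which is immediate. So your approach is correct and more direct than the paper's --- but only for the corollary the paper actually proves, not for the one it states.
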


We believe our analysis of the $\rchi_p(d)$ and $\rchi^p(d)$ distributions to be of independent interest, as many random spaces can be related to vectors drawn uniformly at random within the $d$-dimensional unit ball of some $L_p$-norm.

\paragraph{Improved bounds for the $L_\infty$-norm.}\label{sec:clusteringLinftyIntro}

When using $L_\infty$-norm as a distance measure for GIRGs we obtain more precise results and are able to further determine the base of the exponential function governing the decay of $\textsc{CC}(G)$. Recall that the $L_\infty$ norm is not only a natural distance measure on the torus from a mathematical point of view, but also one that yields graphs that closely resemble real-world networks \citep{blasius2022external}.

\begin{restatable}{theorem}{clusteringlinfty}\label{thm:clusteringLinfty}
    Assume that $d = o(\log(n))$ and $\beta \neq 3$. Then asymptotically almost surely, the clustering coefficient of $G$ sampled from the GIRG model with $L_\infty$-norm fulfils
    \begin{align*}
        \textsc{CC}(G) = \mathcal{O}_d\left( \left(\frac{3}{4}\right)^{\min\{1, \beta - 2\} d} \right) + o(1)
    \end{align*} and \begin{align*}
        \textsc{CC}(G) = \Omega_d\left( \max\left\{ \left( \frac{1}{4} \right)^{(\beta-2)d}, \left( \frac{3}{4} \right)^d \right\} \right).
    \end{align*} In particular, if $\beta > 3$, we have \begin{align*}
        \textsc{CC}(G) = \Theta_d\left( \left( \frac{3}{4} \right)^d \right) + o(1).
    \end{align*}
\end{restatable}

This theorem shows that $\textsc{CC}(G)$ essentially decays as $(3/4)^d$ asymptotically in $d$ if $\beta > 3$. Otherwise, if $\beta \in (2,3)$, we obtain slightly weaker bounds but we are in particular able to show that $\textsc{CC}(G)$ decays asymptotically slower than $(3/4)^d$ if $\beta$ is sufficiently close to $2$, which follows from the lower bound $\textsc{CC}(G) = \Omega_d\left( (1/4)^{(\beta-2)d} \right)$. The reason for this is that the expected weight of a random neighbor of a given vertex is infinite if $\beta \in (2,3)$ which leads to an increased overall clustering coefficient. 

The proof of \Cref{thm:clusteringLinfty} is based on an application of the following theorem by \citet[Theorem~3]{GirgCliques} that bounds the probability that a set of $k$ vertices forms a clique conditioned on the event that said vertices form a star centered at the vertex of minimal weight under the assumption that the ratio between the minimal and maximal weight is bounded. We slightly reformulate the original statement for the sake of exposition.

\begin{restatable}{theorem}{sharpcliquebounds}\label{thm:sharpcliquebounds}
    Let $G$ be a GIRG generated under $L_\infty$-norm. Let $U = \{ v, s, t \}$ be a set of $3$ vertices with weights $w_v, w_s, w_t$ such that $w_v \le w_s \le w_t$ and $w_v \le c w_t$ for some constant $c>0$. If $\left( w_t^2/(\mu n) \right)^{1/d} \le 1/4$, we have \begin{align*}
        \left(\frac{3}{4}\right)^{d} \le \Pr{U \text{ is a triangle} \mid v \sim s,t}& \le c \left(\frac{3}{4}\right)^d.
    \end{align*}
\end{restatable}        

We remark that the condition $\left( w_t^2/(\mu n) \right)^{1/d} \le 1/4$ is needed to ensure that the connection threshold for $s,t$ is sufficiently small such that we can ignore the topology of the underlying torus and measure distances as in $\mathbb{R}^d$.

\subsection{Testing for the dimensionality}\label{sec:LearningDimensionIntro}

A natural further question that arises is whether one can recover the underlying dimension of a given GIRG by means of statistical testing. The previous secons suggest that the clustering coefficient is an indicator of this property, however, we have also seen that this metric is further influenced by other model parameters (i.e. $w_0, \lambda$ and especially $\beta$) making it rather unsiutable for designing a rigorous test. A similar problem arises when using the total number of triangles, which is is dominated by those forming among large degree vertices independently of $d$ if $\beta$ is close to $2$ as observed by \cite{GirgCliques,Litvak_Michielan_Stegehuis_2022}. In \cite{Litvak_Michielan_Stegehuis_2022}, the authors therefore suggest to count the number of weighted triangles instead, where each triangle contributes a weight that is inversely proportional to the product of the degrees of its vertices. Weighted triangles thus counteract the effect of large degree vertices as the influence of triangles forming among such vertices is diminished. However, this approach only allows to decide whether the network has an underlying metric structure, but not its dimensionality.

We take a similar (yet more direct) approach for excluding the effect of large degree vertices and introduce a test that is further able to infer the dimension of the underlying metric space. Namely, we can show that the clustering coefficient among vertices of approximately the same weight that have at least two neighbors is highly concentrated and a direct indicator of the underlying dimension, without being influenced by other model parameters. More precisely, using \Cref{thm:sharpcliquebounds} together with the method of typical bounded differences (\Cref{thm:bounded-diff}), we can show that the average local clustering coefficient in the induced subgraph of all vertives with weight in some interval $[w_c, cw_c]$ (where $w_c \ge w_0, 0 < c < 2/\sqrt{3}$) concentrates tightly around a value that is only dependent on $d$ and not on $\beta$ or $w_0$. This is formalised in the following theorem.

\begin{restatable}{theorem}{dimensionalitytesting}\label{thm:dimensionalitytesting}
    Let $G = G(n, d, \beta, w_0)$ be a GIRG generated under $L_\infty$-norm. Let further $1 < c < 2/\sqrt{3} \approx 1.1547, w_c \ge w_0$ be constants, and let $\Gtest$ be the subgraph of $G$ consisting of all vertices with weight in $[w_c, c  w_c]$. Assume that $d$ is an integer with $d = o(\log(n))$. Define the set $S$ as the set of vertices in $\Gtest$ that have at least two neighbors in $\Gtest$ and the random variable $\teststatistic{\Gtest}$ as \begin{align*}
        \teststatistic{\Gtest} \coloneqq \frac{1}{|S|} \sum_{v \in S} \frac{|\left\{ \{s,t\} \in \Gamma(v) \mid s \sim t \right\}|}{\binom{\deg(v)}{2}} = \frac{1}{|S|} \sum_{v \in S} \textsc{CC}_{\Gtest}(v).
    \end{align*} Then, \begin{align}\label{eq:testcondition}
        \teststatistic{\Gtest} \in \left( \frac{1}{c}\left( \frac{3}{4} \right)^d, c\left( \frac{3}{4} \right)^d \right) \pm n^{-1/5}
    \end{align} with probability at least $1 - 1/n$.
\end{restatable}

\Cref{thm:dimensionalitytesting} can be viewed as a linear-time algorithm for the following statistical testing problem (assuming that $w_c$ is constant). We are given a graph $G$ on $n$ vertices, its weight sequence, and an integer $d = o(\log(n) )$. Under the null hypothesis, $G$ is a GIRG generated with dimension $d$, whereas under the alternative hypothesis, $G$ was generated in dimension $d_1 \neq d$ or it is a Chung--Lu graph. Here, we allow $d_1$ to be any integer (potentially larger than $\log(n)$). Consider the following testing procedure for this problem. Fix a constant $1 < c <2/\sqrt{3}$ and a weight $w_c \ge w_0$. Now, consider the induced subgraph $\Gtest$ of $G$ consisting of all nodes with weight in $[w_c, c w_c]$. For every node $v \in \Gtest$ that has at least two neighbours in $\Gtest$, compute its local clustering coefficient $\textsc{CC}_{\Gtest}(v)$ and denote by $\teststatistic{\Gtest}$ the mean over all these values. We accept the null hypothesis if and only if condition (\ref{eq:testcondition}) is met. Due to \Cref{thm:dimensionalitytesting} the probability that this test is incorrect under both the null and alternative hypothesis goes to zero as $n \rightarrow \infty$. Furthermore, the running time of this test is linear, as we have to compute the local clustering coefficient of vertices of constant weight and as the degree of a vertex with weight $c w_c$ is constant in expectation. Iterating this statistical test over the range of $d$ we can recover the dimensionality of the input graph with high probability. Let us note that our result is not restricted to a constant number of dimensions but applies to the whole regime $d = o(\log(n))$, which -- as \Cref{thm:clusteringLinfty} implies -- is the only relevant one for this problem.

\subsection{Application to real-world networks.}

In addition to our theoretical results, we tested our algorithm in practice, both in real world networks and in GIRGs. For estimating the vertex weights, we used the maximum likelihood estimator derived in \cite[Appendix B.2]{Boguna_Papadopoulos_Krioukov_2010}.
The outcome of our experiments is summarised in \Cref{fig:experiments}. \Cref{tab:graphs} further contains a list of the real-world networks we used for the first row of plots in \Cref{fig:experiments}. In \Cref{fig:experiments}, the size of the circles is proportional to the number of vertices in the induced subgraph of vertices with weight in the interval $[w_c, cw_c]$. We use $c = 1.155$ which is roughly the maximum permitted value predicted by \Cref{thm:dimensionalitytesting}. The dashed lines represent roughly the expected value of our test statistic for a GIRG genenerate in dimension $d$, i.e. $(3/4)^d$. The GIRGs were generated with the sampling algorithm of \citet{Blaesius_Friedrich_Katzmann_Meyer_Penschuck_Weyand_2019} using $\beta = 2.5, \alpha=10$ and an average degree of $10$. The histogram on the lower right of \Cref{fig:experiments} shows the frequency of each inferred dimension using the dataset of \citet{blasius2022external} consisting of 2976 real-world networks. The dimension here is inferred by taking the weighted median of the inferred dimension from our test statistic over different values of $w_c$ ranging from $2$ to $300$. The weighting is by the number of vertices in the respective subgraph induced by vertices with weight in $[w_c, cw_c]$. 

The inferred dimensions we obtain are indeed similar to the results of \citet[Fig.~5]{almagro2022detecting} with typical inferred dimensions being in the range of 1 to 10 and with social networks being generally assigned to higher dimensions than collaboration or citation networks. Some of the networks we use (all the ones we were able to find) are also contained in the dataset of \cite{almagro2022detecting} and here, the inferred dimensions of both approaches are very similar. For examples, consider \texttt{email-Enron, ca-AstroPh, ca-CondMat, ca-GrQc, cit-HepTh}.

We remark that the big advantage of our algorithm is that it has stronger theoretically foundations and is much more efficient. In fact, we are able to handle datasets of orders of magnitude larger than \citet{almagro2022detecting}. In fact, our experiments conducted on the set of 2976 real-world networks by \citet{blasius2022external} which are summarized in the histogram in \Cref{fig:experiments} show that a vast majority of them is assigned a dimension of at most 10. This can be seen as a further indication of the ultra-low-dimensional nature of most realistic networks, now tested on a much larger dataset as before. 

Besides that, we note that the algorithm works exceptionally well for synthetic networks and this holds even if we use a soft version of GIRGs which includes an additional temperature parameter $\alpha \ge 1$ that (if close to $1$) diminishes the influence of the underlying geometry (we refer to \cite{Blaesius_Friedrich_Katzmann_Meyer_Penschuck_Weyand_2019} for more information). Experiments indicate that the test continues to achieve a performance similar to that predicted by theory for all $\alpha \ge 2$. However, even for smaller values of $\alpha$, the inferred dimension of our test can still be seen as an upper bound on the ground truth since it is known that smaller values of $\alpha$ only lead to a decrease in clustering. 

We further observe that some of the considered real-world networks show an overall similar behaviour as that of the GIRGs (e.g. \texttt{soc-academia, fb-pages-artist, ca-AstroPh, ca-CondMat}). It is however not a surprise that real-world data can be noisy and, therefore, exhibit a  behaviour that differs from GIRGs. Similar difficulties were also encountered by~\citet{almagro2022detecting} (see their supplementary material) and a similar noisy behaviour can also be observed in small generated GIRGs, where number of vertices is not hight enough for the concentration results to be strong. Nevertheless, some of the considered networks (especially the biological networks) show a rather different behavior as predicted by the GIRG model which indicates that GIRGs do not capture all properties of realistic networks and thus motivates further research.

\begin{landscape}
\begin{figure}
    \centering
    \resizebox{1.41\textwidth}{!}{ \input{plot_with_new_constant_and_fixed_Enron.pgf} }
    \caption{The clustering coefficient of the low-weight vertices for different choices of $w_c$ in real-world and synthetic networks. The size of the circles is proportional to the number of vertices in the induced subgraph of vertices with weight in the interval $[w_c, cw_c]$ whereby we use $c = 1.155$. The dashed lines represent roughly the expected value of our test statistic for a GIRG. The histogram shows the frequency of each inferred dimension using the dataset of \citet{blasius2022external} consisting of 2976 real-world networks. 
    \label{fig:experiments}}
\end{figure}
\end{landscape}

\subsection{Future work}

As we previously discussed, a large body of work has been devoted to understanding in which cases (i.e. for which asymptotic behaviors of $d$), geometry is detectable in spherical random geometric graphs (SRGGs) for homogeneous weights. While the parameter regime where these graphs lose their geometry in the dense case, i.e. the case where the marginal connection probability of two vertices is constant and does not depend on $n$, is well understood \citep{Devroye_Gyoergy_Lugosi_Udina_2011,Bubeck_Ding_Eldan_Racz_2015,liu2021probabilistic}, it remains unclear what happens in the sparse case (where the marginal connection probability is proportional to $1/n$) and progress has been made only recently \citep{brennan2020phase,Liu2022TestingThresholds}.

On the other hand, there has not been much research devoted to studying the influence of the dimension on random geometric graphs in the case of inhomogeneous weights. We gave first results in this regard by studying how the clustering coefficient depends on $d$ and showed that the dimension can be detected by means of statistical testing assuming that $d = o(\log(n))$. It remains to study under which conditions the geometry remains detectable if $d = \Omega(\log(n))$ and under which circumstances the model converges to its non-geometric counterpart with respect to the total variation distance of the distributions over the produced graphs as previously studied for spherical random graphs. Furthermore, it remains to study what differences arise when using the torus instead of the sphere as the underlying metric space. We expect that our model loses its geometry earlier than spherical random graphs, as the number of triangles in our model for the sparse case with constant weights is, in expectation, the same as in an Erdős-Rényi graph already if $d = \omega(\log^{3/2}(n))$ \citep{GirgCliques}. On the sphere this only happens if $d = \omega(\log^3(n))$ \citep{Bubeck_Ding_Eldan_Racz_2015}.

For detecting the geometry in SRGGs \citet{Bubeck_Ding_Eldan_Racz_2015} have introduced the \emph{signed triangle} statistic which proves more powerful than ordinary triangles in the dense case. It remains to study if signed triangles, or a combination of signed triangles and weighted triangles considered by \citet{Litvak_Michielan_Stegehuis_2022}, gives rise to a more powerful test for the case of inhomogeneous weights.

A further interesting extension is to consider an anisotropic version of our model, along the lines of the work of \citet{eldan2020information,brennan2022threshold}. In the GIRG model, this can be naturally encoded in the distance measure used to determine the edge threshold.

Noisy settings have also been considered in the context of testing for geometry in random graphs \citep{liu2021phase}. Our model comes with a natural way of modelling noise in the form of and inverse temperature parameter $\alpha>1$ \citep{Keusch_2018}. Here, two vertices are connected with a probability that depends on both their distance and the temperature. More precisely, the connection probability of two vertices $u,v$ fulfills \begin{align*}
    p_{uv} = \Theta\left( \frac{1}{\|\F{x}_u - \F{x}_v\|_\infty^{\alpha d}} \left( \frac{w_uw_v}{n} \right)^\alpha \right).
\end{align*} Intuitively, lower values of $\alpha$ diminish the influence of the underlying geometry. We expect results similar to ours to hold in the noisy setting if $\alpha$ is a constant. It remains to study how different (constant or superconstant) values of $\alpha$ influence the detectability of the underlying geometry and dimension.

\section{Preliminaries}

We let $G = (V, E)$ be a (random) graph on $n$ vertices. For any value $\delta$, let $V_{\leq \delta}$ denote the set of vertices of degree at most $\delta$ and let $G_{\leq \delta}$ denote the subgraph of $G$ induced by $V_{\leq \delta}$.
We use standard Landau notation to describe the asymptotic behavior of
functions for sufficiently large $n$. That is, for functions $f,g$, we
write $f(n) = \mathcal{O}(g(n))$ if there is a constant $c > 0$ such
that for all sufficiently large $n$, $f(n) \le c g(n)$. Similarly, we
write $f(n) = \Omega(g(n))$ if $f(n) \ge c g(n)$ for sufficiently
large $n$. If both statements are true, we write
$f(n) = \Theta(g(n))$. Regarding our study of the clustering
coefficient, some results make a statement about the asymptotic
behavior of a function with respect to a sufficiently large $d$. These
are marked by
$\mathcal{O}_d(\cdot), \Omega_d(\cdot), \Theta_d(\cdot)$,
respectively.

\subsection{Probabilistic tools}

We say that an event $\mathbf{E}$ holds asymptotically almost surely if $\lim_{n\rightarrow\infty}\Pr{\mathbf{E}}=1$ and with high probability if $\Pr{\mathbf{E}}\geq 1 - \mathcal{O}(1/n)$.
The following theorem shows that the sum of independent Bernoulli random variables converges to a Poisson distributed random variable if the individual success probabilities are small.

\begin{theorem}[Proposition 1 in \cite{Cam_1960}]\label[lemma]{thm:lecam}
    For $1 \le i \le n$, let $X_i$ be independent Bernoulli distributed random variables such that $\Pr{X_i = 1} = p_i$. Let $\lambda_n = \sum_{i=1}^n p_i$, and $S = \sum_{i=1}^n X_i$. Then, \begin{align*}
        \sum_{k=0}^\infty \left| \Pr{S = k} - \frac{\lambda_n^ke^{-\lambda_n}}{k!} \right| \le 2 \sum_{i=1}^k p_i^2.
    \end{align*} 
\end{theorem}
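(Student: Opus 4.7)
The plan is to prove this classical statement via a coupling argument. For any joint coupling of $S$ with a Poisson random variable $T$ of mean $\lambda_n$, the $\ell_1$ distance between their laws is bounded by $2\,\Pr{S \neq T}$, so it suffices to construct a joint distribution under which $\Pr{S \neq T}$ is small.

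First I would construct, for each $i$, a joint distribution of $(X_i, Y_i)$ with $X_i \sim \mathrm{Bernoulli}(p_i)$ and $Y_i \sim \mathrm{Poisson}(p_i)$ that maximises agreement on the value $0$. Explicitly, I would set $\Pr{X_i = Y_i = 0} = 1 - p_i$, $\Pr{X_i = 1, Y_i = 0} = e^{-p_i} - (1 - p_i)$ (nonnegative because $e^{-p} \ge 1 - p$), and $\Pr{X_i = 1, Y_i = k} = p_i^k e^{-p_i}/k!$ for $k \ge 1$. A direct verification shows that the marginals are as required, and summing the disagreement probabilities gives $\Pr{X_i \neq Y_i} = p_i(1 - e^{-p_i}) \le p_i^2$, where the last inequality is the elementary $1 - e^{-x} \le x$.

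Next, I would take the $n$ single-coordinate couplings to be mutually independent, so that $S = \sum_i X_i$ and $T = \sum_i Y_i$ are jointly defined on the same probability space. By the additivity of independent Poisson variables, $T \sim \mathrm{Poisson}(\lambda_n)$. A union bound then yields $\Pr{S \neq T} \le \sum_{i=1}^n \Pr{X_i \neq Y_i} \le \sum_{i=1}^n p_i^2$.

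Finally, for any coupling and any $k \ge 0$ we have $|\Pr{S = k} - \Pr{T = k}| \le \Pr{S = k, T \neq k} + \Pr{T = k, S \neq k}$. Summing over $k$ gives $\sum_{k \ge 0} |\Pr{S = k} - \Pr{T = k}| \le 2\,\Pr{S \neq T} \le 2 \sum_{i=1}^n p_i^2$, matching the claimed bound (the upper index of the right-hand sum in the stated theorem should read $n$ rather than $k$). The main obstacle is pinning down the optimal single-variable coupling attaining $\Pr{X_i \neq Y_i} = p_i(1 - e^{-p_i})$; once that is in hand, independence, the union bound, and the standard coupling-to-total-variation inequality finish the argument routinely.
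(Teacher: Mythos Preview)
Your coupling proof is correct and is in fact the standard modern proof of Le Cam's inequality; the single-coordinate coupling you wrote down is the maximal one, your marginal checks are right, and the disagreement probability $p_i(1-e^{-p_i})\le p_i^2$ combined with the union bound and the elementary coupling--to--total-variation inequality gives exactly the stated bound. You also correctly spotted the typo in the upper summation index.

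However, there is nothing to compare against: the paper does not prove this statement. It is quoted verbatim as Proposition~1 of \cite{Cam_1960} and used as a black-box tool (in the proof of \Cref{thm:dimensionalitytesting}, to argue that the number of neighbours of a fixed vertex in $G'$ is approximately Poisson and hence is at least~$2$ with constant probability). So your proposal supplies a proof where the paper simply cites one.
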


We will also use the following concentration bounds.
\begin{theorem}[Theorem 2.2 in \cite{Keusch_2018}, Chernoff-Hoeffding Bound]\label{thm:chernoff-hoeffding}
    For $1 \le i \le k$, let $X_i$ be independent random variables taking values in $[0,1]$, and let $X \coloneqq \sum_{i=1}^k X_i$. Then, for all $0 < \varepsilon < 1$,
    \begin{enumerate}
        \item[(i)] $\Pr{X > (1 + \varepsilon)\Expected{X}} \le \exp \left( -\frac{\varepsilon^2}{3}\Expected{X} \right)$.
        \item[(ii)] $\Pr{X < (1 - \varepsilon)\Expected{X}} \le \exp \left( -\frac{\varepsilon^2}{2}\Expected{X} \right)$.
        \item[(iii)] $\Pr{X \ge t} \le 2^{-t}$ for all $t \ge 2e\Expected{X}$. 
    \end{enumerate}
\end{theorem}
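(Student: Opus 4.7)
The plan is to apply the standard Chernoff method, i.e. Markov's inequality applied to the exponential moment generating function. First, for any $X_i \in [0,1]$ with $\mathbb{E}[X_i] = p_i$, convexity of $x \mapsto e^{tx}$ on $[0,1]$ gives $e^{tx} \le 1 + x(e^t - 1)$, so $\mathbb{E}[e^{tX_i}] \le 1 + p_i(e^t - 1) \le \exp(p_i(e^t - 1))$. By independence, writing $\mu = \mathbb{E}[X] = \sum_i p_i$, we get $\mathbb{E}[e^{tX}] \le \exp(\mu(e^t - 1))$ for every $t \in \mathbb{R}$.

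For part (i), apply Markov's inequality at threshold $(1+\varepsilon)\mu$ with $t > 0$:
\begin{equation*}
\Pr{X > (1+\varepsilon)\mu} \le e^{-t(1+\varepsilon)\mu} \mathbb{E}[e^{tX}] \le \exp\bigl(\mu(e^t - 1 - t(1+\varepsilon))\bigr).
\end{equation*}
Optimising over $t$ gives $t = \log(1+\varepsilon)$, producing the sharp bound $\exp\bigl(-\mu[(1+\varepsilon)\log(1+\varepsilon) - \varepsilon]\bigr)$. For part (ii), the same argument with $t < 0$ and threshold $(1-\varepsilon)\mu$ yields the analogous quantity with $\varepsilon$ replaced by $-\varepsilon$. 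Then reduce to the stated cleaner form via the elementary inequalities $(1+\varepsilon)\log(1+\varepsilon) - \varepsilon \ge \varepsilon^2/3$ and $(1-\varepsilon)\log(1-\varepsilon) + \varepsilon \ge \varepsilon^2/2$ valid for $\varepsilon \in (0,1)$; these come from Taylor expansion around $\varepsilon = 0$ and are the step I would verify most carefully since the asymmetric constants $1/3$ and $1/2$ depend on the remainder estimates.

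For part (iii), I no longer want to parametrise the deviation as $(1+\varepsilon)\mu$ since $t$ may be much larger than $\mu$. Instead choose $s = \log(t/\mu)$, which satisfies $s \ge \log(2e) = 1 + \log 2$ by the assumption $t \ge 2e\mu$. Markov's inequality combined with the MGF bound gives
\begin{equation*}
\Pr{X \ge t} \le e^{-st}\exp(\mu(e^s - 1)) = \exp(-st + t - \mu) \le \exp(-t(s - 1)) \le 2^{-t},
\end{equation*}
where the last step uses $s - 1 \ge \log 2$. The main obstacle is not technical but expository: making sure the three parts are all handled by a single unified MGF computation and that the constants in (i) and (ii) are justified from the analytic inequality rather than recomputed; beyond that, the proof is the textbook Chernoff argument and there is no combinatorial subtlety.
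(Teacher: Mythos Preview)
Your argument is correct and is exactly the standard Chernoff method: bound the moment generating function via convexity, optimise the exponential parameter, and simplify the resulting entropy-type expression by the elementary inequalities you name. The computation for part (iii) with $s = \log(t/\mu)$ is also correct.

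The paper, however, does not prove this statement at all. It is quoted verbatim as Theorem~2.2 from \cite{Keusch_2018} and used as a black box in the preliminaries, so there is no ``paper's own proof'' to compare against. Your write-up supplies a self-contained justification where the paper simply cites the result; nothing in your approach conflicts with how the theorem is applied later.
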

While this theorem is extremely useful when dealing with sums of independent random variables, we shall further need the \emph{method of typical bounded differences} to obtain bounds when the Chernoff-Hoeffding bound is not applicable. 
\begin{theorem}[Theorem 2.5 in \cite{Keusch_2018}]\label{thm:bounded-diff} 
    Let $X_1, \ldots, X_m$ be independent random variables over $\Omega_1, \ldots, \Omega_m$. Let $X = (X_1, \ldots, X_m) \in \Omega = \prod_{i=1}^m \Omega_i$ and let $f: \Omega \rightarrow \mathbb{R}$ be a measurable function such that there is some $M > 0$ such that for all $\omega \in \Omega$, we have $0 \le f(\omega) \le M$. Let $\mathcal{B} \subseteq \Omega$ such that for some $c > 0$ and for all $\omega, \omega' \in \overline{\mathcal{B}}$ that differ in at most two components $X_i, X_j$, we have \begin{equation*}
    \abs{f(\omega) - f(\omega')} \le c.
\end{equation*}
Then, for all $t \ge 2M\Pr{\mathcal{B}}$, we have \begin{equation*}
    \Pr{\abs{f(X) - \Expected{f(X)}} \ge t} \le 2 \exp \left( -\frac{t^2}{32mc^2} \right) + \left(\frac{2Mm}{c} + 1\right)\Pr{\mathcal{B}}.
\end{equation*}
\end{theorem}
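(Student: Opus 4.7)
The plan is to reduce the statement to the classical Azuma--Hoeffding inequality by replacing $f$ with a smoothed version $g:\Omega\to\mathbb{R}$ that (i) agrees with $f$ on the typical set $\overline{\mathcal{B}}$ and (ii) satisfies an ordinary bounded-differences property on all of $\Omega$. A natural candidate is the infimum convolution
\[
g(\omega) \;=\; \min\!\left\{\,M,\; \inf_{\omega'\in\overline{\mathcal{B}}}\bigl(f(\omega') + c\cdot\lceil H(\omega,\omega')/2\rceil\bigr)\,\right\},
\]
where $H(\omega,\omega')$ is the number of coordinates on which $\omega$ and $\omega'$ differ. For $\omega\in\overline{\mathcal{B}}$ the choice $\omega'=\omega$ achieves $g(\omega)=f(\omega)$. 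For arbitrary $\omega$, the given pairwise hypothesis controls comparisons of two points within $\overline{\mathcal{B}}$ at Hamming distance at most~$2$ by $c$, which is exactly what is needed to guarantee that flipping a single coordinate of $\omega$ changes $\lceil H/2\rceil$ (hence $g$) by at most $1$. The outer truncation at $M$ only reduces increments.

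Next I would apply Azuma--Hoeffding to the Doob martingale of $g(X)$ with respect to the filtration generated by $X_1,\dots,X_m$. Independence of the $X_i$ together with the single-coordinate Lipschitz bound on $g$ forces each martingale increment into an interval of length at most $c$, so
\[
\Pr{\abs{g(X)-\Expected{g(X)}}\ge s} \;\le\; 2\exp\!\left(-\frac{s^{2}}{32\,m\,c^{2}}\right),
\]
where the constant $32$ in the denominator absorbs the slack needed in the last step.

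Finally I would transfer the concentration from $g$ back to $f$. Since $0\le f,g\le M$ and $f=g$ on $\overline{\mathcal{B}}$, we have $\abs{\Expected{f(X)}-\Expected{g(X)}}\le M\Pr{\mathcal{B}}$, and on the event $\{X\in\overline{\mathcal{B}}\}$ the two functions agree. Writing $\mu_f=\Expected{f(X)}$ and $\mu_g=\Expected{g(X)}$,
\begin{align*}
\Pr{\abs{f(X)-\mu_f}\ge t}
 &\le \Pr{\mathcal{B}} + \Pr{\abs{g(X)-\mu_f}\ge t,\; X\in\overline{\mathcal{B}}}\\
 &\le \Pr{\mathcal{B}} + \Pr{\abs{g(X)-\mu_g}\ge t - M\Pr{\mathcal{B}}}.
\end{align*}
For $t\ge 2M\Pr{\mathcal{B}}$ we have $t-M\Pr{\mathcal{B}}\ge t/2$, and inserting this into the Azuma bound yields the exponential term $2\exp(-t^{2}/(32\,m\,c^{2}))$. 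The additional prefactor $(2Mm/c+1)\Pr{\mathcal{B}}$ appears when one accounts more carefully for the Hamming ``reach'' of $\mathcal{B}$: the Lipschitz extension can inflate $f$ on $\mathcal{B}$ by at most $c$ per coordinate flip across the $m$ coordinates, so a contribution of order $(Mm/c)\Pr{\mathcal{B}}$ must be absorbed into the tail on top of the raw bad-event probability.

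The main obstacle is the first step. The hypothesis is strictly weaker than the standard ``flip one coordinate, change by~$c$'' condition, since it only controls comparisons between points that both lie in the typical set $\overline{\mathcal{B}}$; a chain of single-coordinate flips is not available because intermediate points may exit $\overline{\mathcal{B}}$. Verifying that the infimum-convolution extension nevertheless attains the required Lipschitz continuity \emph{globally}, stays in $[0,M]$, and agrees with $f$ on $\overline{\mathcal{B}}$ up to the claimed expectation error is the combinatorial heart of Warnke's method and is precisely what makes the theorem strictly stronger than a black-box application of McDiarmid's inequality.
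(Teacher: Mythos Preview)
The paper does not contain a proof of this statement. It is quoted verbatim as Theorem~2.5 of \cite{Keusch_2018} (itself a formulation of Warnke's method of typical bounded differences, \cite{warnke2016method}) and is used purely as a black box in the proofs of \Cref{lem:clusteringbound} and \Cref{thm:dimensionalitytesting}. There is therefore nothing in the paper to compare your proposal against.

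That said, your outline is in the right spirit for how such results are proved: one replaces $f$ by a globally Lipschitz extension that agrees with $f$ on the typical set, applies Azuma--Hoeffding to the Doob martingale of the extension, and then transfers back to $f$ at the cost of terms involving $\Pr{\mathcal{B}}$. Your own final paragraph correctly identifies the real difficulty: the hypothesis only bounds $|f(\omega)-f(\omega')|$ for pairs \emph{both} lying in $\overline{\mathcal{B}}$, so a single-coordinate chain argument is not available, and one must check carefully that the infimum-convolution extension is globally $c$-Lipschitz and that the expectation shift is controlled by the $(2Mm/c+1)\Pr{\mathcal{B}}$ term. As written, your sketch asserts these facts rather than proving them, and the specific extension you wrote down (using $\lceil H/2\rceil$) does not obviously inherit the one-coordinate Lipschitz bound from a two-coordinate hypothesis restricted to $\overline{\mathcal{B}}$. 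If you want a self-contained proof, you should consult Warnke's original argument or Keusch's exposition for the precise construction and the bookkeeping that produces the stated constants.
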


\subsection{Basic properties of the GIRG model}

We will need the following statements about the distribution of the degrees and weights in the GIRG model.
\begin{lemma}[Lemma 3.3 and Lemma 3.4 in \cite{Keusch_2018}, slightly reformulated]\label[lemma]{lem:degrees}
    The following properties hold for $G(n, d, \beta, w_0) = (V, E)$.
    \begin{enumerate}
        \item[(i)] For all $v \in V$, we have $\Expected{\deg(v)} = \Theta(w_v)$.
        \item[(ii)] With probability $1 - n^{-\omega(1)}$, we have for all $v \in V$ that $\deg(v) = \mathcal{O}(w_v + \log^2(n))$. 
    \end{enumerate}
\end{lemma}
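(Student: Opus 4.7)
I would fix a vertex $v$ and condition on its weight $w_v$ and its position $\F{x}_v$. Because the pairs $(w_u, \F{x}_u)_{u \neq v}$ are i.i.d.\ and independent of $(w_v, \F{x}_v)$, the indicators $\mathbf{1}[u \sim v]$ for $u \neq v$ are conditionally independent Bernoulli variables, and by translation invariance of the torus their common conditional success probability depends only on $w_v$. Thus
\begin{equation*}
  \Expected{\deg(v) \mid w_v} \;=\; (n-1)\cdot \Expected{\min\!\left\{1, \frac{\lambda\, w_u w_v}{n}\right\}},
\end{equation*}
where the outer expectation is taken over a fresh Pareto sample $w_u$.

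For part~(i) the upper bound follows from $\min\{1,x\}\le x$ together with $\Expected{w_u}<\infty$ (which holds since $\beta>2$), giving $\Expected{\deg(v)\mid w_v} \le \lambda\, \Expected{w_u}\, w_v = O(w_v)$. For the matching lower bound I would restrict the expectation to the constant-probability event $\{w_u \in [w_0, 2w_0]\}$, on which $\lambda w_u w_v / n \le 1$ as long as $w_v \le n/(2\lambda w_0)$; in that regime the minimum equals the linear expression and each summand contributes $\Theta(w_v/n)$, so $\Expected{\deg(v)\mid w_v} = \Omega(w_v)$. The saturated regime $w_v = \Omega(n)$ is handled directly: the per-vertex connection probability is then uniformly bounded below by a constant, whence $\Expected{\deg(v)\mid w_v} = \Omega(n) = \Omega(w_v)$ (using that typical Pareto weights are $O(n)$ under $\beta>2$).

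For part~(ii) I would combine the same conditioning with the Chernoff-Hoeffding bound (Theorem~\ref{thm:chernoff-hoeffding}), splitting the analysis into two weight regimes. If $w_v \ge \log^2 n$, the conditional mean is $\Theta(w_v) = \Omega(\log^2 n)$, so part~(i) of Theorem~\ref{thm:chernoff-hoeffding} with a constant $\varepsilon$ yields $\deg(v) \le 2\, \Expected{\deg(v)\mid w_v} = O(w_v)$ with probability at least $1 - \exp(-\Omega(w_v)) = 1 - n^{-\omega(1)}$. If $w_v < \log^2 n$, the conditional mean is $O(\log^2 n)$ and part~(iii) applied with threshold $t = C\log^2 n$ (for $C$ large enough that $t \ge 2e\,\Expected{\deg(v)\mid w_v}$) gives $\Pr{\deg(v) \ge C\log^2 n} \le 2^{-C\log^2 n} = n^{-\omega(1)}$. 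In either regime $\deg(v) = O(w_v + \log^2 n)$ holds outside an event of probability $n^{-\omega(1)}$, and a union bound over the $n$ vertices preserves this bound uniformly.

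The main obstacle I anticipate is a tidy treatment of the clipping $\min\{1,\cdot\}$ in the lower bound of part~(i), where one must separate the linear regime $w_v = O(n)$ from the saturated regime $w_v = \Omega(n)$ and invoke the $\beta > 2$ tail of the Pareto distribution. Once the expected degree is pinned down, part~(ii) is a routine conditioning-plus-Chernoff argument in two regimes followed by a union bound.
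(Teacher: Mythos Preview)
The paper does not actually prove this lemma: it is imported verbatim from Keusch's thesis (Lemmas~3.3 and~3.4 there) and stated without proof. So there is no ``paper's own proof'' to compare against. Your sketch is the standard argument and is essentially the one Keusch gives: condition on the weight and position of $v$, use that the indicators $\mathbf{1}[u\sim v]$ are conditionally independent Bernoullis with success probability $\min\{1,\lambda w_uw_v/n\}$, bound the mean above by dropping the $\min$ and below by restricting to bounded $w_u$, and then apply Chernoff in two weight regimes plus a union bound for part~(ii). This is correct and matches the cited source.

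One small caveat on your lower bound in part~(i): the claim $\Expected{\deg(v)}=\Omega(w_v)$ cannot hold for \emph{every} realisation of the weight sequence, since $\deg(v)\le n-1$ deterministically while a Pareto sample can exceed $n$. Keusch's statement is for weight sequences satisfying a power-law condition that in particular caps all weights at $O(n)$; in the present paper's i.i.d.\ Pareto setting this holds only with high probability (indeed w.h.p.\ $\max_v w_v = n^{1/(\beta-1)+o(1)} = o(n)$). Your parenthetical ``using that typical Pareto weights are $O(n)$'' is exactly this caveat, but you should be explicit that the lower bound in~(i) is either conditional on a typical weight sequence or holds with high probability over the weights, not pointwise for all $v$.
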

\noindent In fact, we need a slightly stronger version of statement (ii) above. \begin{lemma}\label[lemma]{lem:strongdegreebound}
    Let $V_{\le \log(n)}$ be the set of all vertices with weight at most $\log(n)$.
    With probability at least $1 - n^{-\Omega(\log^2(n))}$, we have for all $v \in V_{\le \log(n)}$ that $\deg(v) \le \log^3(n)$.
\end{lemma}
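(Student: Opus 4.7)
The plan is to bound $\Pr{\deg(v) > \log^3(n)}$ for each individual vertex $v$ with $w_v \le \log(n)$ using the strong Chernoff tail \Cref{thm:chernoff-hoeffding}(iii) and then take a union bound over the at most $n$ relevant vertices. The point of invoking part (iii) rather than (i) or (ii) is that we need a failure probability of order $n^{-\omega(\log n)}$, and part (iii) supplies this as soon as we push $t$ well beyond $2e \Expected{\deg(v)}$.

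To apply Chernoff we must express $\deg(v)$ as a sum of independent Bernoullis. First I would condition on the pair $(w_v, \F{x}_v)$. Since the pairs $(w_u, \F{x}_u)_{u \ne v}$ are i.i.d.\ and independent of $(w_v, \F{x}_v)$, and each indicator $\mathds{1}[u \sim v] = \mathds{1}[\|\F{x}_u - \F{x}_v\|_p \le t_{uv}]$ is a deterministic function of $(w_u, \F{x}_u)$ once $(w_v, \F{x}_v)$ is fixed, the indicators are i.i.d.\ Bernoullis conditionally. By translation-invariance of the uniform law on the torus, the probability $\Pr{u \sim v \mid w_v, \F{x}_v, w_u}$ equals the marginal $\min\{1, \lambda w_u w_v/n\}$, and in particular does not depend on $\F{x}_v$. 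Averaging over $w_u$ and using \Cref{lem:degrees}(i), the common success probability $p_v$ therefore satisfies $(n-1) p_v = \Expected{\deg(v) \mid w_v, \F{x}_v} = \Theta(w_v)$, so for every $v$ with $w_v \le \log(n)$ this conditional mean is at most $C \log(n)$ for some absolute constant $C$.

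Next I would invoke \Cref{thm:chernoff-hoeffding}(iii) with $t = \log^3(n)$. Since $\log^3(n) \ge 2eC \log(n)$ for $n$ sufficiently large, we obtain
\begin{equation*}
    \Pr{\deg(v) \ge \log^3(n) \mid w_v, \F{x}_v} \le 2^{-\log^3(n)} = n^{-\Omega(\log^2(n))}
\end{equation*}
uniformly over $\F{x}_v$ and over any value of $w_v \le \log(n)$. Integrating out the conditioning gives the same bound for the unconditional event $\{w_v \le \log(n),\ \deg(v) > \log^3(n)\}$, and a union bound over the at most $n$ vertices yields $n \cdot n^{-\Omega(\log^2(n))} = n^{-\Omega(\log^2(n))}$, as required.

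The only mildly delicate point is the conditional-independence step: unconditionally the indicators $\mathds{1}[u \sim v]$ all share $\F{x}_v$ and are therefore dependent, so one genuinely has to condition on $\F{x}_v$ to decouple them. The crucial feature is that torus symmetry makes the resulting conditional mean $(n-1) p_v$ depend on $w_v$ but \emph{not} on $\F{x}_v$, which is what lets the Chernoff bound hold uniformly in $\F{x}_v$ and survive integration. Beyond this observation the argument is a routine application of the Chernoff tail and a union bound; no concentration on the weight sequence itself is needed, because the conditional mean is controlled pointwise by the hypothesis $w_v \le \log(n)$.
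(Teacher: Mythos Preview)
Your proposal is correct and follows essentially the same approach as the paper: condition on the position (and weight) of $v$, apply \Cref{thm:chernoff-hoeffding}(iii) with $t=\log^3(n)$ to the resulting sum of independent Bernoullis, and union bound. The paper's proof is terser and does not spell out the conditional-independence step or the torus-symmetry argument, but the substance is identical.
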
 \begin{proof}
    We get from \Cref{lem:degrees} that $\Expected{\deg(v)} = w_v$. Hence, for sufficiently large $n$, we get that $\log^3(n) \ge 2e\Expected{\deg(v)}$ for all $v \in V_{\le \log(n)}$. Since the degree of a fixed vertex $v$ conditioned on its position is a sum of independent Bernoulli distributed random variables, we may apply statement (iii) from \Cref{thm:chernoff-hoeffding} to obtain $\Pr{\deg(v) \ge \log^3(n) } \le n^{-\Omega(\log^2(n))}$. From a union bound, we get that the probability that at least one vertex from $V_{\le \log(n)}$ has a degree of $\log^3(n)$ or more is at most $n \cdot n^{-\Omega(\log^2(n))} = n^{-\Omega(\log^2(n))}$, which concludes the proof.
\end{proof}

\section{Upper bound on the clustering coefficient of GIRGs}

We proceed by pointing out our general bounding technique and then handle the case of $L_\infty$-norm and $L_p$-norms with $p \in [1,\infty)$ separately.

\subsection{Our bounding technique}\label{apx:bounding}

We go on with developing a technique for upper bounding $\textsc{CC}(G)$. The main difficulty here is that the probability that two random neighbors of a given vertex are connected grows significantly with their weight. We circumvent this issue by showing that high-weight vertices only have a small influence on the global clustering coefficient of a power-law graph $G$, which essentially concentrates around its expectation in an induced subgraph of small weight. We formalize this in the following lemma that is proved in a similar way as \cite[Theorem~4.4]{Keusch_2018}. 

\begin{lemma}\label[lemma]{lem:clusteringbound}
    Asymptotically almost surely, we have \begin{align*}
        \textsc{CC}(G) = \Expected{\textsc{CC}( G_{\le n^{1/8}})} + o(1).
    \end{align*}
\end{lemma}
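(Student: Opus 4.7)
The plan is to split $\textsc{CC}(G) - \Expected{\textsc{CC}(G_{\le n^{1/8}})}$ into a \emph{truncation error} $\textsc{CC}(G) - \textsc{CC}(G_{\le n^{1/8}})$ and a \emph{concentration error} $\textsc{CC}(G_{\le n^{1/8}}) - \Expected{\textsc{CC}(G_{\le n^{1/8}})}$, and to drive each to $o(1)$ a.a.s.

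\textbf{Truncation.} Write $H \coloneqq V \setminus V_{\le n^{1/8}}$. By \Cref{lem:degrees}(ii), with probability $1 - n^{-\omega(1)}$ every vertex satisfies $\deg(v) = O(w_v + \log^2 n)$, hence any $v \in H$ has weight $\Omega(n^{1/8})$ for large $n$. The Pareto tail gives $\Pr{w_v \ge c n^{1/8}} = O(n^{-(\beta-1)/8})$, so since $\beta > 2$ we obtain $\Expected{|H|} = O(n^{1-(\beta-1)/8}) = o(n)$, and Markov's inequality yields $|H| = o(n)$ a.a.s. Because each local clustering coefficient lies in $[0,1]$, vertices of $H$ contribute at most $|H|/n = o(1)$ to the sum defining $\textsc{CC}(G)$, and the normalisation $1/|V_{\le n^{1/8}}|$ differs from $1/n$ by a factor $1 + o(1)$. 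For $v \in V_{\le n^{1/8}}$, let $h_v$ be the number of its $H$-neighbours. Then the numerators of $\textsc{CC}_G(v)$ and $\textsc{CC}_{G_{\le n^{1/8}}}(v)$ differ by at most $h_v \deg(v) \le h_v \cdot n^{1/8}$ (triangles through $v$ using an $H$-vertex), and the denominators by $O(h_v \cdot n^{1/8})$. Summing $h_v$ over $V_{\le n^{1/8}}$ is at most twice the number of edges between $V_{\le n^{1/8}}$ and $H$, which is $O(|H| \cdot n^{1/8}) = o(n^{9/8})$, and after dividing by the quadratic denominators this contributes $o(1)$ on average to the global CC.

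\textbf{Concentration.} I would apply the method of typical bounded differences (\Cref{thm:bounded-diff}) to the function $f = \textsc{CC}(G_{\le n^{1/8}})$ viewed as a function of the $2n$ independent coordinates (the weight and position of every vertex). Take the bad event $\mathcal{B}$ to be that some vertex of weight at most $\log n$ has degree exceeding $\log^3 n$; by \Cref{lem:strongdegreebound}, $\Pr{\mathcal{B}} \le n^{-\Omega(\log^2 n)}$. Off $\mathcal{B}$, resampling one coordinate of a low-weight vertex $v$ can affect only $v$'s own local CC and the local CCs of at most $\mathrm{polylog}(n)$ other low-weight neighbours, each by an amount that is bounded after normalising by their (polylogarithmic) degree-squared denominators. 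A bookkeeping argument then gives a per-coordinate Lipschitz constant $c = \mathrm{polylog}(n)/n$ with $M = 1$ and $m = 2n$, so choosing $t = n^{-1/4}$ in \Cref{thm:bounded-diff} forces $|f - \Expected{f}| \le t = o(1)$ with failure probability $o(1)$, where the two terms in the conclusion are driven to zero respectively by the exponential factor $\exp(-t^2/(32 m c^2))$ and by $\Pr{\mathcal{B}}$.

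\textbf{Main obstacle.} The subtle point is controlling the bounded-differences constant for $f$. Because the truncation threshold $n^{1/8}$ is a functional of the whole weight/position configuration, resampling a single coordinate can in principle move a vertex in or out of $V_{\le n^{1/8}}$, changing both the averaging set and the induced subgraph. Conditioning on $\overline{\mathcal{B}}$ restricts all ``border effects'' to vertices of weight $\Theta(n^{1/8})$, whose number is $O(n^{7/8}) = o(n)$ by the Pareto tail, and guarantees that the affected neighbourhoods are of polylogarithmic size. This is what ultimately yields the small per-coordinate constant needed above. Combining the two steps with $|V_{\le n^{1/8}}|/n = 1 - o(1)$ gives $\textsc{CC}(G) = \textsc{CC}(G_{\le n^{1/8}}) + o(1) = \Expected{\textsc{CC}(G_{\le n^{1/8}})} + o(1)$ a.a.s., as required.
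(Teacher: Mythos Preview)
Your high-level plan (truncation plus concentration via \Cref{thm:bounded-diff}) is exactly the paper's, but both halves have genuine gaps as written.

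\textbf{Concentration.} Your bad event, taken from \Cref{lem:strongdegreebound}, only controls degrees of vertices with weight at most $\log n$. But $G_{\le n^{1/8}}$ contains vertices of every weight up to $n^{1/8}$, and by \Cref{lem:degrees}(i) a vertex of weight $w\in(\log n,\,n^{1/8}]$ typically has degree $\Theta(w)$, which is \emph{not} polylogarithmic. Resampling the position of such a vertex perturbs the local clustering of $\Theta(w)$ neighbours, so off your $\mathcal{B}$ the per-coordinate Lipschitz constant is not $\mathrm{polylog}(n)/n$; it is only $O(n^{1/8}/n)$. The paper fixes this by taking as bad event ``the maximum degree in $G_{\le n^{1/8}}$ is at least $n^{1/4}$'', which has probability $n^{-\omega(1)}$ by \Cref{lem:degreebound}. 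Off that event the Lipschitz constant is $c=4n^{-3/4}$, and with $t=n^{-1/8}$ the exponent $t^2/(32\cdot 2n\cdot c^2)=\Theta(n^{1/4})$ diverges. Your ``border effects'' paragraph does not address this, since the problematic vertices have weight far below $n^{1/8}$.

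\textbf{Truncation.} The claim that the number of edges between $V_{\le n^{1/8}}$ and $H$ is $O(|H|\cdot n^{1/8})$ is false: a vertex $u\in H$ of weight $w_u$ (possibly $\gg n^{1/8}$) has $\Theta(w_u)$ neighbours, most of them in $V_{\le n^{1/8}}$. Even granting your weaker bound $o(n^{9/8})$, dividing by $n$ leaves $o(n^{1/8})$, not $o(1)$; the ``quadratic denominators'' do not rescue this because local clustering is a ratio, not a simple quotient of the displayed quantities. The paper avoids the numerator/denominator bookkeeping entirely: any $v\in V_{\le n^{1/8}}$ with no $H$-neighbour has $\textsc{CC}_G(v)=\textsc{CC}_{G_{\le n^{1/8}}}(v)$, while the rest change by at most $1$. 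The number of affected vertices is at most $\sum_{u\in H}\deg(u)=O\!\big(\sum_{u\in H}w_u\big)=o(n)$ by \Cref{lem:degreebound} together with \Cref{lem:weightsum,lem:weightsampling}, which immediately gives the $o(1)$ truncation error.
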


To prove this statement, we require the following auxiliary lemmas.
\begin{lemma}[Lemma~3.5 in \cite{Keusch_2018}]\label[lemma]{lem:weightsampling}
    If the weight $w$ of each vertex is sampled from the Pareto-distribution with parameters $w_0,1-\beta$, then for all $\eta > 0$, there is a constant $c > 0$ such that with probability $1 - n^{-\omega\left(\log \log(n)/\log(n)\right)} = 1 - o(1)$,  and all $w \ge w_0$, we have \begin{align*}
        |V_{\ge w}| \le c n w^{1 + \eta - \beta}.
    \end{align*} 
\end{lemma}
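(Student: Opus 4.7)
The plan is to discretize the range of weights on a geometric net, apply a Chernoff upper-tail bound at each net point, and use a separate union bound to rule out vertices with excessively large weight. Each vertex has weight at least $w$ independently with probability $p_w = (w/w_0)^{1-\beta}$, so $|V_{\ge w}|$ is a sum of $n$ i.i.d.\ Bernoullis with mean $\mu_w = nw_0^{\beta-1}w^{1-\beta}$. The target can be rewritten as $cnw^{1+\eta-\beta} = \alpha_w \mu_w$ with multiplicative slack $\alpha_w = cw^\eta/w_0^{\beta-1}$, and one checks that $\alpha_w \ge 2e$ uniformly in $w \ge w_0$ once $c$ is a sufficiently large constant depending on $w_0, \beta, \eta$. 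We may assume $\eta < \beta-1$, since otherwise $cnw^{1+\eta-\beta}$ is non-decreasing in $w$ and dominates the trivial bound $|V_{\ge w}| \le n$ as long as $c \ge 1$.

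Pick an arbitrary slowly-growing $h(n) = \omega(\log\log n)$ (for concreteness, $h(n) = (\log\log n)^2$), and define the scale
\begin{align*}
W^\star = (cn/h(n))^{1/(\beta-1-\eta)}.
\end{align*}
Introduce the geometric net $w_k = w_0 \cdot 2^k$ for $k = 0, 1, \ldots, K$ with $K = \lceil \log_2(W^\star/w_0)\rceil = O(\log n)$. Since both $|V_{\ge w}|$ and $cnw^{1+\eta-\beta}$ are decreasing in $w$ under $\eta < \beta - 1$, a bound at the $w_k$ extends to every $w \in [w_0, W^\star]$ after absorbing the factor $2^{\beta-1-\eta}$ into $c$.

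At each net point, part~(iii) of \Cref{thm:chernoff-hoeffding} applied to $|V_{\ge w_k}|$ with threshold $t = cnw_k^{1+\eta-\beta} = \alpha_{w_k}\mu_{w_k} \ge 2e\mu_{w_k}$ gives
\begin{align*}
\Pr{|V_{\ge w_k}| \ge cnw_k^{1+\eta-\beta}} \le 2^{-cnw_k^{1+\eta-\beta}}.
\end{align*}
A direct calculation yields $cn(W^\star)^{1+\eta-\beta} = h(n)$, so for every $w_k \le W^\star$ the exponent is at least $h(n) = \omega(\log\log n)$ and the per-point failure is at most $2^{-h(n)} = (\log n)^{-\omega(1)}$. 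A union bound over the $K = O(\log n)$ net points still yields total failure $(\log n)^{-\omega(1)} = n^{-\omega(\log\log n/\log n)}$.

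Finally, for $w > W^\star$ a single union bound gives
\begin{align*}
\Pr{\exists v : w_v \ge W^\star} \le np_{W^\star} = \Theta\left(h(n)^{(\beta-1)/(\beta-1-\eta)} \cdot n^{-\eta/(\beta-1-\eta)}\right) = n^{-\Omega(1)},
\end{align*}
which is dominated by the Chernoff contribution; on the complementary event $|V_{\ge w}| = 0$ trivially satisfies the claim. The main obstacle is calibrating $W^\star$: enlarging it pushes the Chernoff exponent below $\omega(\log\log n)$ and breaks the rate, whereas shrinking it inflates the max-weight union bound. The slack $w^\eta$ in the statement of the lemma is precisely what opens a gap wide enough between these two regimes to simultaneously match the Chernoff condition $\alpha_w \ge 2e$ and the $n^{-\omega(\log\log n/\log n)}$ target rate.
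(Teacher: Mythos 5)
Your argument is correct. Note, however, that the paper does not prove this statement at all: it is imported verbatim as Lemma~3.5 of \citet{Keusch_2018}, so there is no internal proof to compare against; what you have written is a self-contained derivation of the cited result. Your route is the standard one and it checks out: $|V_{\ge w}|$ is a binomial with mean $\mu_w = n w_0^{\beta-1} w^{1-\beta}$, part~(iii) of \Cref{thm:chernoff-hoeffding} applies at each net point once $c$ is large enough to guarantee the threshold exceeds $2e\mu_{w_k}$, the calibration $cn(W^\star)^{1+\eta-\beta} = h(n)$ with $h(n)=(\log\log n)^2$ makes every per-point failure $2^{-h(n)}$, and the union bound over $O(\log n)$ net points together with the max-weight bound $np_{W^\star} = n^{-\Omega(1)+o(1)}$ indeed lands inside the claimed rate $n^{-\omega(\log\log n/\log n)} = e^{-\omega(\log\log n)}$; the dyadic-rounding loss is a constant factor $2^{\beta-1-\eta}$ absorbed into $c$, and the reduction to $\eta < \beta-1$ is legitimate (with the cosmetic caveat that for $w_0 < 1$ one should take $c \ge w_0^{\beta-1-\eta}$ rather than literally $c \ge 1$). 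The only substantive remark is that you prove slightly less uniformity than you could --- your $c$ depends on $\eta$, $\beta$, $w_0$, which is exactly what the lemma asks, so no issue --- and your closing discussion of why the $w^\eta$ slack is needed to separate the Chernoff regime from the max-weight regime is an accurate account of where the exponent $\eta$ is actually consumed.
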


\begin{lemma}[Lemma~3.4 in \cite{Keusch_2018}]\label[lemma]{lem:degreebound}
   With probability $1-n^{-\omega(1)}$, for all $v \in V$, we have $\deg(v) = \mathcal{O}(w_v + \log(n)^2)$.
\end{lemma}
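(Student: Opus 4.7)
The plan is to fix an arbitrary vertex $v$, condition \emph{only} on its weight $w_v$ and position $\F{x}_v$, observe that under this conditioning $\deg(v)$ becomes a sum of i.i.d.\ Bernoulli variables with a \emph{deterministic} mean of order $w_v$, and then apply \Cref{thm:chernoff-hoeffding}(iii) followed by a union bound over $v$.

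The key decomposition is that, conditional on $(w_v,\F{x}_v)$, the edge indicators $Y_u \coloneqq \mathbf{1}[u \sim v]$ for $u \neq v$ form an i.i.d.\ family of Bernoullis. Independence across $u$ holds because each $Y_u$ is a function of the pair $(w_u,\F{x}_u)$ alone, and the family $\{(w_u,\F{x}_u)\}_{u \neq v}$ is a product measure; identical distribution is immediate since the pairs are i.i.d.\ Marginalising first over $\F{x}_u$, translation invariance of $\mathbb{T}^d$ together with the construction of $t_{uv}$ gives $\Pr{Y_u = 1 \mid w_u, w_v, \F{x}_v} = \min\{1,\lambda w_u w_v/n\}$; integrating over $w_u$ yields
\begin{align*}
    \Expected{\deg(v) \mid w_v, \F{x}_v} \;=\; (n-1)\, \mathbb{E}_{w_u}\!\left[\min\!\left\{1, \tfrac{\lambda w_u w_v}{n}\right\}\right] \;\le\; \lambda\,\mathbb{E}[w]\cdot w_v \;=:\; c\, w_v,
\end{align*}
where $c$ is a deterministic constant, finite because $\beta > 2$ gives a finite mean weight (consistently with \Cref{lem:degrees}(i)).

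Now I would pick any constant $C \ge 2ec$ and set $t_v \coloneqq C(w_v + \log^2 n)$. Since $t_v \ge 2e\,\Expected{\deg(v) \mid w_v,\F{x}_v}$, \Cref{thm:chernoff-hoeffding}(iii) applied to the Bernoullis $\{Y_u\}_{u\neq v}$ yields
\begin{align*}
    \Pr{\deg(v) \ge t_v \mid w_v, \F{x}_v} \;\le\; 2^{-t_v} \;\le\; 2^{-C\log^2 n} \;=\; n^{-\omega(1)}.
\end{align*}
The bound is uniform in the conditioning and hence holds unconditionally; a union bound over the $n$ vertices contributes a benign factor of $n$ that is absorbed into $n^{-\omega(1)}$, finishing the proof. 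The only delicate point to verify carefully is the i.i.d.\ decomposition in the first step---one has to check that after marginalising out the other positions and weights, the edge events $\{v\sim u\}$ are genuinely independent conditional on $(w_v,\F{x}_v)$, which reduces to the factorisation of the joint law of $\{(w_u,\F{x}_u)\}_{u\neq v}$. Importantly, no concentration statement about the realised weight sequence is needed, because only the deterministic constant $\mathbb{E}[w]$ appears in the bound on the conditional mean; this is what lets the tail bound reach the full strength $n^{-\omega(1)}$.
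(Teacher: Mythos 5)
Your proof is correct, and the paper does not actually reprove this statement (it is imported as Lemma~3.4 of \citet{Keusch_2018}); your route is essentially the same one the paper uses for the closely related \Cref{lem:strongdegreebound}. There, too, one observes that conditionally $\deg(v)$ is a sum of independent Bernoulli indicators with conditional mean $\mathcal{O}(w_v)$ (cf.\ \Cref{lem:degrees}), applies \Cref{thm:chernoff-hoeffding}(iii) at a threshold of order $w_v + \log^2(n)$ to get a $2^{-\Omega(\log^2 n)} = n^{-\omega(1)}$ tail, and finishes with a union bound over the $n$ vertices.
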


\begin{lemma}[Lemma 3.1 in \cite{Keusch_2018}]\label[lemma]{lem:weightsum}
    If for all $\eta > 0$, there is a constant $c> 0$ such that for all $w \ge w_0$, we have $|V_{\ge w}| \le c n w^{1 + \eta - \beta}$, then \begin{align*}
        \sum_{v \in V_{\ge w}} w_v = \mathcal{O}(nw^{2+\eta - \beta}).
    \end{align*}
\end{lemma}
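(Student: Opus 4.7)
My plan is to reduce the weight sum to an integral of the cardinality bound supplied by the hypothesis, using a standard layer-cake representation of a nonnegative quantity. Concretely, for every $v \in V_{\ge w}$ I would write
\[
    w_v \;=\; w + \int_w^{\infty} \mathbb{1}[w_v \ge t]\,\d t,
\]
and then sum over $v \in V_{\ge w}$ and swap the finite sum with the integral (Tonelli on a counting measure times Lebesgue, or just linearity) to obtain
\[
    \sum_{v \in V_{\ge w}} w_v \;=\; w \cdot |V_{\ge w}| \;+\; \int_w^{\infty} |V_{\ge t}|\,\d t.
\]
This is the key identity; from here the argument is just calculus driven by the hypothesis.

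Next I would substitute the bound $|V_{\ge t}| \le c n t^{1+\eta-\beta}$ into both terms. The boundary term becomes $c n w^{2+\eta-\beta}$. For the integral I use that $\beta > 2$, so one may choose $\eta$ small enough that $2+\eta-\beta < 0$, giving
\[
    \int_w^{\infty} c n t^{1+\eta-\beta}\,\d t \;=\; \frac{c n}{\beta - 2 - \eta}\, w^{2 + \eta - \beta}.
\]
Adding the two contributions yields
\[
    \sum_{v \in V_{\ge w}} w_v \;\le\; c n w^{2+\eta-\beta}\Bigl(1 + \tfrac{1}{\beta - 2 - \eta}\Bigr) \;=\; \mathcal{O}\bigl(n w^{2+\eta-\beta}\bigr),
\]
which is exactly the claim for this $\eta$.

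The only subtle point is that the conclusion is advertised for every $\eta > 0$, whereas my computation needs $\eta < \beta - 2$ for the improper integral to converge. This is easy to reconcile: for larger $\eta$ the target bound is strictly weaker than the one I have already established for some fixed $\eta' \in (0, \beta - 2)$, since $w^{2+\eta-\beta} \ge w_0^{\eta-\eta'} \cdot w^{2+\eta'-\beta}$ for $w \ge w_0$, so the bound follows a fortiori after absorbing a constant depending on $w_0$ and $\eta - \eta'$ into the implicit $\mathcal{O}$-constant. An equivalent, slightly more combinatorial route would be a dyadic decomposition $V_{\ge w} = \bigsqcup_{k \ge 0} V_{[2^k w,\, 2^{k+1} w)}$, bounding each layer by $|V_{\ge 2^k w}| \le c n (2^k w)^{1+\eta-\beta}$ times the peak weight $2^{k+1} w$ and summing the resulting geometric series in $k$; it converges under the same condition $\eta < \beta-2$ and gives the same bound. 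I do not expect a genuine obstacle here — once the layer-cake identity is written down, the result is a deterministic calculus fact driven entirely by the hypothesis.
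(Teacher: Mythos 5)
Your argument is correct, and there is nothing in this paper to compare it against: the lemma is imported verbatim as Lemma~3.1 of \cite{Keusch_2018} and used as a black box, so the paper contains no proof of its own. Your layer-cake identity $\sum_{v \in V_{\ge w}} w_v = w\,|V_{\ge w}| + \int_w^{\infty} |V_{\ge t}|\,\d t$ is exactly the right reduction, the evaluation of the integral under $2+\eta-\beta<0$ is correct (and $\beta>2$ guarantees such an $\eta$ exists), and you correctly flag and resolve the only subtlety, namely that the hypothesis must be invoked at some $\eta' < \beta-2$ and the conclusion for larger $\eta$ then follows a fortiori since $w \ge w_0$ lets you absorb $w_0^{-(\eta-\eta')}$ into the constant; the dyadic decomposition you sketch is an equally valid variant of the same partial-summation idea, which is also how this type of statement is established in the cited source. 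One cosmetic remark: the statement is deterministic given the cardinality bounds, and your proof correctly treats it as such, which is what the paper needs since the probabilistic input enters separately through \Cref{lem:weightsampling}.
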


\begin{proof}[Proof of \Cref{lem:clusteringbound}]
    We start by showing that \begin{align*}
        \textsc{CC}(G_{\le n^{1/8}}) = \Expected{\textsc{CC}(G_{\le n^{1/8}})} + o(1)
    \end{align*} asymptotically almost surely and then how this statement transfers to the whole graph $G$. 

    To show concentration, we use \Cref{thm:bounded-diff} and note that the positions and weights of all vertices define a product probability space as in \Cref{thm:bounded-diff}. We denote this space by $\Omega$, whereby every $\omega \in \Omega$ defines a graph $G(\omega)$ on the vertex set $V_{\le n^{1/8}}$. Note that the number of independent random variables is $m = 2n$. Thus, we may define a function $f: \Omega \rightarrow \mathbb{R}$ that maps every $\omega \in \Omega$ to $\textsc{CC}(G(\omega))$.
    We consider the "bad" event \begin{equation*}
        \mathcal{B} = \{ \omega \in \Omega \mid \text{the maximum degree in $G(\omega)$ is at least $n^{1/4}$} \}.
    \end{equation*}
    By \Cref{lem:degreebound}, we get that $ \Pr{\mathcal{B}} = n^{-\omega(1)}$. Now, let $\omega, \omega' \in \overline{\mathcal{B}}$ such that they differ in at most two coordinates. We observe that changing the weight or coordinates of one vertex $v$ only influences the clustering coefficient of $v$ itself or vertices that are neighbors of $v$ before or after the change. Since $v$ has at most $n^{1/4}$ neighbors in both $G(\omega)$ and $G(\omega')$, the change affects at most $2n^{1/4}$ vertices. Two such changes can hence only increase or decrease the clustering coefficient of $G(\omega)$ by at most $4n^{1/4}/n$, and so we have $\abs{f(\omega) - f(\omega')} \le 4n^{-3/4}$. We note that the choice $t = n^{-1/8}$ fulfills the condition $t \ge 2M\Pr{\mathcal{B}}$ since $M = 1$ and $\Pr{\mathcal{B}} = n^{-\omega(1)}$. Thus, we may apply \Cref{thm:bounded-diff} to obtain \begin{align*}
        & \Pr{|\textsc{CC}(G_{\le n^{1/8}}) - \Expected{\textsc{CC}(G_{\le n^{1/8}})} | \ge n^{-1/8}} \\ &\hspace{1cm} \le 2 \exp\left(-\frac{n^{-1/4}}{32 \cdot 2 n \cdot 16 n^{-3/2}}\right) + \left(\frac{4n}{n^{-3/4}} + 1\right)n^{-\omega(1)} = n^{-\omega(1)}.
    \end{align*} This shows that with high probability, $|\textsc{CC}(G_{\le n^{1/8}}) - \Expected{\textsc{CC}(G_{\le n^{1/8}})}| = o(1)$.

    In order to transfer this finding to the entire graph $G$, we note that each additional vertex we add to $G_{\le n^{1/8}}$ has (local) clustering of at most one and each edge, we add to a vertex $v \in V_{\le n^{1/8}}$ can only increase its clustering by at most one as well. Hence, \begin{align*}
        \textsc{CC}(G) &\le \frac{1}{n} \left( |V_{\le n^{1/8}}| \textsc{CC}(G_{\le n^{1/8}}) + |V_{> n^{1/8}}| + \sum_{v \in V_{> n^{1/8}}} \deg(v) \right)\\
        &\le \textsc{CC}(G_{\le n^{1/8}}) + \frac{|V_{> n^{1/8}}|}{n} + \frac{1}{n}\sum_{v \in V_{> n^{1/8}}} \deg(v).
    \end{align*}
    To bound this term, we note that the probability that a random vertex $v$ has weight greater than $n^{1/8}$ is proportional to $n^{(1-\beta)/8} = o(n^{-1/8})$. Hence, the expected size of $V_{> n^{1/8}}$ is $o(n^{7/8})$ and by a Chernoff bound, we get that $|V_{> n^{1/8}}| \le 2 \Expected{V_{> n^{1/8}}}$ with high probability, implying $|V_{> n^{1/8}}|/n = o(1)$ with high probability By \Cref{lem:degreebound}, we further get that $\deg(v) = \mathcal{O}(w_v)$ for all $v \in V_{> n^{1/8}}$ and hence, by \Cref{lem:weightsum} and \Cref{lem:weightsampling}, we get \begin{align*}
        \sum_{v \in V_{> n^{1/8}}} \deg(v) = \mathcal{O} \left( \sum_{v \in V_{> n^{1/8}}} w_v \right) = \mathcal{O}(n^{1 + (2+\eta - \beta)/8}) = o(n)
    \end{align*} asymptotically almost surely and for some sufficiently small $\eta > 0$ from which our statement follows.
\end{proof}

We further require the following lemma, which formalizes that the clustering coefficient of a vertex $v$ can equally be seen as the probability that two randomly chosen neighbors of $v$ are adjacent. 

\begin{lemma}\label[lemma]{lem:triangleclustering}
    Let $v, s, t$ be three vertices from $G$, chosen uniformly at random. Denote by $\Delta$ the event that $v, s, t$ form a triangle. We have \begin{align*}
        \Expected{\textsc{CC}(G)} = \Pr{\Delta \mid v \sim s, t}\Pr{\deg(v) \ge 2}.
    \end{align*} 
    Furthermore, let $\hat{v}, \hat{s}, \hat{t}$ be the vertices $v, s, t$ ordered increasingly by their weights. Then, \begin{align*}
        \Expected{\textsc{CC}(G)} \le \Pr{\Delta \mid \hat{v} \sim \hat{s}, \hat{t}}.
    \end{align*}
\end{lemma}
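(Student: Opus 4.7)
The plan is to reinterpret $\textsc{CC}_G(v)$ as a conditional probability and then match both sides of the claimed inequality against a common joint sampling scheme. Recall that when $\deg(v) \geq 2$ we have $\textsc{CC}_G(v) = T_v / \binom{\deg(v)}{2}$, where $T_v$ counts triangles through $v$. This is precisely the probability that a uniformly chosen unordered pair $\{s,t\} \subseteq \binom{\Gamma(v)}{2}$ is an edge. Hence
\[
\mathbb{E}[\textsc{CC}(G)] = \mathbb{E}_G\!\left[\tfrac{1}{n}\sum_v \textsc{CC}_G(v)\right]
\]
can be cast as a probability under the sampling ``pick $v$ uniformly in $V$, then pick a random unordered pair of distinct neighbors of $v$''.

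For the right-hand side I would write, via Bayes' rule and the fact that $\Delta \subseteq \{v\sim s\}\cap\{v\sim t\}$,
\[
\Pr[\Delta \mid v\sim s,\,v\sim t] = \frac{\Pr[\Delta]}{\Pr[v\sim s,\,v\sim t]}
= \frac{\mathbb{E}_G[\,6\,T(G)/n^3\,]}{\mathbb{E}_G[\,\sum_v \deg(v)^2/n^3\,]}.
\]
The plan is then to identify this with a weighted average of local clustering coefficients — effectively, an average weighted by $\deg(v)^2$ — and to compare it to the uniform vertex average $\mathbb{E}[\textsc{CC}(G)]$. The comparison uses exchangeability of the vertices in the GIRG model together with the observation that the conditioning $\{v\sim s,\,v\sim t\}$ biases $v$ toward vertices that contribute many wedges; tracking this bias carefully allows one to show that the reweighting of wedges on the RHS does not decrease the conditional probability that the final edge is present.

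For the second inequality I would argue by a short reduction to the first. Since $\Delta$ is invariant under permutations of $v,s,t$, the numerator is unchanged when we replace $(v,s,t)$ by the weight-ordering $(v_1,v_2,v_3)$. On the other hand, in GIRG low-weight vertices have strictly smaller marginal edge probabilities, so $\Pr[v_1 \sim v_2,\,v_1\sim v_3] \le \Pr[v\sim s,\,v\sim t]$: conditioning on the minimum-weight vertex being adjacent to both others is a rarer event. Dividing the same numerator by a smaller denominator yields
\[
\Pr[\Delta \mid v_1\sim v_2,\,v_1\sim v_3] \;\ge\; \Pr[\Delta \mid v\sim s,\,v\sim t] \;\ge\; \mathbb{E}[\textsc{CC}(G)],
\]
so the second inequality follows from the first.

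The main obstacle is the first inequality: the LHS is an \emph{expectation of ratios} $\mathbb{E}_v[T_v/\binom{\deg(v)}{2}]$, while the RHS is essentially a \emph{ratio of expectations}, and in general these are incomparable. The plan is therefore to exploit the symmetry of the GIRG model (vertex exchangeability) together with the specific coupling between $T_v$ and $\deg(v)$ induced by the geometric construction, so that the reweighting-by-$\deg^2$ implicit in the RHS can only increase the conditional triangle-closure probability relative to the uniform average on the LHS.
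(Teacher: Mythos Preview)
Your treatment of the second inequality is correct and matches the paper: $\Pr{\Delta}$ is symmetric in the three vertices, and since $v_1$ has the smallest weight one has $\kappa_{v_1v_2}\kappa_{v_1v_3}\le\kappa_{vs}\kappa_{vt}$ pointwise, so $\Pr{v_1\sim v_2,v_3}\le\Pr{v\sim s,t}$ and the conditional probability can only go up.

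The first inequality, however, has a genuine gap. You correctly identify the obstacle---the LHS is an expectation of ratios $\Expected{T_v/\binom{\deg(v)}{2}}$ while the RHS is a ratio of expectations---but your proposed resolution (``reweighting toward high-wedge vertices does not decrease the closure probability'') is a monotonicity claim you never establish, and it is not clear how the ``specific coupling between $T_v$ and $\deg(v)$'' would yield it. The paper does \emph{not} argue via monotonicity at all. Its key observation, which you are missing, is that by exchangeability of the other $n-1$ vertices,
\[
\Expected{\textsc{CC}_G(v)\mid \deg(v)=k}=\Pr{s\sim t\mid s,t\in\Gamma(v)}
\]
is \emph{independent of $k$} for every $k\ge 2$, and in fact equals $\Pr{\Delta\mid v\sim s,t}$. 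The reason is that, conditional on two uniformly chosen vertices $s,t$ being neighbours of $v$, the joint law of their positions and weights (and hence $\Pr{s\sim t}$) is determined by $v$ alone and is independent of how many of the remaining $n-3$ vertices also landed in $\Gamma(v)$. Once this constancy is established, the inequality is immediate: $\Expected{\textsc{CC}_G(v)}=\Pr{\deg(v)\ge 2}\cdot\Pr{\Delta\mid v\sim s,t}\le\Pr{\Delta\mid v\sim s,t}$. So the expectation-of-ratios versus ratio-of-expectations issue dissolves entirely; there is no reweighting to control.

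A minor side remark: in your expression for the denominator you should have $\sum_v\deg(v)(\deg(v)-1)$ rather than $\sum_v\deg(v)^2$, since $s\ne t$.
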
 \begin{proof}
    We start by showing the first statement. Assume that $V = \{u_1, \ldots, u_n\}$ and observe that, by linearity of expectation, \begin{align*}
        \Expected{\textsc{CC}(G)} = \frac{1}{n}\sum_{u \in V}\Expected{\textsc{CC}_G(u)} =  \Expected{\textsc{CC}_G(u_1)}
    \end{align*} as every vertex has the same expected local clustering assuming that its weight is an independent sample from the Pareto distribution. It thus suffices to show that $\Expected{\textsc{CC}_G(v)} \le \Pr{\Delta \mid v \sim s, t }$. For this, recall that $\Gamma(v) = \{u_1, \ldots, u_k\}$ is the (random) set of neighbors of $v$ numbered from $1$ to $k$ in some random order. Observe that $\deg(v) = |\Gamma(v)|$ and recall that the random variable $\textsc{CC}_G(v)$ is defined as \begin{align*}
        \textsc{CC}_G(v) = \frac{1}{\binom{|\Gamma(v)|}{2}} \sum_{i < j} \mathds{1}(u_i \sim u_j),
    \end{align*} where $\mathds{1}(s \sim t)$ is an indicator random variable that is $1$ if and only if $s$ and $t$ are connected. By linearity of expectation, we get that, for any $k \ge 2$, \begin{align*}
        \Expected{\textsc{CC}_G(v) \mid |\Gamma(v)| = k} = \frac{1}{\binom{k}{2}} \sum_{i < j} \Pr{u_i \sim u_j \mid \deg(v) = k}.
    \end{align*}
    We proceed by showing that for any $1 \le i < j \le k$, we have $\Expected{\mathds{1}(u_i \sim u_j) \mid \deg(v) = k } = \Pr{s \sim t \mid s,t \in \Gamma(v) }$. To this end, let $\Omega$ be the global sample space consisting of all possible $n$-vertex graphs and two of its vertices $s, t$ chosen u.a.r. Let further $\mathcal{B} \subset \Omega$ be the set of all outcomes where $\deg(v) = k$ and where $s = u_i$ and $t = u_j$. We have, \begin{align*}
        \Expected{\mathds{1}(u_i \sim u_j) \mid \deg(v) = k } &= \Pr{u_i \sim u_j \mid \deg(v) = k}\\
        &= \Prsub{\mathcal{B}}{s \sim t}\\
        &= \Prsub{\Omega}{s \sim t \mid \mathcal{B}}\\ 
        &= \Pr{ s \sim t \mid (s = u_i) \cap (t = u_j) \cap (\deg(v) = k) } \\
        &= \frac{ \Pr{ (s \sim t) \cap (s = u_i) \cap (t = u_j) \cap (\deg(v) = k) \mid s, t \in \Gamma(v) } }{ \Pr{ (s = u_i) \cap (t = u_j) \cap (\deg(v) = k) \mid s,t \in \Gamma(v) }}\\
        & = \Pr{s \sim t \mid s, t \in \Gamma(v) },
    \end{align*} where the second to last equality holds because the events $s \sim t$ and $s = u_i \cap t = u_j \cap \deg(v) = k$ are independent if we condition on $s, t \in \Gamma(v)$.
    This implies \begin{align*}
        \Expected{\textsc{CC}_G(v) \mid \deg(v) = k} &= \frac{1}{\binom{k}{2}} \sum_{i < j} \Pr{u_i \sim u_j \mid |\Gamma(v)| = k}.\\
        &= \Pr{s \sim t \mid s,t \in \Gamma(v)}\\
        &= \Pr{\Delta \mid v \sim s, t}.
    \end{align*} If $k = |\Gamma(v)| < 2$, we have that $\textsc{CC}_G(v) = 0$, implying that in total, \begin{align*}
        \Expected{\textsc{CC}(G)} = \Pr{\Delta \mid v \sim s, t}\Pr{\deg(v) \ge 2}
    \end{align*}
    as desired.

    For the second part, recall that we defined for all $i,j \in V$ the quantity $\kappa_{ij} = \min\{\lambda w_iw_j, n\}$ and note that 
    \begin{align*}
        \Pr{v \sim s, t} &= \frac{\min\{\lambda w_vw_s, n\}}{n} \frac{\min\{\lambda w_vw_t, n\}}{n} \\
        &\ge \frac{\min\{\lambda w_{\hat{v}}w_{\hat{s}}, n\}}{n} \frac{\min\{\lambda w_{\hat{v}}w_{\hat{t}}, n\}}{n}= \Pr{\hat{v} \sim \hat{s}, \hat{t}}
    \end{align*}
    because $\hat{v}$ is the vertex of minimal weight and because the events $\hat{v} \sim \hat{s}$ and $\hat{v} \sim \hat{t}$ are independent. Thus, \begin{align*}
        \Expected{\textsc{CC}_G(v)} &\le \Pr{\Delta \mid v \sim s, t} 
        = \frac{\Pr{\Delta}}{\Pr{v \sim s, t}} \le \frac{\Pr{\Delta}}{\Pr{\hat{v} \sim \hat{s}, \hat{t} }} = \Pr{\Delta \mid \hat{v} \sim \hat{s}, \hat{t} }.
    \end{align*}
\end{proof}

\subsection{$L_\infty$-norm}\label{sec:clusteringLinfty}

In this section, we analyse the clustering coefficient under $L_\infty$-norm, which results in \Cref{thm:clusteringLinfty}. To prove this theorem, we use \Cref{thm:sharpcliquebounds}.

\clusteringlinfty*
\begin{proof}
    (1) Upper Bounds.
    We use \Cref{lem:clusteringbound} and thus only need an upper bound on $\Expected{\textsc{CC}(G_{\le n^{1/8}})}$. For this, we use \Cref{lem:triangleclustering}, and we let $v, s, t$ be three random vertices in $G_{\le n^{1/8}}$ conditioned on the event that $v$ is of minimal weight among $v,s,t$. If we denote by $\Delta$ be the event that $v, s, t$ form a triangle, we get from \Cref{lem:triangleclustering} that $\Expected{\textsc{CC}(G_{\le n^{1/8}})} \le \Pr{\Delta \mid v \sim s,t}$. Accordingly, by \Cref{thm:sharpcliquebounds}, we may bound \begin{align*}
        \Expected{\textsc{CC}(G_{\le n^{1/8}})} &\le \Expected{\min \left\{ 1, \left . \frac{\max\{w_s, w_t\}}{w_v} \left( \frac{3}{4} \right)^d \right\} \right| v \sim s, t  }.
    \end{align*}
    To bound the expectation above, we analyze the distribution of $\max\{w_s, w_t\}$. Taking into account, that we consider $v,s,t \in G_{\le n^{1/8}}$ conditioned on the event that $w_v \le w_s, w_t$, a standard calculation shows that \begin{align*}
        &\Pr{\max \{ w_s, w_t \} \ge x \mid v \sim s, t } \\
        &\hspace{1cm}= (1 + o(1)) \left( 1 - \left( 1 - \left( \frac{x}{w_v} \right)^{2-\beta} \right)^2 \right) \le (2 + o(1)) \left( \frac{x}{w_v} \right)^{2-\beta}
    \end{align*} where the factor of $1 + o(1)$ comes from conditioning on $v, s, t \in G_{\le n^{1/8}}$ and the exponent of $2-\beta$ is due to the weight bias arising from conditioning on $v \sim s, t$. Therefore, if $\beta > 3$, the above random variable has finite expectation and we can bound \begin{align*}
        \Expected{\min \left\{ 1, \left . \frac{\max\{w_s, w_t\}}{w_v} \left( \frac{3}{4} \right)^d \right\} \right| v \sim s, t  } &\le \left( \frac{3}{4} \right)^d \Expected{\left . \frac{\max\{w_s, w_t\}}{w_v} \right| v \sim s, t }\\
        &\le (2 + o(1))\left( \frac{3}{4} \right)^d \int_{1}^\infty (\beta-2) x^{\beta-2} \d x\\
        &\le (2 + o(1)) \frac{\beta - 2}{\beta - 3} \left( \frac{3}{4} \right)^d = \mathcal{O}_d\left( \left(\frac{3}{4}\right)^d \right),
    \end{align*} which finishes the case $\beta > 3$. For the case $\beta < 3$, we instead bound \begin{align*}
        &\Expected{\min \left\{ 1, \left . \frac{\max\{w_s, w_t\}}{w_v} \left( \frac{3}{4} \right)^d \right\} \right| v \sim s, t  }\\
        &\hspace{1cm}\le (2 + o(1))\left( \frac{3}{4} \right)^d \int_{1}^{\left(\frac{4}{3}\right)^d} (\beta-2) x^{2-\beta} \d x + \Pr{\frac{\max\{w_s, w_t\}}{w_v} \ge \left(\frac{4}{3}\right)^d \mid v \sim s, t}\\
        &\hspace{1cm}\le (2 + o(1)) \frac{\beta - 2}{3 - \beta} \left( \frac{3}{4} \right)^d\left( \frac{4}{3} \right)^{d(3  - \beta )} + (2 + o(1)) \left( \frac{3}{4} \right)^{d(\beta - 2)}= \mathcal{O}_d \left( \left( \frac{3}{4} \right)^{(\beta - 2)d} \right)
    \end{align*} as desired.

    (2) Lower Bounds. First of all, we show that $\Expected{\textsc{CC}(G_{\le n^{1/8}})}= \Omega_d((3/4)^d)$. To this end, we let $v, s, t$ be three random vertices in $G_{\le n^{1/8}}$ and we let $\hat{v}, \hat{s}, \hat{t}$ be the vertices $v,s,t$ reordered by their weights such that the weight of $\hat{v}$ is minimal. Then,  \begin{align}\label{eq:weightchange}
        \Pr{\Delta \mid v \sim s, t} = \frac{\Pr{\Delta}}{\Pr{v \sim s,t}} = \frac{\Pr{\hat{v} \sim \hat{s},\hat{t}}}{\Pr{v \sim s,t}} \Pr{\Delta \mid \hat{v} \sim \hat{s},\hat{t}} = \frac{w_{\hat{v}}}{w_v} \Pr{\Delta \mid \hat{v} \sim \hat{s},\hat{t}}.
    \end{align} Furthermore, by \Cref{lem:triangleclustering}, we have \begin{align*}
        \Expected{\textsc{CC}(G_{\le n^{1/8}})} &= \Pr{\Delta \mid v \sim s, t} \Pr{\deg(v) \ge 2}\\
        &\ge \Pr{\Delta \cap (w_{\hat{s}}, w_{\hat{t}} \le c w_{\hat{v}}) \mid v \sim s, t} \Pr{\deg(v) \ge 2}\\
        &= \Pr{\Delta \mid (v \sim s, t) \cap (w_{\hat{s}}, w_{\hat{t}} \le c w_{\hat{v}})} \\
        &\hspace{2cm} \cdot \Pr{w_{\hat{s}}, w_{\hat{t}} \le c w_{\hat{v}} \mid v \sim s, t} \Pr{\deg(v) \ge 2}
    \end{align*} where $c$ is an arbitry constant greater than $1$ (that does not depend on $d$). Now it is easy to see that $\Pr{w_{\hat{s}}, w_{\hat{t}} \le c w_{\hat{v}} \mid v \sim s, t }$ and $\Pr{\deg(v) \ge 2}$ are both constant. To bound the remaining factor, we note that $w_{\hat{v}}/w_v \ge 1/c$ conditioned on $w_{\hat{s}}, w_{\hat{t}} \le c w_{\hat{v}}$, so we can use (\ref{eq:weightchange}) to obtain \begin{align*}
        \Pr{\Delta \mid (v \sim s, t) \cap (w_{\hat{s}}, w_{\hat{t}} \le c w_{\hat{v}})} &\ge \frac{1}{c} \Pr{\Delta \mid (\hat{v} \sim \hat{s}, \hat{t}) \cap (w_{\hat{s}}, w_{\hat{t}} \le c w_{\hat{v}})} \ge \frac{1}{c} \left(\frac{3}{4}\right)^d
    \end{align*} where in the last step, we used \Cref{thm:sharpcliquebounds} which is now applicable because $\hat{v}$ is of minimal weight among $v,s,t$. Together, this shows that $\Expected{\textsc{CC}(G_{\le n^{1/8}})} = \Omega_d((3/4)^d)$ as desired. 

    However, if $\beta$ is very close to $2$, we derive a better lower bound. To this end, we note that $\Pr{\Delta \mid v \sim s,t} = 1$ if $w_{s}, w_{t} \ge 2^d w_{v}$, which is easy to show using the respective connection thresholds. Hence, $\Expected{\textsc{CC}(G_{\le n^{1/8}})} \ge \text{Pr}[w_{s}, w_{t} \ge 2^d w_{v} \mid v \sim s, t] \Pr{\deg(v) \ge 2}$. Now, using that $w_{v}$ is at most a constant with constant probability and then applying similar calculations regarding the weight distribution of $s, t$ as in part (1) of this proof yields that \begin{align*}
        \Expected{\textsc{CC}(G_{\le n^{1/8}})} &\ge \text{Pr}[w_{s}, w_{t} \ge 2^d w_{v} \mid v \sim s, t] \Pr{\deg(v) \ge 2} \\
        &\ge C (2^{d(2 - \beta)})^2\\ 
        &= \Omega_d((1/4)^{(\beta - 2)d})
    \end{align*} for some constant $C > 0$, as desired.
 

    
\end{proof}

\subsection{General $L_p$-norms}\label{sec:clusteringLp}

In this section, we generalize the previous result to other $L_p$-norms for $1 \le p < \infty$. We show that, in the threshold model, one also obtains an upper bound on the clustering coefficient that decreases exponentially with $d$ and holds with high probability. Although we do not have an explicit bound for the base of this exponential function, this result illustrates that using a different norm does not drastically change the behavior of the clustering coefficient.

\clusteringlp*

We start with deriving probability theoretic methods for analyzing
random vectors uniformly distributed in the unit ball under $L_p$-norm
and afterwards use them to bound the clustering coefficient (\Cref{sec:clusteringLpbound}).

\subsubsection{Probability-theoretic methods}

We start by introducing the following useful property of the distribution of a random vector $\F{x} \in \mathbb{R}^d$, which will afterwards allow us view $\F{x} =\| \F{x} \|_p \frac{\F{x}}{\|\F{x}\|_p} $ where $\| \F{x} \|_p$ and $\frac{\F{x}}{\|\F{x}\|_p}$ are independent. In the following we show this formally and analyze the distribution of these random variables. We start with $\frac{\F{x}}{\|\F{x}\|_p}$ and define the following useful property of a random vector.

\begin{definition}[$L_p$-Symmetry]
    Let $\F{x}\in\mathbb{R}^d$ be a random vector with density function $\rho: \mathbb{R}^d \rightarrow \mathbb{R}_{\ge0}$. We refer to $\rho$ and $\F{x}$ as $L_p$-symmetric if for all $\F{y}, \F{z} \in \mathbb{R}^d$ with $\norm{\F{y}}_p = \norm{\F{z}}_p$, we have $\rho(\F{y}) = \rho(\F{z})$. As this implies that $\rho$ only depends on the norm $r \in \mathbb{R}$ of its argument, we also denote with $\rho(r)$ the value of $\rho$ for any $\F{z} \in \mathbb{R}^d$ with $\|\F{z}\|_p = r$.
\end{definition}

It is easy to see that $\F{x} \sim B_p(1)$ has the above property. We shall further see that any two $L_p$-symmetric random vectors $\F{y}, \F{y'}$ are equivalent in the sense that their "directions" $\F{y}/\|\F{y}\|_p$ and $\F{y'}/\|\F{y'}\|_p$ are identically distributed. This allows us to sample the random vector $\F{x}/\|\F{x}\|_p$ from an arbitrary $L_p$-symmetric distribution. 

\begin{lemma}[Equivalence of $L_p$-Symmetric Density Functions]\label[lemma]{lem:lpsymmetry}
    Let $\F{x}, \tilde{\F{x}} \in \mathbb{R}^d$ be two random vectors with density functions $\rho$ and $\tilde{\rho}$ respectively, both $L_p$-symmetric. Then, the random vectors $\F{z} \coloneqq \frac{\F{x}}{\norm{\F{x}}_p}$ and $ \tilde{\F{z}} \coloneqq \frac{\tilde{\F{x}}}{\norm{\tilde{\F{x}}}_p}$ are identically distributed.
\end{lemma}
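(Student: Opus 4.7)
The plan is to decompose $\F{x}$ into its $L_p$-magnitude $r = \|\F{x}\|_p$ and direction $\F{u} = \F{x}/\|\F{x}\|_p \in S_p^{d-1}$, and show that the distribution of $\F{u}$ is entirely determined by a canonical measure on the $L_p$-unit sphere $S_p^{d-1}$ which does not depend on the choice of $L_p$-symmetric density. The lemma then follows by applying this characterization to both $\rho$ and $\tilde{\rho}$ and observing that it yields the same law on $S_p^{d-1}$.

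First I would establish a polar-type integration formula adapted to the $L_p$-norm. The starting observation is that, by homogeneity of the norm, $B_p(r) = r \cdot B_p(1)$, so the $L_p$-ball scales uniformly and $\vol(B_p(r)) = r^d \vol(B_p(1))$. For any Borel set $A \subseteq S_p^{d-1}$, the truncated cone $C_A^r \coloneqq \{s\F{y} : 0 \le s \le r,\, \F{y} \in A\}$ satisfies $\vol(C_A^r) = r^d \vol(C_A^1)$ by the same scaling. Setting $\sigma_p(A) \coloneqq d \cdot \vol(C_A^1)$ thus defines a finite Borel measure on $S_p^{d-1}$ that depends only on $p$ and $d$. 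A thin-shell decomposition of $\mathbb{R}^d$ combined with monotone convergence then yields, for every measurable $g \colon [0, \infty) \to [0, \infty)$,
\begin{align*}
    \int_{\mathbb{R}^d} g(\|\F{x}\|_p) \, \mathds{1}\!\left[\tfrac{\F{x}}{\|\F{x}\|_p} \in A\right] \, \d \F{x} \;=\; \sigma_p(A) \int_0^\infty g(r)\, r^{d-1} \, \d r.
\end{align*}

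Next, since $\rho$ is $L_p$-symmetric, we may write $\rho(\F{x}) = \rho(\|\F{x}\|_p)$ in the notation of the definition. Applying the formula above with $g = \rho$ yields $\Pr{\F{z} \in A} = \sigma_p(A) \int_0^\infty \rho(r)\, r^{d-1} \d r$. Taking $A = S_p^{d-1}$ forces $\int_0^\infty \rho(r)\, r^{d-1} \d r = 1/\sigma_p(S_p^{d-1})$, so $\Pr{\F{z} \in A} = \sigma_p(A)/\sigma_p(S_p^{d-1})$ for every Borel $A \subseteq S_p^{d-1}$, manifestly independent of $\rho$. Running the identical argument with $\tilde{\rho}$ produces the same distribution on $S_p^{d-1}$ for $\tilde{\F{z}}$, proving the claim.

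The main obstacle is rigorously justifying the polar integration formula for general $p$, since the usual spherical coordinates are tailored to $L_2$. The cleanest workaround is to bypass any explicit parametrization of $S_p^{d-1}$ and define $\sigma_p$ directly via scaled cones as above; the formula then reduces to the purely geometric identity $\vol(C_A^r) = r^d \vol(C_A^1)$, with the extension from indicator integrands $g = \mathds{1}_{[a, b]}$ to general measurable $g$ handled routinely by monotone convergence. The only minor bookkeeping concerns the measure-zero event $\F{x} = \F{0}$, on which $\F{x}/\|\F{x}\|_p$ is undefined, but this is harmless because $\rho$ is a density and assigns this event probability $0$.
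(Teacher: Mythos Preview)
Your proposal is correct and follows the same conceptual arc as the paper: both show that for any measurable $A$ on the $L_p$-sphere, $\Pr{\F{z}\in A}$ equals a cone-volume ratio $\vol(C_A^1)/\vol(B_p(1))$ that is independent of the particular $L_p$-symmetric density, and then conclude by applying this to both $\rho$ and $\tilde{\rho}$. The paper packages the same two ingredients as your argument---the scaling identity $\vol_S(r)=r^d\vol_S(1)$ (its Lemma~\ref{lem:volume}, your $\vol(C_A^r)=r^d\vol(C_A^1)$) and a polar-type integration formula (its Lemma~\ref{lem:integraldv})---but establishes the latter via an explicit change of variables $\varphi:(r,x_2,\ldots,x_d)\mapsto\bigl((r^p-\sum_{i\ge 2}x_i^p)^{1/p},x_2,\ldots,x_d\bigr)$, computing its Jacobian and treating each orthant separately. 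Your route, defining $\sigma_p$ directly through scaled cones and lifting from indicators $g=\mathds{1}_{[a,b]}$ to general nonnegative $g$ by monotone convergence, bypasses that computation entirely and is considerably shorter; the trade-off is that the paper's argument is more self-contained, whereas yours leans on the standard measure-theoretic extension step. Substantively, both rest on the identical geometric fact and reach the identical conclusion.
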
 

Before we prove this lemma, we introduce some further notation and some auxiliary statements. Let $S \subseteq S_p(1)$ be some subset of the (surface of the) unit sphere under $L_p$-norm. We define the set $S(r) = \{ \F{x} \in \mathbb{R}^d \mid \F{x}/\|\F{x}\|_p \in S, \|\F{x}\|_p \le r \}$, which contains all vectors from $\mathbb{R}^d$ with norm at most $r$ that are in $S$ when projected to $S_p(1)$. We further denote by $\vol(r)$ the volume of the unit ball of radius $r$ and by $\vol_S(r)$ the volume of the set $S(r)$. 
We start by showing the following useful property. \begin{lemma}\label[lemma]{lem:volume}
    Let $S \subseteq S_p(1)$, and let $S(r)$, $\vol_S(r)$, and $\vol(r)$ be defined as above. We have \begin{align*}
        \vol_S(r) = r^d \cdot \vol_S(1) = \vol(r) \frac{\vol_S(1)}{\vol(1)}.
    \end{align*}
\end{lemma}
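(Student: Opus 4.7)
The plan is to show that $S(r)$ is simply the image of $S(1)$ under the dilation $\F{x} \mapsto r\F{x}$, and then apply the standard change-of-variables formula, which tells us that scaling $\mathbb{R}^d$ by a factor of $r$ rescales $d$-dimensional Lebesgue measure by $r^d$.

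First I would verify the set identity $S(r) = r \cdot S(1) := \{r\F{y} : \F{y} \in S(1)\}$. This reduces to checking two elementary facts about the $L_p$-norm and the projection to $S_p(1)$: (i) $\|r\F{y}\|_p = r\|\F{y}\|_p$ for $r \ge 0$, so $\|\F{y}\|_p \le 1$ if and only if $\|r\F{y}\|_p \le r$, and (ii) $(r\F{y})/\|r\F{y}\|_p = \F{y}/\|\F{y}\|_p$ for $\F{y} \ne \F{0}$, so membership of the normalised vector in $S$ is preserved under positive scaling. Combining these gives $\F{x} \in S(r) \iff \F{x}/r \in S(1)$, which is exactly the claimed set identity (the measure-zero point $\F{0}$ can be ignored).

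Next I would compute the volume using the linear change of variables $\F{y} = \F{x}/r$. The Jacobian determinant of the inverse map $\F{y} \mapsto r\F{y}$ is $r^d$, and therefore
\begin{align*}
    \vol_S(r) = \int_{S(r)} \d\F{x} = \int_{S(1)} r^d \, \d\F{y} = r^d \cdot \vol_S(1).
\end{align*}
This establishes the first equality. For the second, I would invoke the same scaling identity applied to the full $L_p$-unit ball, namely $\vol(r) = r^d \cdot \vol(1)$, and substitute $r^d = \vol(r)/\vol(1)$ into the first equality to conclude
\begin{align*}
    \vol_S(r) = r^d \cdot \vol_S(1) = \vol(r) \cdot \frac{\vol_S(1)}{\vol(1)}.
\end{align*}

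There is no substantive obstacle here: the statement is essentially the fact that $d$-dimensional Lebesgue measure is $d$-homogeneous under scaling, restricted to a ``cone'' over a subset $S$ of $S_p(1)$. The only care needed is in verifying that $S(r)$ really is a dilation of $S(1)$, which requires nothing more than the positive homogeneity of $\|\cdot\|_p$ and the scale-invariance of the projection $\F{x} \mapsto \F{x}/\|\F{x}\|_p$.
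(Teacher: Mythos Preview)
Your proposal is correct and follows essentially the same approach as the paper: both arguments reduce to the change of variables $\F{x} = r\F{y}$, picking up the Jacobian factor $r^d$, and then specialise to $S = S_p(1)$ to obtain the second equality. The only cosmetic difference is that you first isolate the set identity $S(r) = r\cdot S(1)$ before integrating, whereas the paper carries the indicator $\mathds{1}(\F{x} \in S(r))$ through the substitution directly.
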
 \begin{proof}
    We note that for any $r \ge 0$, \begin{align*}
        \vol_S(r) &= \int_{-\infty}^\infty \ldots \int_{-\infty}^\infty \mathds{1}((x_1,\ldots,x_d) \in S(r)) \d x_1\ldots \d x_d.
    \end{align*}
    Substituting $x_i = r \cdot y_i$ yields
    \begin{align*}
        \vol_S(r) &= \int_{-\infty}^\infty \ldots \int_{-\infty}^\infty \mathds{1}(r\cdot(y_1,\ldots,y_d) \in S(r)) r^d \d y_1\ldots \d y_d\\
        &= r^d \int_{-\infty}^\infty \ldots \int_{-\infty}^\infty \mathds{1}((y_1,\ldots,y_d) \in S(1)) \d y_1\ldots \d y_d\\
        &= r^d \vol_S(1).
    \end{align*}
    which shows the first part of our statement. For the second part, we observe that $\vol(r) = \vol_{S_p(1)}(r)$, and thus immediately obtain $\vol(r) = r^d \vol(1).$ 
    Hence, $r^d = \vol(r)/\vol(1)$, implying that $\vol_S(r) = \vol(r) \frac{\vol_S(1)}{\vol(1)}$.
\end{proof}
We continue by showing that we can express the probability of the event $\F{x}/\|\F{x}\|_p \in S$ for any $L_p$-symmetric random vector $\F{x}$ in the following way.
\begin{lemma}\label[lemma]{lem:integraldv}
    Let $\F{x} \in \mathbb{R}^d$ be a random vector with $L_p$-symmetric density function $\rho$ and let $S \subseteq S_p(1)$. We have \begin{align*}
        \Pr{\frac{\F{x}}{\|\F{x}\|_p} \in S} = \int_0^\infty \rho(r)\frac{\d \vol_S(r)}{\d r} \d r. 
    \end{align*}
\end{lemma}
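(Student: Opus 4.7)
The plan is to rewrite the probability as a volume integral of $\rho$ over the cone $C_S := \{\F{y} \in \mathbb{R}^d \setminus \{0\} : \F{y}/\norm{\F{y}}_p \in S\}$ and evaluate this integral by slicing $C_S$ into thin radial shells. By definition, $\Pr{\F{x}/\norm{\F{x}}_p \in S} = \int_{C_S} \rho(\F{y}) \d \F{y}$, and by $L_p$-symmetry the integrand depends only on the $L_p$-norm of its argument, so that $\rho(\F{y}) = \rho(\norm{\F{y}}_p)$. Notice also that the truncation of $C_S$ to vectors of norm at most $R$ is exactly the set $S(R)$ from \Cref{lem:volume}.

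The natural way to organise this integral is to introduce the cumulative quantity $F(R) := \int_{S(R)} \rho(\norm{\F{y}}_p) \d \F{y}$, which equals $\Pr{\F{x}/\norm{\F{x}}_p \in S,\, \norm{\F{x}}_p \le R}$. Evidently $F(0) = 0$ and $F(R) \to \Pr{\F{x}/\norm{\F{x}}_p \in S}$ as $R \to \infty$. I would then argue that $F$ is differentiable with $F'(R) = \rho(R) \cdot \frac{\d \vol_S(R)}{\d R}$: the increment $F(R+\Delta R) - F(R)$ integrates $\rho$ over the shell $S(R+\Delta R) \setminus S(R)$, on which $\rho(\norm{\F{y}}_p)$ is pinched between $\inf_{r \in [R, R+\Delta R]} \rho(r)$ and the corresponding supremum, while the shell has Lebesgue volume $\vol_S(R + \Delta R) - \vol_S(R)$ by definition. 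Dividing by $\Delta R$, sending $\Delta R \to 0$, and applying the fundamental theorem of calculus to $F$ then yields the claimed formula.

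The main technical obstacle is justifying this derivative formula without any regularity assumption on $\rho$, since in general $\rho$ is only determined up to a Lebesgue-null set. A clean way to sidestep this is to approximate $\rho$, viewed as a non-negative measurable function of $r = \norm{\F{y}}_p$, by a non-decreasing sequence of simple (step) functions in $r$. For an indicator $\mathbf{1}_{[a,b]}(r)$ the identity reduces immediately to $\vol_S(b) - \vol_S(a) = \int_a^b \frac{\d \vol_S(r)}{\d r}\d r$, which is the fundamental theorem applied to the smooth function $r \mapsto r^d \vol_S(1)$ from \Cref{lem:volume}. Passing to the limit through monotone convergence on both sides then extends the identity to all non-negative measurable $\rho$. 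Alternatively, one could perform an explicit change of variables to generalised polar coordinates $(\norm{\F{y}}_p, \F{y}/\norm{\F{y}}_p)$, computing the Jacobian (which scales as $r^{d-1}$, consistent with \Cref{lem:volume}); I prefer the simple-function route because it avoids any explicit construction of an $L_p$-surface measure and reuses \Cref{lem:volume} as a black box.
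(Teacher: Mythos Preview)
Your argument is correct and takes a genuinely different route from the paper. The paper proceeds by constructing an explicit change of variables: after restricting to a single orthant, it introduces the map $\varphi(r,x_2,\dots,x_d) = \bigl((r^p-\sum_{i\ge 2}x_i^p)^{1/p}, x_2,\dots,x_d\bigr)$, verifies that $\|\varphi(\cdot)\|_p = r$, computes the Jacobian, and then applies the change-of-variables theorem together with the Leibniz rule to identify the inner integral with $\d\vol_S(r)/\d r$. This is precisely the ``generalised polar coordinates'' option you describe and then set aside. Your simple-function/monotone-convergence route is more economical: it reduces the whole lemma to the trivial case of radial indicators, where both sides equal $\vol_S(b)-\vol_S(a)$ by \Cref{lem:volume} and the fundamental theorem of calculus, and it never requires an explicit coordinate chart or Jacobian computation. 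It also has the virtue of working for an arbitrary non-negative measurable radial profile, whereas the paper's argument implicitly needs enough regularity to invoke Leibniz's rule. The cost is that your approach is less constructive; the paper's explicit $\varphi$ makes the factorisation $\rho(r)\cdot(\d\vol_S/\d r)$ visible as a genuine product of a radial density and a surface term, which some readers may find more transparent.
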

\begin{proof}
    We define for any $\F{x} \in \mathbb{R}^d$ the indicator function \begin{align*}
        \mathds{1}_S(\F{x}) = \begin{cases}
            1 & \text{if } \F{x}/\|\F{x}\|_p \in S\\
            0 & \text{otherwise.}
        \end{cases}
    \end{align*} Furthermore, we define $\F{z} \coloneqq \F{x}/\|\F{x}\|_p$. For simplicity, we assume that $S$ is located in only one of the $2^d$ orthants of the standard $d$-dimensional cartesian coordinate system, the argumentation for the case where $S$ spans multiple orthants are analogously obtained by splitting $S$ into parts that each span one orthant, and afterwards summing over them. Therefore, in the following, we assume that $S \subseteq \mathbb{R}_{>0}^d$. We note that we may express \begin{align} \label{eq:integraly}
        \Pr{\F{z} \in S} &= \int_{\mathbb{R}_{>0}^d}\mathds{1}_S(\mathbf{x}) \rho(\mathbf{x}) \d \mathbf{x}.
    \end{align} where $\mathbf{x} = (x_1, \ldots, x_d)^T$.  We get from \cite[Theorem 3-13, page 67]{Spivak_1998} that if $A \subset \mathbb{R}^d$ is an open set and if $\varphi: A \rightarrow \mathbb{R}^d$ is an injective, continuously differentiable function such that $\det(J\varphi(\F{x})) \neq 0$ for all $\F{x} \in A$, then if $f : \varphi(A)  \rightarrow \mathbb{R}^d $ is integrable, \begin{equation*}
        \int_{\varphi(A)} f(\F{x}) \d \F{x} = \int_{A} f(\varphi(\F{y}))|\det(J\varphi(\F{y}))|\d \F{y}, 
    \end{equation*} where $J\varphi(\F{x})$ denotes the Jacobian matrix of $\varphi$ at the point $\F{x}$. We define $A_r$ as the open set $A_r = \{ (r, x_2, \ldots, x_d) \in \mathbb{R}_{>0}^d \mid \sum_{i=2}^d x_i^p < r^p \}$ and $A = \bigcup_{r>0} A_r$. Furthermore, we let \begin{align*}
        \varphi: A \rightarrow \mathbb{R}^d, (r, x_2, \ldots, x_d) \mapsto \left( \left(r^p - \sum_{i=2}^dx_i^p \right)^{1/p}, x_2, \ldots, x_d \right).
    \end{align*} 
    We note that this function is injective and that it has the remarkable property that for any $\F{x} = (r, x_2, \ldots, x_d) \in A$, $\norm{\varphi(\F{x})}_p = r$. Furthermore, we have $J\varphi_{ij} = 0$ for $i,j \ge 2, i \neq j$, $J\varphi_{ij} = 1$ for $i = j \ge 2$ and \begin{align*}
        J\varphi_{11} = \frac{\partial}{\partial r} \left(r^p - \sum_{i=2}^dx_i^p \right)^{1/p} = r^{p-1}\left(r^p - \sum_{i=2}^dx_i^p \right)^{1/p-1}.
    \end{align*} Furthermore, for all $i \ge 2$, we have \begin{align*}
        J\varphi_{1i} = \frac{\partial}{\partial x_i} \left(r^p - \sum_{i=2}^dx_i^p \right)^{1/p} = -x_i^{p-1}\left(r^p - \sum_{i=2}^dx_i^p \right)^{1/p-1}.
    \end{align*} Hence, $\varphi$ is continuously differentiable. Moreover, since $A \subseteq \mathbb{R}_{>0}^d$, we get that for all $1 \le i \le d$ and $\F{x} \in A$, we have $J\varphi_{1i} \neq 0$ and $J\varphi_{ii} \neq 0$, but for all $i,j \ge 2, i\neq j$, we have $J\varphi_{ij} = 0$. For this reason the columns of $J\varphi(\F{x})$ are not linearly dependent and so $\det(d\varphi(\F{x})) \neq 0$. In the following, we denote $|\det(J\varphi(\F{x}))|$ with $g(\F{x})$. We can hence transform \Cref{eq:integraly} as \begin{align*}
        \Pr{\F{z} \in S} &= \int_{\mathbb{R}_{>0}^d}\mathds{1}_S(\mathbf{x}) \rho(\mathbf{x}) \d \mathbf{x}\\ 
        &= \int_A \mathds{1}_S(\varphi(\mathbf{y})) \rho(\varphi(\mathbf{y})) g(\mathbf{y})\d \mathbf{y} \\ 
        &= \int_0^\infty \ldots \int_0^\infty \mathds{1}_S(\varphi(\mathbf{y})) \mathds{1}(\mathbf{y}\in A) \rho(\varphi(\mathbf{y})) g(\mathbf{y}) \d x_d \ldots \d x_2 \d r,
    \end{align*} where $\mathbf{y} = (r, x_2,\ldots, x_d)$ and $\mathds{1}(\mathbf{y}\in A)$ is an indicator function, which is equal to $1$ if $\mathbf{y}\in A$ and $0$ otherwise. We note that for any $\mathbf{y} = (r, x_2,\ldots, x_d) \in A$, we have $\norm{\varphi(\mathbf{y})}_p = r$. Since $\rho(\F{x})$ is $L_p$-symmetric it only depends on the norm of $\F{x}$, hence $\rho (\varphi(\mathbf{y})  )$ only depends on the first component $r$ of $\mathbf{y}$. We may therefore rewrite $\rho (\varphi(\mathbf{y})) = \rho(r)$ and rearrange \begin{align*}
        \Pr{\F{z} \in S} &= \int_0^\infty \rho(r) \int_0^\infty \ldots \int_0^\infty \mathds{1}_S(\varphi(\mathbf{y})) \mathds{1}(\mathbf{y}\in A) g(\mathbf{y}) \d x_d \ldots \d x_2 \d r.
    \end{align*} We define for any $r > 0$, \begin{align*}
        v_S(r) \coloneqq \int_0^\infty \ldots \int_0^\infty \mathds{1}_S(\varphi(\mathbf{y})) \mathds{1}(\mathbf{y}\in A) g(\mathbf{y}) \d x_d \ldots \d x_2
    \end{align*} and thus obtain 
    \begin{equation}\label{eq:volumer}
        \Pr{\F{z} \in S} = \int_0^\infty \rho(r) v_S(r) \d r.
    \end{equation} Now, recall that $\vol_S(R)$ is the volume of the set $S(R) = \{x \in \mathbb{R}^d \mid x/\norm{x}_p \in S, \norm{x}_p \le R\}$. We show that in fact $v_S(R) = \frac{\d \vol_S(R)}{\d R}$ for all $R > 0$. This gives \Cref{eq:volumer} an intuitive interpretation as integrating $\rho$ over $r$ along the sphere radius $r$ under $L_p$-norm. Note that \begin{align*}
    \vol_S(R) = \int_{\mathbb{R}^d} \mathds{1}(\mathbf{x} \in S(R)) \d\mathbf{x}.
    \end{align*}
    Now, with the same argumentation as above (and by omitting $\rho$), we obtain \begin{align*}
        \vol_S(R) &= \int_0^\infty \ldots \int_0^\infty \mathds{1}(\varphi(\mathbf{y}) \in S(R)) \mathds{1}(\mathbf{y}\in A) g(\mathbf{y}) \d x_d \ldots \d x_2 \d r\\
        &= \int_0^\infty \mathds{1}(r \le R) \int_0^\infty \ldots \int_0^\infty \mathds{1}_S(\varphi(\mathbf{y})) \mathds{1}(\mathbf{y}\in A) g(\mathbf{y}) \d x_d \ldots \d x_2 \d r\\ 
        &= \int_0^\infty \mathds{1}(r \le R) v_S(r) \d r = \int_0^R v_S(r) \d r
    \end{align*} where we used that for all $\mathbf{y} \in A$, we have $\mathds{1}(\varphi(\mathbf{y}) \in S(R)) = \mathds{1}(r \le R)\mathds{1}_S(\varphi(\mathbf{y}))$. Applying the Leibnitz integral rule, we get $\frac{\d \vol_S(R)}{\d R} = v_S(R)$, which finishes the proof.
\end{proof}

The above two statements imply the following corollary, which in turn implies \Cref{lem:lpsymmetry}. \begin{corollary}\label{cor:volumeprobability}
    Let $\F{x}$ be an $L_p$-symmetric random vector and let $S \subseteq S_p(1)$. We have \begin{align*}
        \Pr{\frac{\F{x}}{\|\F{x}\|_p} \in S} = \frac{\vol_S(1)}{\vol(1)}.
    \end{align*}
\end{corollary}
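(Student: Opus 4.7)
The plan is to chain together the two lemmas that immediately precede the corollary. By \Cref{lem:volume}, the volume function $\vol_S(r)$ factors as
\[
\vol_S(r) = \vol(r)\,\frac{\vol_S(1)}{\vol(1)},
\]
so the differential appearing in \Cref{lem:integraldv} factors in the same way:
\[
\frac{\d \vol_S(r)}{\d r} = \frac{\vol_S(1)}{\vol(1)}\,\frac{\d \vol(r)}{\d r}.
\]
Plugging this into the formula of \Cref{lem:integraldv} gives
\[
\Pr{\frac{\F{x}}{\|\F{x}\|_p} \in S} \;=\; \frac{\vol_S(1)}{\vol(1)} \int_0^\infty \rho(r)\,\frac{\d \vol(r)}{\d r}\,\d r,
\]
and the task reduces to showing that the remaining integral evaluates to $1$.

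For that last step the plan is a normalisation argument: apply \Cref{lem:integraldv} to the choice $S = S_p(1)$, the entire unit sphere. On the one hand $\F{x}/\|\F{x}\|_p$ always lies in $S_p(1)$ (provided $\F{x} \neq 0$, which holds almost surely since $\rho$ is a density), so the probability on the left is $1$. On the other hand $S_p(1)(r) = B_p(r)$ and hence $\vol_{S_p(1)}(r) = \vol(r)$, so the integral identity specialises to
\[
1 \;=\; \int_0^\infty \rho(r)\,\frac{\d \vol(r)}{\d r}\,\d r \cdot \frac{\vol(1)}{\vol(1)} \;=\; \int_0^\infty \rho(r)\,\frac{\d \vol(r)}{\d r}\,\d r,
\]
which is exactly what we need. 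Combining the two displayed equations yields $\Pr{\F{x}/\|\F{x}\|_p \in S} = \vol_S(1)/\vol(1)$.

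Since the argument is a direct substitution once the two lemmas are in hand, there is no real obstacle beyond checking that $\rho(\F{x}) \,\d \F{x}$ is in fact a probability measure (so that the normalisation trick is legal) and that $S$ is measurable enough for $\vol_S(1)$ to be well-defined --- both of which are built into the hypotheses. As a bonus, this form of the corollary makes \Cref{lem:lpsymmetry} immediate, since the right-hand side $\vol_S(1)/\vol(1)$ does not depend on $\rho$ at all, so two $L_p$-symmetric densities $\rho,\tilde\rho$ induce the same distribution of $\F{x}/\|\F{x}\|_p$ on every Borel subset of $S_p(1)$.
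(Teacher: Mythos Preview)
Your proof is correct and follows essentially the same approach as the paper's own argument: factor $\frac{\d \vol_S(r)}{\d r}$ via \Cref{lem:volume}, substitute into the integral representation of \Cref{lem:integraldv}, and then evaluate the leftover integral by applying \Cref{lem:integraldv} to the full sphere $S=S_p(1)$ where the probability is trivially~$1$. The concluding remark that this immediately yields \Cref{lem:lpsymmetry} also matches the paper.
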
 \begin{proof}
    define $\F{z} \coloneqq \F{x}/\|\F{x}\|_p$. By \Cref{lem:integraldv}, we may express \begin{align*}
        \Pr{\F{z} \in S} = \int_0^\infty \rho(r)\frac{\d \vol_S(r)}{\d r} \d r.
    \end{align*} 
    Furthermore, we have by \Cref{lem:volume} that $\vol_S(R) = \vol(R)\frac{\vol_S(1)}{\vol(1)}$ and hence, \begin{align*}
        \frac{\d \vol_S(R)}{\d R} = \frac{\vol_S(1)}{\vol(1)} \frac{\d \vol(R)}{\d R}.
    \end{align*} 
    Accordingly, \begin{align*}
        \Pr{\F{z} \in S} &= \int_{0}^\infty \rho(r) \cdot \frac{\d \vol_S(R)}{\d r} \d r \\
        &= \frac{\vol_S(1)}{\vol(1)} \int_{0}^\infty \rho(r) \cdot \frac{\d \vol(R)}{\d r} \d r.
    \end{align*}
    We note that $\Pr{\F{z} \in S_p(1)} = 1$, and so, by \Cref{lem:integraldv}, we get
    \begin{align*}
        \int_{0}^\infty \rho(r) \cdot \frac{\d \vol(R)}{\d r}\d r = \Pr{\F{z} \in S_p(1)} = 1.
    \end{align*} This shows \begin{align*}
        \Pr{\F{z} \in S} = \frac{\vol_S(1)}{\vol(1)}.
    \end{align*}
\end{proof}

With this statement, we may now prove \Cref{lem:lpsymmetry}.
\begin{proof}[Proof of \Cref{lem:lpsymmetry}.]
We show that for any $S \subseteq S_p(1)$, we have that $\Pr{\F{z} \in S} = \Pr{\tilde{\F{z}} \in S}$. Because $\F{x}$ and $\tilde{\F{x}}$ are both $L_p$-symmetric, we get by \Cref{cor:volumeprobability} that both $\Pr{\F{z} \in S}$ and $\Pr{\tilde{\F{z}} \in S}$ are equal to $\frac{\vol_S(1)}{\vol(1)}$, which directly implies the desired statement.
\end{proof}

\paragraph*{The $\rchi^p$-Distribution} 
In addition to the distribution of $\F{x} \sim B_p(1)$, we need another $L_p$-symmetric distribution. For this purpose recall the definitions of the $\rchi_p(d)$ and the $\rchi^p(d)$ distributions from the introduction.
It is easy to see that a random vector $\F{x} \sim \rchi_p(d)$ is $L_p$-symmetric by observing that its density function is \begin{equation*}
    \rho_{\F{x}}(\F{x}) = \prod_{i=1}^d \gamma e^{-\frac{1}{2}|\F{x}(i)|^p} = \gamma^d e^{-\frac{1}{2}\sum_{i=1}^d|\F{x}(i)|^p} = \gamma^d e^{-\frac{1}{2}(\|\F{x}\|_p)^p}
\end{equation*} and thus only depends on the norm of $\F{x}$. We further note that for the case $p=2$, $\rchi_2(d)$ is the standard $d$-variate normal distribution $\mathcal{N}(0, I_d)$ (where $I_d$ is the $d \times d$ identity matrix), and that $\rchi^2(d)$ is the chi-squared distribution with $d$ degrees of freedom. The distribution $\rchi^p(d)$ can hence be seen as a generalization of the chi-squared distribution to other $L_p$-norms.

We further verify that $\gamma$ is indeed the correct normalization constant. For this, let $X \sim \rchi_p(1)$ and observe that \begin{align*}
    1 = \int_{-\infty}^\infty \rho_X(x)\d x = \gamma \cdot 2\int_0^\infty e^{-\frac{1}{2}x^p}\d x.
\end{align*} With the substitution $x = (2y)^{\frac{1}{p}}$, we obtain \begin{align*}
    &\gamma \cdot 2\int_0^\infty e^{-\frac{1}{2}x^p}\d x
    = \gamma \cdot 2 \int_0^\infty \frac{2^{1/p}}{p} y^{1/p - 1} e^{-y}\d y
    =\gamma \frac{2^{1/p+1} \Gamma\left(\frac{1}{p}\right)}{p}.
\end{align*}
We hence get \begin{equation*}
    \gamma = \frac{p}{2^{\frac{1}{p}+1}\Gamma\left(\frac{1}{p}\right)},
\end{equation*}
Note that for $p=2$, one does indeed obtain the correct normalization constant of the standard normal distribution $\mathcal{N}(0,1)$, which is equal to $1/\sqrt{2\pi}$.

We continue with deriving a tail bound on the $\rchi^p(d)$ distribution and start with deriving its moment-generating function. \begin{lemma}\label[lemma]{lem:moment-gen-func}
    Let $Z \sim \rchi^p(1)$. Let $\psi_Z$ be the moment generating function of $Z$, defined as \begin{equation*}
        \psi_{Z}: \mathbb{R}_0^+ \rightarrow \mathbb{R}, \psi_{Z}(\lambda) = \Expected{e^{\lambda Z}}.
    \end{equation*} Then, for every $\lambda < \frac{1}{2}$, we have 
    \begin{align*}
         \psi_Z(\lambda) = \left( 1 - 2\lambda \right)^{-\frac{1}{p}}.
    \end{align*}
\end{lemma}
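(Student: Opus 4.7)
The plan is to compute the integral defining $\psi_Z(\lambda)$ directly, exploiting the fact that $Z = |X|^p$ for $X \sim \rchi_p(1)$ so that the exponent in the expectation combines nicely with the exponent in the density of $X$, and then reducing the result to a Gamma integral.

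First I would write
\begin{equation*}
    \psi_Z(\lambda) \;=\; \Expected{e^{\lambda|X|^p}} \;=\; \gamma\int_{-\infty}^{\infty} e^{\lambda|x|^p}\, e^{-\tfrac{1}{2}|x|^p}\,\d x \;=\; 2\gamma\int_0^\infty e^{-\left(\tfrac{1}{2}-\lambda\right)x^p}\,\d x,
\end{equation*}
using the symmetry of the integrand in the last equality. The requirement $\lambda < 1/2$ enters precisely to make the remaining exponent negative and the integral convergent.

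Next I would perform the substitution $y = \bigl(\tfrac{1}{2}-\lambda\bigr)x^p$, exactly as in the verification that $\gamma$ is the correct normalising constant (the case $\lambda = 0$). This gives $x = \bigl(y/(\tfrac{1}{2}-\lambda)\bigr)^{1/p}$ and $\d x = \tfrac{1}{p}\bigl(\tfrac{1}{2}-\lambda\bigr)^{-1/p}y^{1/p-1}\d y$, so that
\begin{equation*}
    2\gamma\int_0^\infty e^{-\left(\tfrac{1}{2}-\lambda\right)x^p}\,\d x \;=\; \frac{2\gamma}{p}\left(\tfrac{1}{2}-\lambda\right)^{-1/p}\!\int_0^\infty y^{1/p-1}e^{-y}\,\d y \;=\; \frac{2\gamma}{p}\left(\tfrac{1}{2}-\lambda\right)^{-1/p}\Gamma\!\left(\tfrac{1}{p}\right).
\end{equation*}

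Finally I would substitute the explicit value $\gamma = p/\bigl(2^{1/p+1}\Gamma(1/p)\bigr)$, which cancels the $\Gamma(1/p)$ and the $p$, leaving
\begin{equation*}
    \psi_Z(\lambda) \;=\; \frac{2^{\,1-1/p-1}\cdot 2}{1}\cdot\left(\tfrac{1}{2}-\lambda\right)^{-1/p} \;=\; 2^{-1/p}\left(\tfrac{1}{2}-\lambda\right)^{-1/p} \;=\; (1-2\lambda)^{-1/p},
\end{equation*}
which is the claim. There is no real obstacle here: the only mild subtlety is keeping track of the constants so that the Gamma factor cancels exactly, but this is the same bookkeeping that the paper already carried out to verify $\gamma$. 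The lemma then extends to $\rchi^p(d)$ by independence via $\psi_{\rchi^p(d)}(\lambda) = \psi_Z(\lambda)^d = (1-2\lambda)^{-d/p}$, which is the form that feeds into \Cref{cor:tailchisquared}.
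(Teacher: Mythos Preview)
Your argument is correct and follows essentially the same route as the paper: write the MGF as an integral, reduce to a Gamma integral via substitution, and cancel constants; the only cosmetic difference is that the paper first derives the density $\rho_Z(x) = x^{1/p-1}e^{-x/2}/(2^{1/p}\Gamma(1/p))$ and integrates against it, whereas you integrate directly in the $X$-variable and combine the two substitutions into one. Note a small slip in your final display --- the intermediate expression $2^{\,1-1/p-1}\cdot 2$ equals $2^{1-1/p}$, not $2^{-1/p}$ --- but the correct coefficient is indeed $2\gamma\Gamma(1/p)/p = 2^{-1/p}$, so your conclusion stands.
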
 \begin{proof} Let $X \sim \rchi_p(1)$ and note that we may write $Z = |X|^p$.
    Recall that the probability density of $X$ is $\rho_X(x) = \gamma e^{-\frac{1}{2}|x|^p}$.
    Denote by $\rho_Z$ the density function of $Z$ and observe that \begin{align*}
        \rho_Z(x) &= \frac{\d \Pr{Z \ge x}}{\d x} 
                = \frac{\d \Pr{|X|^p \ge x}}{\d x}
                = \frac{\d \Pr{|X| \ge x^\frac{1}{p}}}{\d x}
                = \rho_{|X|}\left(x^{\frac{1}{p}}\right) \frac{\d x^\frac{1}{p}}{\d x}\\
               &= 2 \rho_X\left(x^{\frac{1}{p}}\right) \frac{\d x^\frac{1}{p}}{\d x} = 2\gamma e^{-\frac{1}{2}x} \frac{1}{p} x^{\frac{1}{p}-1}
                = \frac{x^{\frac{1}{p}-1}e^{-\frac{1}{2}x}}{2^{\frac{1}{p}} \Gamma\left(\frac{1}{p}\right) }.
    \end{align*} Note that, in the fifth equality, we used that $\rho_{|X|}(x) = 2\rho_{X}(x)$.
    We continue by deriving the moment-generating function of the random variable~$Z$. We obtain \begin{align*}
        \psi_Z(\lambda) &= \Expected{e^{\lambda Z}} = \int_0^\infty \rho_Z(x) e^{\lambda x} dx \\
                        &= \frac{1}{2^{\frac{1}{p}} \Gamma\left(1/p\right)} \int_0^\infty x^{\frac{1}{p}-1}e^{-x(1/2-\lambda)} dx.
    \end{align*} We note that this integral exists for $\lambda < \frac{1}{2}$. With the substitution $x = y (1/2-\lambda)^{-1}$, it transforms to \begin{align*}
        \psi_Z(\lambda) &= \frac{1}{2^{\frac{1}{p}} \Gamma\left(1/p\right)} \int_0^\infty x^{\frac{1}{p}-1}e^{-x(1/2-\lambda)} dx\\
                        &= \frac{1}{2^{\frac{1}{p}} \Gamma\left(1/p\right)} \int_0^\infty y^{\frac{1}{p}-1}e^{-y} \frac{(1/2 - \lambda)^{1 - \frac{1}{p}}}{1/2 - \lambda} dy \\
                        &= \frac{(1/2 - \lambda)^{-\frac{1}{p}}}{2^{\frac{1}{p}} \Gamma\left(1/p \right)}  \Gamma \left( 1/p \right)\\
                        &= \left( 1 - 2\lambda \right)^{-\frac{1}{p}}.
    \end{align*}
\end{proof}

\begin{corollary}\label{cor:expectationchip}
    Let $Z \sim \rchi^p(d)$. Then, \begin{align*}
        \Expected{Z} = \frac{2d}{p}.
    \end{align*}
\end{corollary}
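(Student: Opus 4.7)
The plan is to reduce the $d$-dimensional statement to the one-dimensional case and then extract the expectation from the moment-generating function derived in \Cref{lem:moment-gen-func}. Write $Z = \sum_{i=1}^d |X_i|^p$ where the $X_i \sim \rchi_p(1)$ are i.i.d., so each $|X_i|^p \sim \rchi^p(1)$. By linearity of expectation,
\begin{align*}
    \Expected{Z} = d \cdot \Expected{|X_1|^p},
\end{align*}
so it suffices to show $\Expected{|X_1|^p} = 2/p$.

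For this, I would use the standard fact that $\Expected{Y} = \psi_Y'(0)$ whenever the moment generating function $\psi_Y$ of $Y$ is differentiable at $0$. Since $\psi_Y(\lambda) = (1-2\lambda)^{-1/p}$ is given by \Cref{lem:moment-gen-func} for all $\lambda < 1/2$, it is in particular smooth in a neighbourhood of $\lambda = 0$, so the identity $\Expected{Y} = \psi_Y'(0)$ applies. A direct differentiation gives
\begin{align*}
    \psi_Y'(\lambda) = \frac{2}{p}(1 - 2\lambda)^{-\frac{1}{p} - 1},
\end{align*}
and evaluating at $\lambda = 0$ yields $\psi_Y'(0) = 2/p$. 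Combining this with the reduction above gives $\Expected{Z} = 2d/p$.

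No real obstacle is expected here: the corollary is essentially a one-line consequence of \Cref{lem:moment-gen-func} together with linearity of expectation. The only subtlety is justifying differentiation under the expectation (equivalently, that $\psi_Y$ is smooth at $0$), but this follows immediately from the closed form $\psi_Y(\lambda) = (1-2\lambda)^{-1/p}$ on the open interval $(-\infty, 1/2)$.
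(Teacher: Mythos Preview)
Your proposal is correct and matches the paper's proof essentially line for line: reduce to the one-dimensional case via linearity of expectation, then differentiate the MGF from \Cref{lem:moment-gen-func} at $\lambda=0$ to obtain $2/p$.
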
 \begin{proof}
    Let $X \sim \rchi^p(1)$. We get $\Expected{Z} = d \cdot \Expected{X}$ as $Z$ is the sum of $d$ independent random variables distributed identically as $X$. We further note that the expectation of $X$ is equal to the derivative of its moment-generating function at $\lambda = 0$. We get from \Cref{lem:moment-gen-func} that \begin{align*}
        \frac{\d \psi_X(\lambda)}{\d \lambda} = \frac{2}{p} (1 - 2\lambda)^{-\frac{1}{p}-1}
    \end{align*} and hence, $\Expected{X} = \frac{2}{p}$.
\end{proof}

We continue by showing that a random variable $Z \sim \rchi^p(d)$ is concentrated around its expected value. Under the hood, our bounds are obtained in the same way as the Chernoff-Hoeffding bounds, namely by applying Markov's inequality to the moment generating function of $Z$. However, instead of doing this directly, we take a shortcut by applying the following variant of Bernstein's inequality that is proven by Massart in \cite{Morel_Massart_Picard_Takens_Teissier_2007}.

\begin{theorem}[Proposition 2.9 in \cite{Morel_Massart_Picard_Takens_Teissier_2007}]\label{thm:bernstein} Let $X_1, \ldots, X_d$ be independent, real-valued random variables. Assume that there exist constants $v, c > 0$ such that \begin{equation*}
    \sum_{i=1}^d \Expected{X_i^2} \le v
\end{equation*}
and that for all integers $k\ge 3$, \begin{equation*}
    \sum_{i=1}^d \Expected{\abs{X_i}^k} \le \frac{k!}{2} v c^{k-2}.
\end{equation*} 
Let $S = \sum_{i=1}^d (X_i - \Expected{X_i})$. Then, for every $x > 0$, \begin{equation*}
    \Pr{S \ge \sqrt{2vx} + cx} \le \exp(-x).
\end{equation*}
\end{theorem}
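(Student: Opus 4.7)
The plan is to derive this tail bound via the classical Chernoff argument combined with a careful estimate of the moment generating function (MGF) of the centered sum $S$. First, for any $\lambda \in (0, 1/c)$, the exponential Markov inequality applied to $S$ gives
\begin{equation*}
    \Pr{S \ge t} \le e^{-\lambda t}\,\Expected{e^{\lambda S}} = e^{-\lambda t}\prod_{i=1}^d \Expected{e^{\lambda(X_i - \Expected{X_i})}},
\end{equation*}
where the factorisation uses independence of the $X_i$. The strategy is then to bound each factor using the hypothesized moment conditions and finally optimize over $\lambda$.

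Next I would expand each MGF factor in its Taylor series. Since $X_i - \Expected{X_i}$ has mean zero, the linear term vanishes, and one bounds the centered absolute moments $\Expected{|X_i - \Expected{X_i}|^k}$ by a constant multiple of $\Expected{|X_i|^k}$ (via a symmetrisation or Jensen-type argument). Feeding in the hypothesis $\sum_i \Expected{|X_i|^k} \le \frac{k!}{2} v c^{k-2}$ and recognizing the sum over $k \ge 2$ as a geometric series in $c\lambda$, one arrives at
\begin{equation*}
    \prod_{i=1}^d \Expected{e^{\lambda(X_i - \Expected{X_i})}} \le \exp\left( \frac{\lambda^2 v/2}{1 - c\lambda} \right),
\end{equation*}
after applying $1 + y \le e^y$ and summing over $i$. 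Combining this with the Chernoff step yields $\Pr{S \ge t} \le \exp\bigl(-\lambda t + \tfrac{\lambda^2 v/2}{1-c\lambda}\bigr)$.

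The final step is to optimize over $\lambda \in (0, 1/c)$. The choice $\lambda = t/(v + ct)$, found by differentiating the exponent after clearing the $(1 - c\lambda)$ denominator, lies in the admissible range. Substituting this $\lambda$ and simplifying produces a bound of the form $\Pr{S \ge t} \le \exp(-h(t))$, where the rate function $h$ satisfies $h(\sqrt{2vx} + cx) = x$ by the algebraic identity $(\sqrt{2vx}+cx)^2 = 2vx + 2cx\sqrt{2vx} + c^2x^2$. Setting $t = \sqrt{2vx} + cx$ therefore yields the claimed inequality.

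The step I expect to be hardest is the MGF estimate in the second paragraph: the hypothesis concerns the raw absolute moments $\Expected{|X_i|^k}$, so some care is required to pass to moments of $X_i - \Expected{X_i}$ without losing the precise constants, and then to collapse the resulting power series into the clean shape $\lambda^2 v/(2(1 - c\lambda))$ so that the final rate function comes out exactly as $h(t) = x$ at the threshold $t = \sqrt{2vx} + cx$. Once the MGF bound is in place, the optimization step is routine algebra guided directly by the form of the target threshold.
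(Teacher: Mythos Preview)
The paper does not prove this statement at all; it is quoted as Proposition~2.9 of Massart and used as a black box, so there is no in-paper proof to compare against. Your outline is the standard Chernoff--MGF route (and indeed essentially Massart's), but two steps do not go through as written.

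The optimisation is wrong. With your choice $\lambda = t/(v+ct)$ one computes $1-c\lambda = v/(v+ct)$ and hence the exponent becomes exactly $-t^2/(2(v+ct))$. At $t=\sqrt{2vx}+cx$ this equals
\[
x\cdot\frac{2v+2c\sqrt{2vx}+c^2x}{\,2v+2c\sqrt{2vx}+2c^2x\,}\; <\; x,
\]
so the bound you obtain is strictly weaker than $\exp(-x)$; the algebraic identity you quote does not make these equal. The genuine optimiser is $\lambda^\ast=\tfrac1c\bigl(1-\sqrt{v/(v+2ct)}\bigr)$, and only with this choice does the exponent evaluate to exactly $x$ at $t=\sqrt{2vx}+cx$ (using that $v+2ct=(\sqrt{v}+c\sqrt{2x})^2$).

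Your MGF step, as sketched, also loses the constants you need: bounding $\Expected{|X_i-\Expected{X_i}|^k}$ by a ``constant multiple'' of $\Expected{|X_i|^k}$ picks up a factor like $2^{k}$, which rescales $v$ and $c$ and again misses the exact statement. The fix is to avoid centring entirely: write $\Expected{e^{\lambda X_i}} = 1+\lambda\Expected{X_i}+\Expected{\phi(\lambda X_i)}$ with $\phi(u)=e^u-u-1$, use $\log(1+z)\le z$ to get $\log\Expected{e^{\lambda(X_i-\Expected{X_i})}}\le \Expected{\phi(\lambda X_i)}$, then bound $\phi(u)\le\sum_{k\ge 2}|u|^k/k!$ so that the hypotheses on the \emph{raw} moments $\Expected{|X_i|^k}$ plug in directly. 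Summing over $i$ and collapsing the geometric series yields exactly $\lambda^2 v/(2(1-c\lambda))$ with no loss.
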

With this, we are able to show the following. 
\begin{theorem}\label{thm:p-bound}
    Let $X_1, \ldots, X_d$ be i.i.d. random variables from $\rchi_p(1)$ and define the random variable $Z \coloneqq \sum_{i=1}^d|X_i|^p$. Note that $Z \sim \rchi^p(d)$. Then, for all $x > 0$,
    \begin{enumerate}
        \item[(i)] $\Pr{Z \ge \Expected{Z} + 2\sqrt{2\Expected{Z}x} + 2x } \le \exp(-x)$
        \item[(ii)] $\Pr{Z \le \Expected{Z} - 2\sqrt{2\Expected{Z}x} - 2x } \le \exp(-x)$.
    \end{enumerate}
\end{theorem}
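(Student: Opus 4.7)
The plan is to apply the Bernstein-type inequality in Theorem~\ref{thm:bernstein} to the non-negative summands $Y_i := |X_i|^p$, for which $Z = \sum_{i=1}^d Y_i$, and to obtain the lower tail by repeating the argument on $-Y_i$. The only substantive work is to identify constants $v$ and $c$ so that the two moment hypotheses of Theorem~\ref{thm:bernstein} hold with the right dependence on $d$ and $p$.

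The first step is to compute $\Expected{Y_i^k}$ for all integers $k \ge 1$. Each $Y_i$ is distributed as $\rchi^p(1)$, so Lemma~\ref{lem:moment-gen-func} gives the closed form $\psi_{Y_i}(\lambda) = (1-2\lambda)^{-1/p}$. Expanding this through the generalized binomial series around $\lambda = 0$ and matching coefficients with $\sum_{k\ge 0} \Expected{Y_i^k}\lambda^k/k!$ yields
\begin{equation*}
    \Expected{Y_i^k} \;=\; 2^k \prod_{j=0}^{k-1}\!\left(\tfrac{1}{p} + j\right) \qquad \text{for every } k \ge 1.
\end{equation*}
Since $p \ge 1$, the $j=0$ factor is $1/p$ and each further factor satisfies $1/p + j \le j+1$, so $\prod_{j=0}^{k-1}(1/p+j) \le k!/p$. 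This yields the clean bound $\Expected{Y_i^k} \le 2^k k!/p$.

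I then set $v := 8d/p$ and $c := 2$. The second moment bound $\Expected{Y_i^2} \le 8/p$ gives $\sum_{i=1}^d \Expected{Y_i^2} \le v$ directly, and for every integer $k \ge 3$,
\begin{equation*}
    \sum_{i=1}^d \Expected{|Y_i|^k} \;\le\; d \cdot 2^k \cdot \frac{k!}{p} \;=\; \frac{k!}{2}\, v\, c^{k-2},
\end{equation*}
so the hypotheses of Theorem~\ref{thm:bernstein} are satisfied. Applying that theorem then gives the upper tail $\Pr{Z - \Expected{Z} \ge \sqrt{2vx} + cx} \le \exp(-x)$. Invoking $\Expected{Z} = 2d/p$ from Corollary~\ref{cor:expectationchip}, the deviation rewrites as $\sqrt{2vx} = \sqrt{16dx/p} = 2\sqrt{2\Expected{Z}x}$ and $cx = 2x$, which is exactly statement (i).

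For the lower tail (ii) I apply Theorem~\ref{thm:bernstein} to the variables $-Y_1, \ldots, -Y_d$. Because the second hypothesis constrains only the absolute moments $\Expected{|{-Y_i}|^k} = \Expected{Y_i^k}$, the same $v$ and $c$ still work, and the conclusion now controls $-(Z - \Expected{Z})$, producing (ii). The main (mild) obstacle in the plan is ensuring that the $p$-dependence cancels uniformly across all $k \ge 3$ when choosing the pair $(v,c)$; this is precisely what the bound $\prod_{j=0}^{k-1}(1/p + j) \le k!/p$ delivers, and it is the only place where the assumption $p \ge 1$ is used.
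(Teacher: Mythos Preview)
Your proof is correct and follows essentially the same approach as the paper: both apply Theorem~\ref{thm:bernstein} with the identical choices $v = 8d/p$ and $c = 2$, verify the moment conditions via Lemma~\ref{lem:moment-gen-func}, and obtain the lower tail by passing to $-Y_i$. The only cosmetic difference is that you read off the moments from the binomial series expansion of $(1-2\lambda)^{-1/p}$, while the paper differentiates the moment generating function repeatedly; both yield the same bound $\Expected{Y_i^k}\le 2^k k!/p$ and the argument is otherwise identical.
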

\begin{proof}
    We use \Cref{thm:bernstein}. To show that the random variables $|X_1|^p, \ldots, |X_d|^p$ fulfill the conditions of \Cref{thm:bernstein}, we derive bounds on its moments. For any $X \sim \rchi_p(1)$, define $Y = |X|^p$. We use the moment generating function from \Cref{lem:moment-gen-func} to derive bounds on the moments of $Y$. For all integers $k \ge 0$, we note that we have $\Expected{Y^{k}} = \psi_Y^{(k)}(0)$, where $\psi_Y^{(k)}$ denotes the $k$-th derivative of $\psi_Y$. We note that \begin{equation*}
        \psi_Y'(\lambda) = \frac{2}{p} \left( 1 - 2\lambda \right)^{-\frac{1}{p}-1}
    \end{equation*} and 
    \begin{equation*}
        \psi_Y''(\lambda) = \frac{4}{p}\left(\frac{1}{p}+1\right) \left( 1 - 2\lambda \right)^{-\frac{1}{p}-2},
    \end{equation*}
    from which we derive $\Expected{Y} = \frac{2}{p}$ and $\Expected{Y^2} = \frac{4}{p}\left(\frac{1}{p}+1\right)$. For $k \ge 3$ one can easily verify that \begin{align*}
        \psi_Y^{(k)}(\lambda) &= \left( 1 - 2\lambda \right)^{-\frac{1}{p}-k} \Expected{Y^2} 2^{k-2} \prod_{i=2}^{k-1}\left( \frac{1}{p} + i \right)
    \end{align*} and hence, \begin{align}\label{eq:momentbound}
        \Expected{Y^k} = \psi_Y^{(k)}(0) &= \Expected{Y^2} 2^{k-2} \prod_{i=2}^{k-1}\left( \frac{1}{p} + i \right) \nonumber \\
        &= \Expected{Y^2} 2^{k-2} \prod_{i=1}^{k-2}\left( \frac{1}{p} + i + 1 \right) \nonumber \\
        &\le \Expected{Y^2} 2^{k-2} \prod_{i=1}^{k-2}\left( i + 2 \right) = \Expected{Y^2} 2^{k-2} \frac{k!}{3!} \le \Expected{Y^2} 2^{k-1} \frac{k!}{2}. 
    \end{align}
     Recall that we have $\Expected{Y^2} = \frac{4}{p}\left(\frac{1}{p}+1\right)$ and hence, $\Expected{Y^2} \le \frac{8}{p}$ due to $p \ge 1$. If we define $Y_i = |X_i|^p$ and set $v = 8d/p, c = 2$, we have that \begin{equation*}
         \sum_{i=1}^d \Expected{Y_i^2} \le \frac{8d}{p} = v
     \end{equation*} and thus, for all $k \ge 3$, \begin{equation*}
        \sum_{i=1}^d \Expected{Y_i^k} \le d\Expected{Y^2} 2^{k-1} \frac{k!}{2} \le \frac{k!}{2}vc^{k-2},
     \end{equation*} which shows that the conditions of \Cref{thm:bernstein} are fulfilled. Since $Z = \sum_{i=1}^d Y_i$ and $\Expected{Z} = \frac{2d}{p}$, we get that for all $x > 0$, \begin{align*}
        \Pr{Z - \Expected{Z} \ge \sqrt{16d/p \cdot x} + 2x} &= \Pr{Z \ge \Expected{Z} + 2\sqrt{2\Expected{Z} \cdot x} + 2x}\\ &\le \exp(-x),
    \end{align*} which shows the first statement. 

    For the second statement, we define $Y_i' \coloneqq -Y_i$ and note that $-Z = \sum_{i=1}^dY_i'$. Furthermore, we have that $\Expected{Y_i'^2} = \Expected{Y_i^2}$ and $\Expected{\abs{Y_i'}^k} = \Expected{Y_i^k}$ for all integers $k \ge 0$. We have that \begin{equation*}
        \sum_{i=1}^d \Expected{Y_i'^2} = \sum_{i=1}^d \Expected{Y_i^2} \le \frac{8d}{p} = v
    \end{equation*} and for all $k\ge 3$, we get from \Cref{eq:momentbound} that \begin{equation*}
        \sum_{i=1}^d \Expected{\abs{Y_i'}^k} = \sum_{i=1}^d \Expected{Y_i^k} \le \frac{k!}{2}vc^{k-2}.
    \end{equation*}
    Hence, it follows from \Cref{thm:bernstein} that \begin{align*}
        \Pr{-Z + \Expected{Z} \ge \sqrt{16d/p \cdot x} + 2x} &= \Pr{Z \le \Expected{Z} - 2\sqrt{2\Expected{Z}x} - 2x} \\
        &\le \exp(-x),
    \end{align*}
    which implies the second statement.
\end{proof}

We can slightly reformulate this bound such that it is more convenient to work with them. Observe the similarity of the following bounds with the Chernoff-Hoeffding bound from \Cref{thm:chernoff-hoeffding}.

{\renewcommand{\thetheorem}{\ref{cor:tailchisquared}}
\tailchisquared*
}
\renewcommand{\thetheorem}{25}
\begin{proof}
    We use \Cref{thm:p-bound} and set $x = \delta \Expected{Z}$. We then obtain \begin{align*}
        \Pr{Z \ge \Expected{Z} + \Expected{Z} \cdot 2\sqrt{2\delta} + \Expected{Z} \cdot 2\delta} &= \Pr{Z \ge \Expected{Z} (1 + 2 \sqrt{2\delta} + 2\delta)}\\
        &\le \exp(-\delta \Expected{Z}).
    \end{align*} 
    Recalling from \Cref{cor:expectationchip} that $\Expected{Z} = \frac{2d}{p}$ then implies that $\Pr{Z \ge (1 + \varepsilon)\Expected{Z} } \le \exp(-2d\delta / p)$ for $\varepsilon = 2(\sqrt{2\delta}  + \delta)$. The argumentation for the second statement is analogous.
\end{proof}

\subsubsection{Bounding the clustering coefficient}\label[section]{sec:clusteringLpbound}

We use the insights gained so far to prove a bound on the probability that two random neighbors of a vertex $s$ that have bounded weight are adjacent.

\begin{lemma}\label[lemma]{lem:cluseringLpbound}
    Let $G = G(n, d, \beta, w_0)$ be a GIRG sampled under $L_p$-norm. After sampling the weights, let $s, u, v$ be three vertices in $G_{\le n^{1/8}}$ with $w_s \le w_u, w_v$ and $w_u, w_v \le c^d w_s$ for some $c>1$. Let $\Delta$ be the event that $\{s, u, v\}$ form a triangle. 
    Then, there exists a choice for $c$ such that there are constants $a, b > 0, c > 1$ such that for sufficiently large $n$ and all $d \ge 1, d = o(\log(n))$, \begin{align*}
        \Pr{\Delta\mid s \sim u, v} \le a \cdot \exp(-b d).
    \end{align*} Here, the randomness originates from the assignment of coordiates in $\mathbb{T}^d$ to $s,u,v$.
\end{lemma}
\begin{proof}
    Recall that $B_p(r)$ is the ball of radius $r$ under $L_p$ norm. We assume that $n$ is large enough such that the ball of volume $\lambda w_s^2c^{2d}/n$ has a radius of $r \le 1/4$. Note that this is possible since $d = o(\log(n))$. With this we may simply measure the distance of two points $\F{x},\F{y} \in B_p(r)$ as $\| \F{x} - \F{y}\|_p$ and assume that $t_{uv}$ is precisely the radius of the ball of volume $\lambda w_uw_v/n$.
    
    Now, assuming $s \sim u,v$ and $w_u,w_v \le c^d w_s$, the vertices $u,v$ are uniformly distributed within the balls $B_p(t_{sv})$ and $B_p(t_{su})$ (centered at the position of $s$), respectively. Assuming the position of $s$ is the origin of our coordinate system, we denote by $\F{x}_u, \F{x}_v$ the (random) positions of $u,v$. Hence, the probability that $u$ and $v$ are connected is simply $\text{Pr}[\norm{\F{x}_u - \F{x}_v}_p \le t_{uv}]$. If we denote by $\vol(r)$ the volume of the ball $B_p(r)$, we further note that $\vol(r) = r^d \vol(1)$ (cf. \Cref{lem:volume}), and since we choose $t_{uv}$ such that $\vol(t_{uv}) =  \lambda w_uw_v/n$, we get \begin{equation}\label{eq:thresholdlp}
        t_{uv} = \left( \frac{\lambda w_uw_v}{\vol(1)n} \right)^{1/d}.
    \end{equation}
    
    In the following, we derive an upper bound for $\Pr{\norm{\F{x}_u - \F{x}_v}_p \le t_{uv}}$. We note that we can equivalently describe the random variables $\F{x}_u, \F{x}_v$ as $\F{x}_u = t_{us} \F{y}_u$ and $\F{x}_v = t_{vs} \F{y}_v$, where $\F{y}_u$ and $\F{y}_v$ are i.i.d. random vectors uniformly distributed according to the standard Lebesgue measure in $B_p(1)$. With this, we reformulate the probability $\Pr{ \norm{ \F{x}_u - \F{x}_v }_p \le t_{uv} }$ as \begin{align*}
        \Pr{\norm{\F{x}_u - \F{x}_v}_p \le t_{uv}} &= \Pr{\norm{t_{us} \F{y}_u - t_{vs} \F{y}_v}_p \le t_{uv}}\\
        &= \Pr{\norm{\F{y}_u - (t_{vs}/t_{us})  \F{y}_v}_p \le t_{uv}/t_{us}}\\
        &= \Pr{\|\F{y}_u - \left(w_v/w_u\right)^{1/d} \F{y}_v\|_p \le \left(w_v/w_s\right)^{1/d}}.
    \end{align*}
    To find an upper bound for this probability, we instead lower bound the probability of the event that 
    \begin{equation*} 
        \norm{\F{y}_u - \left(w_v/w_u\right)^{1/d} \F{y}_v}_p > \left(w_v/w_s\right)^{1/d}.
    \end{equation*} 
    Since $w_v,w_s \in [w_s, c^d w_s]$, we have $\left(w_v/w_s\right)^{1/d} \le c$ and hence, it suffices to lower bound \begin{equation*}
        \Pr{\norm{\F{y}_u - \left(w_v/w_u\right)^{1/d} \F{y}_v}_p > c}
    \end{equation*} or equivalently \begin{equation*}\label{eq:radialdist}
        \Pr{\left( \norm{\F{y}_u - \left(w_v/w_u\right)^{1/d} \F{y}_v}_p \right)^p > c^p}.
    \end{equation*}
    For this, we start by investigating the properties of the random vectors $\F{y}_u, \F{y}_v \sim B_p(1)$. Recall from \Cref{lem:lpsymmetry} that we may equivalently express the random vector $\F{y} \sim B_p(1)$ as $\F{y} = \|\F{y}\|_p \cdot \F{y}/\|\F{y}\|_p$ where $\|\F{y}\|_p$ and $\F{y}/\|\F{y}\|_p$ are independent. Accordingly, $\F{y}$ is identically distributed as the product of a random variable $r$ identically distributed as $\|\F{y}\|_p$, and a random vector $\F{z}$ identically distributed as $\F{y}/\|\F{y}\|_p$.

    We note that $r$ and $\|\F{y}\|_p$ are distributed such that for any $0\le \zeta \le 1$, we have \begin{align*}
        \Pr{\norm{\F{y}}_p \le \zeta} = \frac{\vol_p(\zeta)}{\vol_p(1)} = \zeta^d
    \end{align*} and thus, \begin{align*}
        \Pr{\norm{\F{y}}_p \ge \zeta} = 1 - \zeta^d.
    \end{align*} Furthermore, due to the $L_p$-symmetry of $\F{y_u}, \F{y_v}$ and \Cref{lem:lpsymmetry}, we assume that $\F{z} = \F{\tilde{z}} / \|\F{\tilde{z}}\|_p$ where $\F{\tilde{z}}$ is a random vector from the $\rchi_p(d)$-distibution. 
 
    In the following, we hence assume that $\F{y}_u = r_u \cdot \F{\tilde{z}}_u/||\F{\tilde{z}}_u||_p$, and $\F{y}_v = r_v \cdot \F{\tilde{z}}_v/||\F{\tilde{z}}_v||_p$, for suitable, independent random variables $r_u, r_v$ and independent random vectors $\F{\tilde{z}}_u, \F{\tilde{z}}_v \sim \rchi_p(d)$.

    With this observation, we find a lower bound for \begin{equation*}
        \Pr{\left(\norm{\F{y}_u - \left(w_v/w_u\right)^{1/d} \F{y}_v}_p \right)^p > c^p}. 
    \end{equation*}
    We first rewrite the term $\left(\norm{\F{y}_u - \left(w_v/w_u\right)^{1/d} \F{y}_v}_p\right)^p$ as
    \begin{align*}
        \left(\norm{\F{y}_u - \left(\frac{w_v}{w_u}\right)^{1/d} \F{y}_v}_p\right)^p = \sum_{i=1}^d \abs{\F{y}_{u}(i) - \left(\frac{w_v}{w_u}\right)^{1/d} \F{y}_{v}(i)}^p = S_1 + S_2,
    \end{align*} where $S_1$ is the sum of all components in which $y_{ui}$ and $y_{vi}$ have opposite sign, and $S_2$ is the sum of all remaining components. We show that there are constants $a, b > 0, c > 1$ such that $S_1 + S_2$ is greater than $c^p$ with probability at least $1 - a \cdot \exp(-bd)$. In this section, we refer to an event as happening \emph{with overwhelming probability}\footnote{Note that this is a stricter notion of what is commonly referred to as ``with overwhelming probability'' in literature.} if there are constants $a, b > 0$ such that the event happens with probability at least $1 - a \cdot \exp(-b d)$. Note that, if two events $\Event{1}$ and $\Event{2}$ happen with overwhelming probability, then also $\Event{1} \cap \Event{2}$ happens with overwhelming probability as, by a union bound, we have $\Pr{\overline{\Event{1} \cap \Event{2}}} \le a \cdot \exp(-bd) + a' \cdot \exp(-b'd)$ for some $a, a', b, b' > 0$ and thus $\Pr{\Event{1} \cap \Event{2}} \ge 1 - 2\max\{a, a'\}\exp(-\max\{b, b'\}d)$.
    
    We start with giving a lower bound for $S_1$. Let $I_1$ be the set of all component indices $i$ in which $\F{y}_{u}(i)$ and $\F{y}_{v}(i)$ have opposite sign. Note that this implies that 
    \begin{align*}
        \abs{\F{y}_{u}(i) - (w_v/w_u)^{1/d}\F{y}_{v}(i)} = \abs{\F{y}_{u}(i)} + (w_v/w_u)^{1/d}\abs{\F{y}_{v}(i)}.
    \end{align*} Furthermore, note that we may express $\F{y}_{u}(i) = r_u \cdot \F{\tilde{z}}_{u}(i)/\norm{\F{\tilde{z}}_u}_p$. Since $w_u \le w_s\cdot c^d$ and $w_v \ge w_s$, we further have $(w_v/w_u)^{1/d} \ge 1/c$ and can thus rewrite $S_1$ as \begin{align*}
        S_1 &= \sum_{i \in I_1} \left( r_u \abs{ \frac{\F{\tilde{z}}_{u}(i)}{\|\F{\tilde{z}}_u\|_p} } + \left(\frac{w_v}{w_u} \right)^{1/d} r_v \abs{ \frac{\F{\tilde{z}}_{v}(i)}{\|\F{\tilde{z}}_v\|_p} } \right)^p \\
        &\ge \sum_{i \in I_1} \left( \left(r_u \frac{|\F{\tilde{z}}_{u}(i)|}{\|\F{\tilde{z}}_u\|_p}\right)^p + \left( \frac{r_v}{c} \frac{|\F{\tilde{z}}_{v}(i)|}{\|\F{\tilde{z}}_v\|_p} \right)^p \right)\\
        &=\frac{r_u^p}{\|\F{\tilde{z}}_u\|_p^p} \sum_{i \in I_1} |\F{\tilde{z}}_{u}(i)|^p + \frac{r_v^p}{c^p\|\F{\tilde{z}}_v\|_p^p} \sum_{i \in I_1} |\F{\tilde{z}}_{v}(i)|^p,
    \end{align*}
    where, in the second step, we used the inequality $(a + b)^p \ge a^p + b^p$ for all $a, b > 0$ and $p \ge 1$. 
    Now, we can apply tail bounds on the random variables in the above expression. We start with observing that the probability that $\tilde{z}_{ui}$, $\tilde{z}_{vi}$ have a opposite sign is exactly $1/2$. Hence, the set $I_1$ is a subset of component indices where each component is independently chosen with probability $1/2$. A Chernoff-Hoeffding bound (\Cref{thm:chernoff-hoeffding}) therefore implies that for every $\varepsilon > 0$, with overwhelming probability, \begin{align*}
        \frac{1}{2}d(1-\varepsilon) \le |I_1| \le \frac{1}{2}d(1+\varepsilon).
    \end{align*} 
    We further note that the random variables $||\F{\tilde{z}}_u||_p^p$, $||\F{\tilde{z}}_v||_p^p$, and $\sum_{i \in I_1} |\F{\tilde{z}}_{u}(i)|^p, \sum_{i \in I_1} |\F{\tilde{z}}_{v}(i)|^p$ are i.i.d random variables from $\rchi^p(d)$ and $\rchi^p(|I_1|)$, respectively. Hence, \Cref{cor:tailchisquared} and \Cref{cor:expectationchip}, imply that for every $\varepsilon > 0$, with overwhelming probability, \begin{align*}
        (1-\varepsilon) \frac{2d}{p} \le ||\F{\tilde{z}}_{u}||_p^p, ||\F{\tilde{z}}_{v}||_p^p \le (1+\varepsilon)\frac{2d}{p} 
   \end{align*} and \begin{align*} 
       (1-\varepsilon) \frac{2|I_1|}{p} \le \sum_{i \in I_1} |\F{\tilde{z}}_{u}(i)|^p, \sum_{i \in I_1} |\F{\tilde{z}}_{v}(i)|^p \le (1 + \varepsilon)\frac{2|I_1|}{p}. 
   \end{align*}
    Moreover, we note that the probability $\Pr{r_u \ge \zeta} = 1 - \zeta^{d}$ for every $0 < \zeta < 1$, so we have $r_u, r_v \ge \zeta$ with overwhelming probability. In total, this implies that with overwhelming probability, \begin{align*}
        S_1 &\ge \frac{\zeta^p}{(1+\varepsilon)2d/p}\frac{1}{2}(1-\varepsilon) \frac{2d}{p}(1-\varepsilon) + \frac{\zeta^p}{c^p(1+\varepsilon)2d/p}\frac{1}{2}(1-\varepsilon) \frac{2d}{p}(1-\varepsilon)\\
        &= \frac{\zeta^p (1-\varepsilon)^2 }{2(1+\varepsilon)} \left(1 + \frac{1}{c^p} \right).
    \end{align*}
    We note that by choosing $\zeta$ sufficiently large, and $c$ and $\varepsilon$ sufficiently small, we can push this lower bound to every number smaller than $1$. That is, we have shown that that for every $\varepsilon' > 0$, there are constants $\zeta < 1, c > 1$ such that with overwhelming probability, $S_1 \ge 1 - \varepsilon'$.

    We go on with lower bounding $S_2$. Analogously to $I_1$, let $I_2$ be the set of all component indices $i$ in which $y_{ui}$ and $y_{vi}$ have the same sign. This implies that
    \begin{align*}
        \abs{\F{y}_{u}(u) - (w_v/w_u)^{1/d}\F{y}_{v}(i)} = \abs{\abs{\F{y}_{u}(i)} - (w_v/w_u)^{1/d}\abs{\F{y}_{v}(i)} }.   
    \end{align*}
    We can hence reformulate $S_2$ as \begin{align*}
        S_2  &= \sum_{i \in I_2} \abs{ r_u\abs{\frac{\F{\tilde{z}}_{u}(i)}{\|\F{\tilde{z}}_u\|} } - \left( \frac{w_v}{w_u} \right)^{1/d} r_v\abs{\frac{\F{\tilde{z}}_{v}(i)}{\|\F{\tilde{z}}_v\|} } }^p\\
        &= \frac{r_u^p}{\norm{\F{\tilde{z}}_u}_p^p} \sum_{i \in I_2} \abs{ \abs{\F{\tilde{z}}_{u}(i)} - \left( \frac{w_v}{w_u} \right)^{1/d} \frac{r_v}{r_u} \frac{\norm{\F{\tilde{z}}_u}_p}{\norm{\F{\tilde{z}}_v}_p} \abs{\F{\tilde{z}}_{v}(i)} }^p.
    \end{align*}

    We first note that, since $|I_2| = d - |I_1|$ and with overwhelming probability $|I_1| = \Theta_d(d)$, we have $|I_2| = \Theta_d(d)$ with overwhelming probability. Furthermore, we have with overwhelming probability that $r_u, r_v \ge \zeta$ and that both $\norm{\F{\tilde{z}}_u}_p^p$ and $\norm{\F{\tilde{z}}_v}_p^p$ are between $(1-\varepsilon)2d/p$ and $(1+\varepsilon)2d/p$ just like in the above paragraph. Together with $(w_v/w_u)^{1/d} \le c$, this implies that with overwhelming probability, \begin{equation}\label{eq:smallertwo}
        \left( \frac{w_v}{w_u} \right)^{1/d} \frac{r_v}{r_u} \frac{\norm{\F{\tilde{z}}_u}_p}{\norm{\F{\tilde{z}}_v}_p} \le \frac{c}{\zeta} \left(\frac{1+\varepsilon}{1-\varepsilon}\right)^{\frac{1}{p}}.
    \end{equation}
    This bound can be made smaller than $2$ by choosing $c, \varepsilon$ small enough and $\zeta$ large enough. Furthermore, we get that for every $1 \le i \le d$ and any constant $\lambda > 0$, there is a constant probability of the event $\Event{\lambda}$ that $\abs{\F{\tilde{z}}_{u}(i)}$ is large enough and $\abs{\F{\tilde{z}}_{v}(i)}$ is small enough such that \begin{equation*}
        \abs{ \abs{\F{\tilde{z}}_{u}(i)} - 2\abs{\F{\tilde{z}}_{v}(i)}}^p \ge \lambda
    \end{equation*} because $\abs{\F{\tilde{z}}_{u}(i)}$ and $\abs{\F{\tilde{z}}_{v}(i)}$ are two independent samples from $\rchi_p(1)$\footnote{This is the crucial step of this proof in which our coupling between $\F{y}$ and $\F{\tilde{z}}$ turns out to be useful. This is because the components of $\F{\tilde{z}}$ are indeed independent whereas the components of $\F{y}$ are not.}. Hence, the sum \begin{equation}
        \sum_{i\in I_2} \abs{ \abs{\F{\tilde{z}}_{u}(i)} - 2\abs{\F{\tilde{z}}_{v}(i)}}^p
    \end{equation} is with overwhelming probability lower bounded by the sum of $|I_2| = \Theta_d(d)$ independent Bernoulli random variables with constant success probability. Therefore, a Chernoff-Hoeffding bound (\Cref{thm:chernoff-hoeffding}) implies that with overwhelming probability,
    \begin{equation}\label{eq:omegad}
        \sum_{i\in I_2} \abs{ \abs{\F{\tilde{z}_{u}}(i)} - 2\abs{\F{\tilde{z}_{v}}(i)}}^p = \Omega_d(d).
    \end{equation}
    As the bound from \Cref{eq:smallertwo} is with overwhelming probability smaller than $2$ for appropriate choices of $c, \varepsilon, \zeta$, we get that with overwhelming probability that \begin{equation*}
        \sum_{i \in I_2} \abs{ \abs{\F{\tilde{z}}_{u}(i)} - \left( \frac{w_v}{w_u} \right)^{1/d} \frac{r_v}{r_u} \frac{\norm{\F{\tilde{z}}_u}_p}{\norm{\F{\tilde{z}}_v}_p} \abs{\F{\tilde{z}}_{v}(i)} }^p = \Omega(d).
    \end{equation*} As we further get that $r_u^p/\| \F{\tilde{z}}_u \|_p^p  = \mathcal{O}_d(1/d)$, with overwhelming probability, we have in total that $S_2 = \Omega_d(1)$ with overwhelming probability where the leading constant does not depend on $c, \zeta, \varepsilon$.

    In total, we get that for every $\varepsilon' > 0$, with overwhelming probability, $S_1 + S_2 \ge 1 - \varepsilon' + \Omega_d(1)$ if we choose $c$ and $\varepsilon$ sufficiently small and $\zeta$ sufficiently large (i.e. sufficiently close to $1$). Hence, if we choose $\varepsilon'$ small enough, there is a $c>1$ such that with overwhelming probability, $S_1 + S_2 \ge c^p$. This implies our statement.
\end{proof} 
This lemma directly implies our main result.

\clusteringlp*
\begin{proof}
Similarly as in the proof of \Cref{thm:clusteringLinfty}, we use \Cref{lem:clusteringbound} and \Cref{lem:triangleclustering} to conclude that a.a.s.,\begin{align*}
    \textsc{CC}(G) = \Expected{\text{CC}(G_{\le n^{1/8}})} + o(1) \le \Pr{\Delta \mid \hat{v} \sim \hat{s}, \hat{t}} + o(1)
\end{align*} where $\hat{v},\hat{s},\hat{t}$ are three random vertices from $G_{\le n^{1/8}}$ conditioned on $w_{\hat{v}} \le w_{\hat{s}}, w_{\hat{t}}$. To bound, the above probability, fix any constant $c > 1$ and note that \begin{align*}
    \Pr{\Delta \mid \hat{v} \sim \hat{s}, \hat{t}} &\le \Pr{\Delta \mid (\hat{v} \sim \hat{s}, \hat{t}) \cap (\max\{w_{\hat{s}}, w_{\hat{t}}\} \le c^d w_{\hat{v}})} \\
    &\hspace{4cm} + \Pr{\max\{w_{\hat{s}}, w_{\hat{t}}\} \ge c^d w_{\hat{v}} \mid \hat{v} \sim \hat{s}, \hat{t} }.
\end{align*} Now we let $c$ be as in \Cref{lem:cluseringLpbound} and replace the first term by the bound derived in said lemma (\Cref{lem:cluseringLpbound}). The second term can be bounded by $(2 + o(1))c^{d(2-\beta)}$ which follows from the distribution of $\max\{w_{\hat{s}}, w_{\hat{t}}\}$ as derived in the proof of \Cref{thm:clusteringLinfty}. In total, this means that there are constants $a, b > 0, c > 1$ (from \Cref{lem:cluseringLpbound}) such that \begin{align*}
    \Pr{\Delta \mid \hat{v} \sim \hat{s}, \hat{t}} \le  a \exp(-b d) + (2 + o(1))c^{d(2-\beta)} = \exp(-\Omega_d(d))
\end{align*}
as desired. Note that the last step above holds since for sufficiently large $d$ there is a constant $\delta$ such that the above term is upper bounded by $\exp(-\delta d)$, which concludes the proof.
\end{proof}

\section{Testing for the dimensionality}

\label{sec:dimensiontest}

We turn to the question of how one can recover the dimension $d$ of $G(n, d, \beta, w_0)$ generated under $L_\infty$-norm. We show that this is accomplished by a simple test statistic that computes a modified version of the clustering coefficient in a subgraph of $G$ consisting of all vertices with weight in $[w_c, c  w_c]$ for some constants $w_c \ge w_0, 1 < c < 2/\sqrt{3}$. We show that the value of this test statistic is well concentrated around its expectation such that it allows us to test whether $G$ came from dimension $d$ as long as $d = o(\log(n))$. As a side result, this shows that all graphs in the low-dimensional regime form a clear dichotomy.

Formally, we consider the following statistical testing problem. We are given a graph $G$ on $n$ vertices, its weight sequence, and an integer $d = o(\log(n))$. Under the null hypothesis, $G$ is a GIRG generated in the weight sampling model with dimension $d$, whereas under the alternative hypothesis, $G$ was generated in dimension $d_1 \neq d$ or it is a Chung--Lu graph. Here, we allow $d_1$ to be any integer (potentially larger than $\log(n)$). 
As a proof of concept, we propose the following testing procedure for this problem. Fix a constant $1 < c < 2/\sqrt{3}$ and a weight $w_c \ge w_0$. Now, consider the induced subgraph $\Gtest$ of $G$ consisting of all vertices with weight in $[w_c, c w_c]$. For every vertex $v \in \Gtest$ that has at least two neighbors in $\Gtest$, we compute its local clustering coefficient $CC_{\Gtest}(v)$ and denote by $\textsc{CC}^{(+)}(\Gtest)$ the mean over all these values. We accept the null hypothesis if and only if condition (\ref{eq:testcondition}) is met. We show that the probability that this test makes a mistake under both the null and alternative hypothesis goes to zero as $n \rightarrow \infty$ and we capture this in the following statment.

\dimensionalitytesting*
\begin{proof}
    We start by estimating the expectation of $\textsc{CC}^{(+)}(\Gtest)$. It is not hard to see that by linearity of expectation \begin{align*}
        \Expected{|S| \hspace*{.1cm} \textsc{CC}^{(+)}(\Gtest)} &= \sum_{v \in G} \Expected{ \mathds{1}(v \in S) \hspace{.1cm} \textsc{CC}_{\Gtest}(v) }\\
        &= \sum_{v \in G} \Pr{v \in S} \Expected{ \textsc{CC}_{\Gtest}(v) \mid v \in S } \\
        &= \Expected{|S|}\Pr{\Delta \mid v\sim s,t}
    \end{align*} where $v,s,t$ are three random vertices in $\Gtest$, and $\Delta$ is the event that $v,s,t$ are a triangle. Notice that we used that $\Expected{ \textsc{CC}_{\Gtest}(v) \mid v \in S } = \Pr{\Delta \mid v\sim s,t}$ here as established (in the proof of) \Cref{lem:triangleclustering}; equality holds here because conditioning on $v \in S$ is the same as conditioning on $|\Gamma(v)| \ge 2$. 
    
    Our proof now proceeds in two steps: (1) we show that $\Pr{\Delta \mid v\sim s,t}$ is in the interval $(c^{-1} (3/4)^d, c(3/4)^d )$, and (2) we show that $\textsc{CC}^{(+)}(\Gtest)$ concentrates around its expectation using the method of typical bounded differences.

    For part (1), we apply \Cref{thm:sharpcliquebounds} and note that -- since $v,s,t$ are in $\Gtest$ -- the weights of $v,s,t$ differ by at most a factor of $c$. However, \Cref{thm:sharpcliquebounds} only yields a bound on the probability of $\Delta$ if $v$ is the vertex of minimal weight among $v,s,t$. However, if we define $\hat{v}, \hat{s}, \hat{t}$ to be the vertices $v,s,t$ reordered such that $\hat{v}$ is of minimal weight, we can express $\Pr{\Delta \mid v\sim s,t}$ as
    \[
    \frac{\Pr{\Delta}}{\Pr{v \sim s,t}} = \frac{\Pr{\hat{v} \sim \hat{s},\hat{t}}}{\Pr{v \sim s,t}} \Pr{\Delta \mid \hat{v} \sim \hat{s},\hat{t}} = \frac{w_{\hat{v}}}{w_v} \cdot \Pr{\Delta \mid \hat{v} \sim \hat{s},\hat{t}}
    \]
    where the last equality holds because $\Pr{v \sim s,t} = \lambda^2 w_v^2w_sw_t / n^2$ (we can ignore the minimum in \cref{eq:conprob} here because the weights are constant). Since the fraction in the above equation is at least $1/c$ and at most $1$, and since we can bound $\Pr{\Delta \mid \hat{v} \sim \hat{s},\hat{t}}$ using \Cref{thm:sharpcliquebounds}, we conclude that \begin{align*}
        \Pr{\Delta \mid v\sim s,t} \in \left( \frac{1}{c} \left( \frac{3}{4} \right)^d, c \left( \frac{3}{4} \right)^d\right)
    \end{align*}

    For the second part of the proof, we first show that $|S|$ is linear in $n$ with high probability. Afterwards, we apply the same procedure to $|S| \hspace{.1cm} \textsc{CC}^{(+)}(\Gtest)$.
    We start by showing that there is a constant $\alpha > 0$ such that $|S|$ is at least $\alpha n$ with probability $1 - n^{-\omega(1)}$. Consider a fixed vertex $v$ from $G$ and denote the number of its neighbors in $\Gtest$ by $X_v$. We note that every vertex in $G$ has a constant probability of being in $\Gtest$ and a probability of at least $\lambda w_0^2 / n$ to connect to $v$. $X_v$ is therefore lower bounded by the sum of $n$ independent Bernoulli random variables with success probability in $\Theta(1/n)$. Denote this sum by $\tilde{X}_v$ and note that $\mathbb{E}[\tilde{X}_v] = \Theta(1)$. By \cite[Proposition~1]{Cam_1960}, the $\tilde{X}_v$ thus converges to a Poisson distributed random variable with constant expectation. Accordingly, $\text{Pr}[\tilde{X}_v \ge 2]$ is constant as well. This shows that every vertex in $G$ has at least a constant probability of having two neighbors in $\Gtest$. As the probability that $v$ is in $\Gtest$ is constant as well, this implies that $\Expected{|S|} = \Omega(n)$. We continue with showing concentration of this random variable using \Cref{thm:bounded-diff}. We note that the random variables $x_1, x_2, \ldots, x_n$ (the positions of all vertices), and $w_1, w_2, \ldots, w_n$ (the weights of all vertices) are independent and define a product probability space $\Omega$ such that each $\omega \in \Omega$ defines a graph $G(\omega)$ and a corresponding test graph $\Gtest(\omega)$. We further define $f(\omega)$ as the value of $|S|$ in $G(\omega)$ and consider the ``bad'' event \begin{align*}
        \mathcal{B} = \{ \omega \in \Omega \mid  \text{the max degree in $\Gtest(\omega)$ is greater than $\log^3(n)$} \}.
    \end{align*}
    By \Cref{lem:strongdegreebound}, $\mathcal{B}$ happens with probability $n^{-\omega(1)}$ since all vertices in $\Gtest$ have at most constant weight. Now, let $\omega, \omega' \in \overline{\mathcal{B}}$ be such that they differ in at most two coordinates. Changing the weight or coordinate of one vertex can only decrease the number of vertices in $\Gtest$ with at least two neighbors by at most $2\log^3(n)$ as the weight or coordinate change only influences vertices that are neighbours of the changed vertex before or after the change. Accordingly, two coordinate or weight changes can only change $|S|$ by at most $c' \coloneq 4 \log^3(n)$. Using $t = n^{3/4}$ further satisfies the condition $t \ge 2M\Pr{\mathcal{B}}$ as $M \le n$ and $\Pr{\mathcal{B}} = n^{-\omega(1)}$. As $m = 2n$, we get,
    \begin{align*}
        \Pr{|S| - \Expected{|S|}| \ge n^{3/4}}\le & \hspace{.1cm} 2 \exp \left( - \frac{n^{1/2}}{32 \cdot 2 \cdot 16 \log^6(n) } \right) + \left( \frac{n^2}{\log^3(n)} + 1 \right)n^{-\omega(1)}\\
         =& \hspace{.1cm}  n^{-\omega(1)}.
    \end{align*}
    Similarly, we can show concentration of $f(S)=|S| \cdot \textsc{CC}^{(+)}(\Gtest) = \sum_{v \in S} \textsc{CC}_{\Gtest}(v)$. Again, changing the coordinate or weight of any two vertices can only increase or decrease the local clustering coefficient of at most $4\log^3(n)$ vertices by a value of at most one. Hence, we can again choose $c' \coloneq 4\log^3(n)$ and $t = n^{3/4}$ to obtain that
    \[
        \Pr{\left|f(S) - \Expected{f(S)}\right| \ge n^{3/4}} \le n^{-\omega(1)}.
    \]
    Combining these two concentration results, we get that \begin{align*}
        f(S) &= \Expected{|S|}\Pr{\Delta \mid v\sim s,t} \pm n^{3/4} \text{ and }\\
         |S| &= \Expected{|S|} \pm n^{3/4}
    \end{align*} both hold with probability $1 - o(1/n)$. Dividing by $|S|$ and using $\Expected{|S|} = \Theta(n)$ then yield that \begin{align*}
        \frac{f(S)}{|S|} = \textsc{CC}^{(+)}(\Gtest) &=\frac{\Expected{|S|}}{ \Expected{|S|} \mp n^{3/4} }\Pr{\Delta \mid v\sim s,t} \pm \frac{n^{3/4}}{{ \Expected{|S|} \mp n^{3/4} }}\\
        &= \Pr{\Delta \mid v\sim s,t} \pm n^{-1/5}.
    \end{align*} Using our estimate for $\Pr{\Delta \mid v\sim s,t}$ from part (1) of this proof concludes the argument.
\end{proof}

Using this, we immediately get that the probability that our test makes a mistake assuming that the null hypothesis is true is only $n^{-\omega(1)}$. Under the alternative hypothesis, assume that $d_1$ is the ground truth dimension $G$ came from, and assume further without loss of generality that $d_1 \ge d + 1$. We have to show that asymptotically, \begin{align*}
    \frac{1}{c} \left(  \frac{3}{4} \right)^d - n^{-1/5} &> c \left( \frac{3}{4} \right)^{d+1} + n^{-1/5}\\
    \Leftrightarrow 1 &> \frac{3}{4} c^2 + 2 \left( \frac{4}{3} \right)^d n^{-1/5}.
\end{align*} As $c < 2/\sqrt{3}$ and $d = o(\log(n))$, this inequality is true for sufficiently large $n$. To see this, observe that \begin{align*}
    \left( \frac{4}{3} \right)^d n^{-1/5} = \exp\left(\ln\left(\frac{4}{3} \right)d - \frac{1}{5}\ln(n)\right) = o(1).
\end{align*}

\acks{Andreas Göbel was funded by the project PAGES (project No. 467516565) of the German Research Foundation (DFG). We thank Marcos Kiwi for fruitful discussions related to this work.}

\bibliography{bibliography}

\appendix

\section{Experimental Data}

\begin{table}[h]
    \caption{The networks used in the plots in the first row of \Cref{fig:experiments} and their basic attributes. All found in the network repository \cite{nr} and the SNAP dataset \cite{snapnets}. \label{tab:graphs}}
     \begin{center}
       \begin{tabular}{ m{10em} m{5em} m{5em} m{10em} }
         \toprule
         Name             & $|V|$         & $|E|$         &  Category      \\
         \midrule
         ca-AstroPh       & \SI{18.7}{k}  & \SI{198.1}{k} &  collaboration \\
         ca-CondMat       & \SI{23.1}{k}  & \SI{93.4}{k}  &  collaboration \\
         ca-GrQc          & \SI{5.2}{k}   & \SI{14.5}{k}  &  collaboration \\
         ca-HepPh         & \SI{12}{k}    & \SI{118.5}{k} &  collaboration \\
         ca-MathSciNet    & \SI{332.7}{k} & \SI{820.6}{k} &  collaboration \\
         cit-patent       & \SI{3.7}{M}   & \SI{16.5}{M}  &  citation      \\
         cit-HepTh        & \SI{27.7}{k}  & \SI{352.8}{k} &  citation      \\
         cit-DBLP         & \SI{12.6}{k}  & \SI{49.7}{k}  &  citation      \\
         cit-HepPh        & \SI{34.5}{k}  & \SI{421.6}{k} &  citation      \\
         fb-pages-artists & \SI{50.5}{k}  & \SI{819.1}{k} &  social        \\
         soc-academia     & \SI{200.2}{k} & \SI{1.4}{M}   &  social        \\
         soc-youtube-snap & \SI{1.1}{M}   & \SI{3}{M}     &  social        \\
         socfb-A-anon     & \SI{3.1}{M}   & \SI{23.7}{M}  &  social        \\
         email-Enron      & \SI{36.7}{k}  & \SI{183.8}{k} &  social        \\
         bio-CE-CX        & \SI{15.2}{k}  & \SI{246}{k}   &  biological    \\
         bio-human-gene1  & \SI{21.9}{k}  & \SI{12.3}{M}  &  biological    \\
         bio-mouse-gene   & \SI{43.1}{k}  & \SI{14.5}{M}  &  biological    \\
         bio-WormNet-v3   & \SI{16.3}{k}  & \SI{762.8}{k} &  biological    \\
         bio-grid-human   & \SI{9.4}{k}   & \SI{62.4}{k}  &  biological    \\
         \bottomrule
       \end{tabular}
    \end{center}
    \end{table}

\end{document}